\newtheorem{theorem}{Theorem}
\newtheorem{proposition}{Proposition}
\title{Measurement-Induced Temporal Geometry}
\author{James C. Hateley}
\date{}
\begin{document}
\maketitle
\begin{abstract}
We propose a unified theoretical framework, Measurement-Induced Temporal Geometry (MTG), in which time, causality, and spacetime geometry emerge from quantum measurement acting on a fiber-valued internal time field. Each spacetime point supports a local degree of freedom $\tau$, modeled as a smooth section of a fiber bundle $\pi: E \to M$, with projection events $\mu[\tau]$ generating classical temporal flow. Quantum coherence and entanglement are encoded in the curvature $F = \nabla^2$ of a connection on the time-fiber, while the effective spacetime metric $g_{\mu\nu}^{\mathrm{eff}}$ arises as an integral over measurement histories. We derive the dynamical equations governing $\tau$, its supersymmetric completions, and the entanglement connection $A_\mu$, showing how quantization proceeds via both canonical and path-integral methods. Standard Model fields couple covariantly to the fiber geometry, and gravitational dynamics emerge from variational principles over projection-induced entropy. Cosmological inflation, dark energy, and large-scale structure are reinterpreted as consequences of modular coherence, topological obstruction, and fluctuations in the projection density $\rho(x)$. Within the AdS/CFT correspondence, MTG reinterprets modular Hamiltonians as boundary projections of bulk time flow and identifies entanglement wedges with surfaces minimizing measurement-induced projection current. A UV-complete embedding arises through string theory, where $\tau$ descends from compactified moduli and projection corresponds to brane interaction and spontaneous supersymmetry breaking. The framework yields a set of falsifiable predictions, including CMB anisotropies, black hole ringdown echoes, and modular deviations in lab-scale quantum systems, offering a consistent and testable account of spacetime as an emergent property of quantum measurement.
\end{abstract}

%\tableofcontents

\section{Introduction}\label{sec:1}
The nature of time remains one of the deepest unresolved questions in theoretical physics. General relativity models time as a geometric coordinate; inseparably woven into the fabric of spacetime, while quantum mechanics treats it as a fixed external parameter~\cite{isham1993canonical}. This asymmetry becomes untenable in regimes where both gravity and quantum effects are essential: near black hole horizons, in the early universe, or in measurements on entangled quantum systems~\cite{milekhin2024measurement}.

Measurement-Induced Temporal Geometry (MTG) resolves this by proposing that time is not a fundamental background parameter but an emergent, relational phenomenon arising from quantum measurement~\cite{debianchi2024achronotopic, raasakka2025triangulated}. In this framework, each measurement projects an internal, fiber-valued time field onto a classical outcome, inducing localized temporal flow and causal structure. Unlike prior approaches that treat time as a fixed background or seek to quantize classical geometry, MTG defines time as a dynamically projected observable, with collapse events generating an effective metric, modular flow, and entropy production. The framework develops a new geometric formalism linking curvature of the time-fiber bundle to coherence, decoherence, and gravitational dynamics, yielding a modular Hamiltonian constructed from informational flux. MTG further predicts observable signatures; including black hole echoes, CMB anomalies, and deviations from unitary dynamics in quantum simulations, while embedding consistently into supersymmetric field theory, AdS/CFT holography, and string-theoretic brane configurations. Together, these elements offer a unified and experimentally accessible reformulation of spacetime as an emergent phenomenon grounded in quantum measurement.

To formalize this picture, MTG models time as a field \( \tau(x) \) taking values in a fiber bundle \( \pi: E \to M \), where \( M \) is a \( d \)-dimensional Lorentzian manifold representing spacetime, and each fiber \( T_x \) encodes observer-relative internal time degrees of freedom~\cite{kiefer2007time}. A connection \( \nabla \) on this bundle defines parallel transport and curvature, with the latter serving as a geometric measure of quantum entanglement between events~\cite{zhang2018entanglement}.

Causal structure and spacetime geometry then emerge from sequences of these measurement-induced projections. A scalar field \( \rho(x) \) describes the local density of such events, and the accumulation of projections across causal domains yields an effective metric \( g_{\mu\nu}^{\text{eff}} \) experienced by observers~\cite{chishtie2025spatialenergy}. Concepts like entanglement, coherence, and modular flow—typically associated with quantum information—appear here as geometric entities governed by the curvature of the time-fiber bundle~\cite{giacomini2019quantum, jemal2016modular, qi2018quantum}.

The MTG framework accommodates supersymmetric extensions, coupling to the Standard Model via fiber-covariant dynamics, and a holographic embedding through AdS/CFT, in which modular Hamiltonians generate emergent bulk time~\cite{jemal2016modular, maldacena1999ads}. Quantization proceeds through canonical and path-integral approaches, with the latter encoding measurement as localized projection constraints~\cite{tomaz2025rqd}. Gravity emerges as an entropic response to accumulated projection history, and cosmological expansion is reinterpreted as the unfolding of informational structure~\cite{chishtie2025spatialenergy, jacobson1995thermodynamics}.

This paper develops MTG systematically: beginning with its geometric foundations, deriving dynamical and quantized structures, embedding known field theories, and exploring consequences for cosmology, holography, and UV completion in string theory.

\section{Foundations}\label{sec:2}
The Measurement-Induced Temporal Geometry (MTG) framework postulates that time is not a global, external parameter, but a local geometric field valued in an internal fiber. Let \( (M, \eta_{\mu\nu}) \) be a smooth, oriented Lorentzian manifold of dimension \( d \), interpreted as the spacetime of classical events. Over \( M \), we define a smooth fiber bundle
\begin{equation}
\pi : E \to M,
\end{equation}
where each fiber \( T_x = \pi^{-1}(x) \) represents internal temporal degrees of freedom at spacetime point \( x \in M \). These fibers may carry algebraic and topological structure; such as quaternionic algebras \( \mathbb{H} \), complex tori \( \mathbb{C}/\Lambda \), or higher-dimensional noncommutative spaces equipped with modular or spin structures~\cite{bakas2024modulartime}.

A smooth section \( \tau \in \Gamma(E) \) assigns to each \( x \in M \) a value \( \tau(x) \in T_x \), representing the internal time state of that point. Unlike external time parameters, the dynamics of \( \tau \) are determined intrinsically by the fiber geometry and its interaction with quantum measurement. The projection of \( \tau \) during a measurement induces a classical temporal direction and defines the emergence of effective causal structure. This section is the core dynamical field from which classical time is constructed.
%%%%%%%%%%%%%%%%%%%%
\subsection{Connection and Curvature on the Time-Fiber Bundle}

To describe how internal temporal structure varies across spacetime, MTG equips the bundle \( \pi: E \to M \) with a connection, which enables parallel transport and encodes temporal entanglement via its curvature.

Let \( G_{\text{time}} \) be the structure group of the bundle \( E \), capturing the symmetries of internal time fibers. We assume \( E \) is either a principal \( G_{\text{time}} \)-bundle or an associated vector bundle. A connection is then defined by a Lie algebra-valued one-form
\begin{equation}
A \in \Omega^1(M, \mathfrak{g}_{\text{time}}),
\end{equation}
where \( \mathfrak{g}_{\text{time}} = \text{Lie}(G_{\text{time}}) \). The connection defines a covariant derivative acting on sections of \( E \) as
\begin{equation}
D^\mu \tau := g^{\mu\nu} D_\nu \tau = g^{\mu\nu}(\partial_\nu \tau + A_\nu \cdot \tau),
\end{equation}
where \( A_\mu \) are local components of the connection, and the dot denotes the action of the Lie algebra on the fiber.

The curvature of this connection, measuring the failure of parallel transport to commute, is given by the Cartan structure equation:
\begin{equation}
F = dA + A \wedge A \in \Omega^2(M, \mathfrak{g}_{\text{time}}).
\end{equation}

In MTG, the curvature \( F \) serves both as a geometric field strength and as an order parameter for entanglement. If \( F = 0 \), the fibers are globally coherent and synchronizable; nonzero curvature obstructs this alignment and signals quantum entanglement between temporal states at different spacetime points.

Theorems below formalize these statements. Flatness implies global synchronizability (Theorem~\ref{thm:flatness_synchronization}), while curvature obstructs coherence (Theorem~\ref{thm:curvature_obstruction}). Physically, \( F \) quantifies the failure of internal time to reduce to a set of globally separable clocks and enters directly into the MTG action, influencing how coherence and entanglement shape emergent causal structure.

%%%%%%%%%%%%%%%%%%%%
\subsection{Dynamics of the Temporal Section}\label{sec:2.2}

The internal time field \( \tau \in \Gamma(E) \) evolves according to covariant dynamics defined by the geometry of the time-fiber bundle and its coupling to processes such as coherence and measurement. As a section of a vector or associated bundle over spacetime, \( \tau \) obeys field equations derived from a Lagrangian that encodes both kinetic propagation and potential terms that localize or structure its dynamics~\cite{kiefer2007time}.

The covariant derivative \( D_\mu \tau \), defined in Equation~(3), governs how internal time propagates through spacetime. The gauge-covariant kinetic term describing this propagation is given by
\begin{equation}
\mathcal{L}_{\tau,\text{kin}} = \frac{1}{2} \langle D_\mu \tau, D^\mu \tau \rangle,
\end{equation}
where the inner product \( \langle \cdot, \cdot \rangle \) is defined fiberwise and is invariant under the internal symmetry group \( G_{\text{time}} \).

In addition to propagation, the dynamics of \( \tau \) are shaped by potential terms that encode preferred vacua, periodic structures, or localization mechanisms inherent to the fiber geometry. These are captured by a scalar function \( V(\tau) \), assumed \( G_{\text{time}} \)-invariant, contributing
\begin{equation}
\mathcal{L}_{\tau,\text{pot}} = -V(\tau),
\end{equation}
where \( V : E \to \mathbb{R} \) is defined over the fibers of \( E \). The form of \( V(\tau) \) may enforce periodicity (e.g., for toroidal fibers), support symmetry-breaking minima, or penalize decoherent configurations~\cite{coleman1977fate, conlon1988physics, zurek2003decoherence}.

The full Lagrangian for \( \tau \) is thus
\begin{equation}
\mathcal{L}_\tau = \frac{1}{2} \langle D_\mu \tau, D^\mu \tau \rangle - V(\tau),
\end{equation}
yielding dynamics that are gauge-covariant and compatible with the MTG interpretation of time as an emergent structure. We reiterate, the inner product $\langle \cdot, \cdot \rangle$ is defined fiberwise on the bundle $E$, and we assume that it is Hermitian and compatible with the connection $D_\mu$. The covariant derivative $D^\mu \tau$ is shorthand for $g^{\mu\nu} D_\nu \tau$, with the background spacetime metric $g^{\mu\nu}$ (or $\eta^{\mu\nu}$ in flat regions) used to raise indices. The expression $\langle D_\mu \tau, D^\mu \tau \rangle$ thus contracts both the spacetime indices using $g^{\mu\nu}$ and the fiber indices via the Hermitian product.In the limit where curvature \( F = 0 \), the dynamics reduce to those of a conventional scalar field on classical spacetime. When \( F \neq 0 \), the evolution of \( \tau \) reflects entanglement and contextuality, with nonlocal behavior governed by the gauge structure and its coupling to measurement.
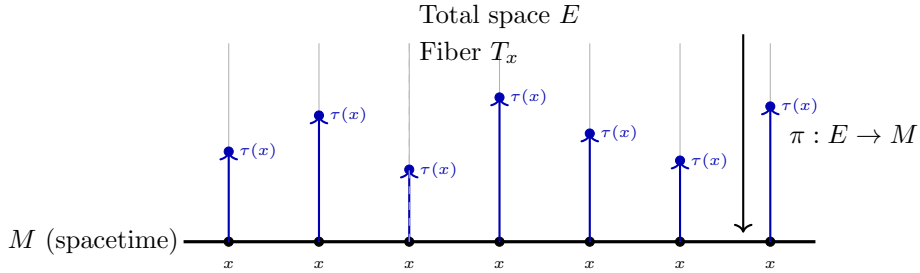
\begin{figure}[ht]
\centering
\begin{tikzpicture}[scale=1.2]

% Base manifold M
\draw[very thick] (-3.5,0) -- (3.5,0);
\node at (-4.5,0) {\small $M$ (spacetime)};
\foreach \x in {-3,-2,-1,0,1,2,3} {
    \draw[fill=black] (\x,0) circle (0.05);
    \node[below] at (\x,-0.1) {\tiny $x$};
}

% Fibers T_x and projection
\foreach \x/\tval in {
    -3/1.0, -2/1.4, -1/0.8, 0/1.6, 1/1.2, 2/0.9, 3/1.5
} {
    \draw[gray!60] (\x,0) -- (\x,2.2);
    \draw[->, thick, blue!70!black] (\x,0) -- (\x,\tval)
        node[pos=1.0, right] {\tiny $\tau(x)$};
    \filldraw[blue!70!black] (\x,\tval) circle (0.05);
}

% Fiber label
\draw[dashed, gray!50] (-1,0) -- (-1,2.2);
\node[right] at (-1,2.1) {\small Fiber $T_x$};

% Projection from total space
\node at (0,2.5) {\small Total space $E$};
\draw[->, thick] (2.7,2.3) -- (2.7,0.1);
\node[right] at (3.1,1.2) {\small $\pi: E \to M$};

\end{tikzpicture}
\caption{Fiber bundle structure for internal time. Each spacetime point \( x \in M \) has a corresponding fiber \( T_x \), and the internal time field \( \tau(x) \) defines a section assigning values in each fiber.}
\label{fig:fiber_bundle}
\end{figure}

%%%%%%%%%%%%%%%%%%%%%%%%%
\subsection{Variational Principle and Euler--Lagrange Equation}

The evolution of the internal time field \( \tau \) follows from a variational principle applied to the action constructed from the Lagrangian defined in Section~\ref{sec:2.2}. Let the action functional for \( \tau \) be
\begin{equation}
S_\tau = \int_M \mathcal{L}_\tau \, d^dx,
\end{equation}
with Lagrangian density
\begin{equation}
\mathcal{L}_\tau = \frac{1}{2} \langle D_\mu \tau, D^\mu \tau \rangle - V(\tau). \label{eq:cov-lag}
\end{equation}
We define the raised-index covariant derivative by
\begin{equation}
D^\mu \tau := g^{\mu\nu} D_\nu \tau,
\end{equation}
where \( g^{\mu\nu} \) is the inverse of the background spacetime metric. The composition
\begin{equation}
D_\mu D^\mu \tau := D_\mu(g^{\mu\nu} D_\nu \tau)
\end{equation}
includes both curvature and metric-compatibility terms, and may yield additional connection contributions if \( g^{\mu\nu} \) is not covariantly constant under \( D_\mu \). The fiber inner product \( \langle D_\mu \tau, D^\mu \tau \rangle \) contracts spacetime indices via \( g^{\mu\nu} \) and fiber indices via the Hermitian product.
To derive the field equations, we vary \( S_\tau \) with respect to the conjugate field \( \bar{\tau} \), treating \( \tau \) and \( \bar{\tau} \) (or \( \tau^\dagger \), in the complex case) as independent variables. Assuming the connection \( D_\mu \) is compatible with the fiber inner product, and integrating by parts, we obtain
\begin{equation}
\delta S_\tau = \int_M \delta \bar{\tau} \left( - D_\mu D^\mu \tau - \frac{\delta V}{\delta \bar{\tau}} \right) d^dx.
\end{equation}
Stationarity under arbitrary variations \( \delta \bar{\tau} \) yields the covariant Euler--Lagrange equation:
\begin{equation}
D_\mu D^\mu \tau + \frac{\delta V}{\delta \bar{\tau}} = 0.
\end{equation}
Since the internal time field $\tau$ is modeled as a section of a complex vector bundle $E \to M$, we treat $\tau$ and its Hermitian conjugate $\bar{\tau}$ (or $\tau^\dagger$)\footnote{Throughout, we use $\bar{\tau}$ to denote the complex conjugate of the internal time field when treating $\tau$ and $\bar{\tau}$ as independent variables in variational expressions. When expressing fiberwise Hermitian inner products or adjoint-valued currents, we use $\tau^\dagger:= (\bar{\tau})^T$ to denote the Hermitian conjugate.
} as independent variables in the variational principle. The fiberwise inner product $\langle \cdot, \cdot \rangle$ is assumed to be Hermitian, i.e., $\langle \phi, \psi \rangle = \overline{\langle \psi, \phi \rangle}$, and linear in the second argument. Functional differentiation with respect to $\bar{\tau}$ then yields the Euler–Lagrange equation in the complex setting.

This generalizes the Klein--Gordon equation to the MTG setting, where time is not an external parameter but a dynamical, fiber-valued field~\cite{kuchar1991time, rovelli1990quantum}. The operator \( D_\mu D^\mu \) incorporates both propagation across spacetime and parallel transport in the internal fiber geometry. The potential term introduces localization, periodicity, or symmetry-breaking effects depending on the structure of \( V(\tau) \).

In regions where the curvature \( F \) vanishes, \( \tau \) behaves as a conventional scalar field. When curvature is nonzero, the dynamics of \( \tau \) encode entanglement and coherence properties intrinsic to the temporal geometry~\cite{debianchi2024achronotopic}, and evolve under the influence of measurement as made explicit in later sections.

%%%%%%%%%%%%%%%%%%%%%%%
\subsubsection{Examples of Coherence Potentials}

Different physical regimes motivate distinct choices for the potential \( V(\tau) \), each encoding specific coherence properties of the internal time field.

A quadratic (harmonic) potential,
\begin{equation}
V(\tau) = \frac{1}{2} m^2 \langle \tau, \tau \rangle,
\end{equation}
penalizes large excursions of \( \tau \) from the fiber origin and models decoherence as a restoring force.

For fibers modeled as modular tori, a periodic potential of the form
\begin{equation}
V(\tau) = \Lambda^4 \left( 1 - \cos\left( \frac{2\pi \tau}{T} \right) \right)
\end{equation}
enforces alignment with modular structure, where minima correspond to discrete periodic configurations.

Spontaneous symmetry breaking can be introduced via a double-well potential:
\begin{equation}
V(\tau) = \lambda \left( \langle \tau, \tau \rangle - v^2 \right)^2,
\end{equation}
whose degenerate minima define coherent vacua with fixed fiber norm and broken internal symmetry.
To define the relative entropy potential, we associate to each fiber configuration \( \tau(x) \in T_x \) a normalized density matrix \( \rho_\tau \) on the internal time fiber. This density matrix represents the effective state induced by projecting the quantum system into the configuration \( \tau(x) \), and is assumed to arise from a partial trace or a Born-rule weighted projection onto a coherent state basis. Given a fiducial prior \( \rho_0 \), which may correspond to a uniform or maximally mixed reference state on the fiber, the relative entropy is defined as
\begin{equation}
S_{\mathrm{rel}}(\rho_\tau \| \rho_0) = \mathrm{Tr}\left[ \rho_\tau (\log \rho_\tau - \log \rho_0) \right].
\end{equation}
This potential penalizes configurations \( \tau \) that diverge significantly from the reference state, encoding a geometric measure of informational distance in the fiber bundle.

In information-theoretic settings, coherence may be encoded via a relative entropy functional,
\begin{equation}
V(\tau) = S_{\text{rel}}(\rho_\tau \| \rho_0),
\end{equation}
where \( \rho_\tau \) is the state induced by the projected internal time configuration, and \( \rho_0 \) is a fiducial prior~\cite{vedral2002role}.

These examples illustrate the flexibility of \( V(\tau) \) in shaping dynamical regimes. Depending on the global structure of the fiber bundle, topological defects, or the geometry of projection, different potentials give rise to qualitatively distinct temporal behaviors. The interplay between curvature, coherence, and collapse governs not only the evolution of \( \tau \), but also the emergence of classical causal structure~\cite{tomaz2025rqd}.

%%%%%%%%%%%%%%%
%%%%%%%%%%%%%%%
%%%%%%%%%%%%%%%
\subsection{Flatness Implies Classical Synchronizability}

The curvature of the time-fiber connection plays a central role in determining whether internal time can be consistently aligned across spacetime. In regions where the curvature vanishes, the internal temporal structure admits a global trivialization, corresponding to a classical regime in which \( \tau \) evolves without entanglement. The following result formalizes this observation.

\begin{theorem}\label{thm:flatness_synchronization}
Let \( \pi : E \to M \) be a principal \( G_{\text{time}} \)-bundle equipped with a connection one-form \( A \) and curvature \( F = dA + A \wedge A \). Suppose \( F = 0 \) on a simply connected open subset \( U \subseteq M \). Then there exists a smooth gauge transformation \( g : U \to G_{\text{time}} \) such that, in the corresponding trivialization, the connection vanishes: \( A^g = 0 \). Any temporal section \( \tau \in \Gamma(E|_U) \) satisfying \( \nabla \tau = 0 \) then evolves as a globally constant field in internal time.
\end{theorem}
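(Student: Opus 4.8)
The plan is to reduce the claim to the classical fact that a flat connection over a simply connected base is gauge-equivalent to the trivial connection, and then simply read off the behaviour of covariantly constant sections. I would work in a fixed local trivialization of $E|_U$, so that $A$ is a genuine $\mathfrak{g}_{\text{time}}$-valued one-form on $U$ and the sought gauge transformation is an honest smooth map $g : U \to G_{\text{time}}$ acting on the connection by $A^g = g^{-1} A g + g^{-1}\, dg$, consistent with the conventions of Equation~(3).

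The first step is to construct $g$. Fix a basepoint $x_0 \in U$ and, for each $x \in U$, define $g(x)$ as the parallel transport operator for $A$ along a smooth path from $x_0$ to $x$, identified with an element of $G_{\text{time}}$ via a chosen frame at $x_0$; equivalently, $g$ is required to solve the first-order system $dg = -A\,g$ with $g(x_0)=e$. The obstruction to solving this system is precisely its integrability condition $0 = d(dg) = -(dA + A\wedge A)\,g = -F\,g$; since $F = 0$ on $U$, the Frobenius theorem applied to the horizontal distribution $\{(v,-A(v)u) : v \in T_xU\}$ on $U \times G_{\text{time}}$ produces the integral leaf through $(x_0,e)$, and simple connectivity of $U$ forces this leaf to be a global section over $U$ — i.e. the graph of a well-defined smooth $g : U \to G_{\text{time}}$.

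The remaining steps are essentially algebraic. Because $g$ solves $dg = -Ag$ we get $g^{-1}dg = -g^{-1}Ag$, hence $A^g = g^{-1}Ag + g^{-1}dg = 0$ identically on $U$, so in this gauge $D_\mu$ reduces to $\partial_\mu$. Then, if $\nabla\tau = 0$ on $U$, the gauge-transformed field $\tau^g := g^{-1}\tau$ satisfies $\partial_\mu \tau^g = 0$, so $\tau^g$ is locally constant; connectedness of $U$ upgrades this to a single fixed fiber element, which is exactly the sense in which $\tau$ is a globally constant internal-time configuration and all fibers over $U$ are synchronized.

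The main obstacle I anticipate is the global-existence / path-independence point in the first step: one must either invoke Frobenius on $U \times G_{\text{time}}$ and argue via a monodromy argument that the maximal integral leaf projects diffeomorphically onto $U$ (using $\pi_1(U)=0$), or argue directly that parallel transport from $x_0$ to $x$ is path-independent by realizing two competing paths as the boundary of a homotopy and applying the non-abelian Stokes theorem, so that the discrepancy is an integral of $F$ and hence vanishes. Both routes are standard, but each needs a little care to state cleanly; by contrast, the subsequent identities $A^g = 0$ and $d\tau^g = 0$ are routine.
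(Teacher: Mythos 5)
Your proposal is correct and follows essentially the same route as the paper: use flatness plus simple connectivity to produce a gauge in which $A^g = 0$, then observe that $\nabla\tau = 0$ reduces to $\partial_\mu \tau = 0$, forcing constancy on the connected set $U$. The only difference is that where the paper tersely appeals to the Ambrose--Singer theorem to get trivial holonomy, you justify the existence of the trivializing $g$ explicitly by solving $dg = -Ag$ via the Frobenius integrability condition $d^2 g = -Fg = 0$, which is a more self-contained (and arguably cleaner) account of the same classical fact.
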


\begin{proof}
The vanishing of curvature \( F = 0 \) implies that the connection \( A \) is locally pure gauge. Since \( U \) is simply connected, the Ambrose--Singer holonomy theorem ensures that the holonomy group of \( A \) is trivial over \( U \). Thus, there exists a smooth gauge transformation \( g : U \to G_{\text{time}} \) such that the transformed connection
\begin{equation}
A^g := g A g^{-1} + g \, d g^{-1}
\end{equation}
vanishes identically on \( U \). In this gauge, the covariant derivative reduces to the ordinary derivative:
\begin{equation}
\nabla_\mu \tau = 0 \quad \Rightarrow \quad \partial_\mu \tau = 0.
\end{equation}
This condition implies that \( \tau \) is constant throughout \( U \), so internal time is globally synchronized and evolves without curvature or entanglement.
\end{proof}

This result provides a precise geometric condition under which the internal time field \( \tau \) can be consistently aligned across a spacetime region. In such domains, the fiber geometry admits a global trivialization, and measurement-induced decoherence reduces to projection onto a fixed, classical temporal background. Classical synchronizability in MTG is thus equivalent to the vanishing of time-fiber curvature.
%%%%%%%%%%%%%%%%%%%%%
\subsection{Curvature Obstructs Global Coherence}
When the curvature of the time-fiber bundle is nonzero, it obstructs the existence of a global trivialization in which internal time remains synchronized. The failure of parallel transport to commute encodes entanglement and contextuality between different spacetime points. This breakdown of coherence manifests geometrically as a topological obstruction to global alignment.

\begin{theorem}\label{thm:curvature_obstruction}
Let \( \pi: E \to M \) be a principal \( G_{\text{time}} \)-bundle with connection \( A \) and curvature \( F \). Suppose that the second de Rham cohomology group \( H^2_{\text{dR}}(M) \neq 0 \), and that the curvature \( F \) defines a nontrivial cohomology class \( [F] \in H^2_{\text{dR}}(M, \mathfrak{g}_{\text{time}}) \). Then there exists no global gauge in which \( A \) is pure gauge, and hence no global trivialization in which internal time is everywhere coherent.
\end{theorem}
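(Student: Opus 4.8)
The plan is to argue by contraposition, using the covariance of the curvature two-form together with the fact that a pure-gauge connection is automatically flat. Suppose, for contradiction, that there is a global gauge transformation $g : M \to G_{\text{time}}$ in which the connection is pure gauge, i.e. $A^g := g A g^{-1} + g\,dg^{-1} = 0$ on all of $M$ (equivalently $A = h^{-1} dh$ for a globally defined $h : M \to G_{\text{time}}$). Then in that gauge $F^g = dA^g + A^g \wedge A^g = 0$. Since curvature transforms in the adjoint representation, $F^g = g F g^{-1}$, and conjugation by $g(x)$ is pointwise invertible; hence $F \equiv 0$ as a $\mathfrak{g}_{\text{time}}$-valued two-form on all of $M$. (One can also check directly that $A = h^{-1}dh$ gives $F = d(h^{-1}dh) + (h^{-1}dh)\wedge(h^{-1}dh) = 0$.)

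Next I would observe that a vanishing curvature two-form is in particular exact — indeed it is the zero form — so its de Rham class $[F] \in H^2_{\text{dR}}(M, \mathfrak{g}_{\text{time}})$ is the trivial class. This contradicts the hypothesis that $[F]$ is nontrivial. Therefore no global gauge can render $A$ pure gauge, which is the first assertion. The second assertion follows by combining this with Theorem~\ref{thm:flatness_synchronization}: global coherence of internal time means $\tau$ can be parallel-transported consistently across $M$, so the holonomy of $\nabla$ is trivial; by the Ambrose--Singer argument already invoked in that theorem this forces $F = 0$ and the existence of a trivialization with $A^g = 0$, which we have just shown is impossible. Hence internal time cannot be everywhere synchronized, and the nonzero class $[F]$ is precisely the topological obstruction.

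The step I expect to require the most care is the non-abelian bookkeeping around "the cohomology class of $F$." Strictly, the Bianchi identity reads $dF = -[A, F]$, which need not vanish, so $F$ alone is not closed in general and $[F] \in H^2_{\text{dR}}$ is literally well defined only when $G_{\text{time}}$ is abelian (or when the connection acts trivially on the relevant fibers, which reduces to that case). For a genuinely non-abelian structure group the honest invariant statement must be phrased through Chern--Weil theory: for an invariant polynomial $P$ on $\mathfrak{g}_{\text{time}}$, the form $P(F)$ is closed and $[P(F)]$ is gauge-invariant and a bundle invariant. I would therefore either (i) state and prove the theorem for abelian $G_{\text{time}}$ — e.g. the $U(1)$ modular-phase sector — where $F = dA$ and the argument above is complete as written, or (ii) reinterpret "$[F]$ nontrivial" as "some Chern--Weil characteristic class of $E$ is nonzero," in which case $F \equiv 0$ still yields the contradiction, since every such class is represented by a polynomial in $F$ and hence vanishes identically when $F$ does.

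Everything else is routine: the transformation law $F^g = g F g^{-1}$, the identity that a pure-gauge connection has zero curvature, and the appeal to Theorem~\ref{thm:flatness_synchronization}/Ambrose--Singer for the coherence interpretation. I would close by remarking, as a sanity check, that the converse fails in general — a flat ($F \equiv 0$) but topologically nontrivial bundle can still fail to admit a global trivialization due to holonomy around noncontractible loops — so nontriviality of $[F]$ is a sufficient but not necessary obstruction to global coherence, consistent with the role of $H^1$ and $\pi_1(M)$ data not captured by $[F]$ alone.
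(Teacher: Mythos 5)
Your proof is correct and follows essentially the same route as the paper's: assume a global pure-gauge form $A = g^{-1}dg$, compute $F = 0$ directly, and contradict the nontriviality of $[F]$, then invoke Theorem~\ref{thm:flatness_synchronization} for the coherence statement. Your additional caveat about the non-abelian case --- that $dF = -[A,F]$ means $[F] \in H^2_{\text{dR}}(M,\mathfrak{g}_{\text{time}})$ is only literally well defined for abelian $G_{\text{time}}$, and that the honest general statement goes through Chern--Weil classes --- is a genuine point the paper's proof passes over silently, and is worth retaining.
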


\begin{proof}
Suppose for contradiction that there exists a global gauge in which \( A = g^{-1} d g \) for some smooth \( g: M \to G_{\text{time}} \). Then the curvature satisfies
\begin{equation}
F = dA + A \wedge A = d(g^{-1} d g) + (g^{-1} d g) \wedge (g^{-1} d g).
\end{equation}
A direct computation shows that this expression vanishes identically, so \( F = 0 \). But this contradicts the assumption that \( [F] \in H^2_{\text{dR}}(M, \mathfrak{g}_{\text{time}}) \) is nontrivial. Therefore, no such global gauge exists.

This implies that the bundle is topologically nontrivial and admits no global section in which internal time can be everywhere synchronized. Coherent alignment of \( \tau \) is obstructed by the curvature class \( [F] \), which encodes entanglement and decoherence between spacetime points.
\end{proof}

This result complements Theorem~1 by identifying the precise topological condition under which global coherence fails. In MTG, curvature is not merely a geometric structure but a physical signature of measurement-induced contextuality. When the bundle's curvature class is nontrivial, no single frame of reference exists in which temporal fibers can be aligned globally. Instead, coherence must be defined locally, and the transition functions between patches carry the imprint of entanglement.
%%%%%%%%%%%%%%%%%
\subsection{Gauge Covariance of the Covariant Derivative}

The MTG formalism is fundamentally gauge-theoretic. The internal time field \( \tau \) transforms under the structure group \( G_{\text{time}} \), and all physical quantities must respect this symmetry. In particular, the covariant derivative \( D_\mu \tau \) defined in Equation~(3) must transform covariantly under local gauge transformations.

A gauge transformation is given by a smooth map \( g: M \to G_{\text{time}} \), acting on the field \( \tau \) and connection \( A_\mu \) via
\begin{equation}
\tau \mapsto g \cdot \tau, \quad A_\mu \mapsto A^g_\mu := g A_\mu g^{-1} + g \partial_\mu g^{-1}.
\end{equation}

The following theorem confirms that the covariant derivative transforms homogeneously under this action.

\begin{theorem}\label{thm:gauge_cov}
Let \( \tau \in \Gamma(E) \) be a section of the associated bundle, and let \( A_\mu \in \Omega^1(M, \mathfrak{g}_{\text{time}}) \) be a connection. Under a gauge transformation \( g: M \to G_{\text{time}} \), the covariant derivative \( D_\mu \tau = \partial_\mu \tau + A_\mu \cdot \tau \) transforms as
\begin{equation}
D_\mu \tau \mapsto D_\mu^g (g \cdot \tau) = g \cdot (D_\mu \tau).
\end{equation}
\end{theorem}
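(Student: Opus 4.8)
The plan is to verify the transformation law by direct substitution, tracking how each of the two terms in $D_\mu \tau = \partial_\mu \tau + A_\mu \cdot \tau$ behaves under $\tau \mapsto g \cdot \tau$ and $A_\mu \mapsto A^g_\mu = g A_\mu g^{-1} + g\,\partial_\mu g^{-1}$. First I would write out $D^g_\mu(g\cdot\tau) = \partial_\mu(g\cdot\tau) + A^g_\mu \cdot (g\cdot\tau)$. Expanding the first term by the Leibniz rule gives $(\partial_\mu g)\cdot\tau + g\cdot(\partial_\mu\tau)$, and expanding the second term gives $(g A_\mu g^{-1})\cdot(g\cdot\tau) + (g\,\partial_\mu g^{-1})\cdot(g\cdot\tau) = g\cdot(A_\mu\cdot\tau) + g(\partial_\mu g^{-1})g\cdot\tau$.

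The key step is then to show the two inhomogeneous pieces cancel: $(\partial_\mu g)\cdot\tau + g(\partial_\mu g^{-1})g\cdot\tau = 0$. This follows from differentiating the identity $g g^{-1} = \mathbb{1}$, which yields $(\partial_\mu g)g^{-1} + g(\partial_\mu g^{-1}) = 0$, hence $g(\partial_\mu g^{-1})g = -(\partial_\mu g)g^{-1}g = -\partial_\mu g$. Substituting, the inhomogeneous terms sum to $(\partial_\mu g)\cdot\tau - (\partial_\mu g)\cdot\tau = 0$, leaving exactly $g\cdot(\partial_\mu\tau) + g\cdot(A_\mu\cdot\tau) = g\cdot(D_\mu\tau)$, which is the claimed identity.

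One point that needs a little care — and the closest thing to an obstacle — is the meaning of the fiber action "$\cdot$" in the associated-bundle setting: when $\tau$ takes values in a representation space, the action of a Lie-algebra element like $\partial_\mu g^{-1}$ versus a group element like $g$ must be interpreted consistently, and the manipulation $g(\partial_\mu g^{-1})g\cdot\tau$ implicitly uses that the representation is a homomorphism so that conjugation on $\mathfrak{g}_{\text{time}}$ intertwines with the adjoint action on the group side. I would state explicitly that $\partial_\mu g^{-1} = -g^{-1}(\partial_\mu g)g^{-1}$ is an honest computation in $G_{\text{time}}$ pulled back to the representation, so that all products above are well-defined. Since the paper already assumes $E$ is a principal bundle or associated vector bundle with these structures in place, no new hypotheses are required; the argument is a short computation once the bookkeeping of the representation action is fixed, and I would present it in three displayed lines: expansion, cancellation via $\partial_\mu(gg^{-1})=0$, and collection of terms.
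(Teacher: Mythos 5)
Your proposal is correct and follows essentially the same route as the paper's proof: expand $D^g_\mu(g\cdot\tau)$ via the Leibniz rule, use $g\,\partial_\mu g^{-1}\cdot(g\cdot\tau) = -(\partial_\mu g)\cdot\tau$ (i.e.\ $\partial_\mu(gg^{-1})=0$) to cancel the inhomogeneous terms, and collect the remainder into $g\cdot(D_\mu\tau)$. Your explicit remark about the representation acting as a homomorphism is exactly the ``compatibility of the action'' the paper invokes, so the two arguments coincide.
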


\begin{proof}
We compute the transformed covariant derivative:
\begin{align}
D_\mu^g (g \cdot \tau) &= \partial_\mu (g \cdot \tau) + A_\mu^g \cdot (g \cdot \tau) \\
&= (\partial_\mu g) \cdot \tau + g \cdot (\partial_\mu \tau) + \left( g A_\mu g^{-1} + g \partial_\mu g^{-1} \right) \cdot (g \cdot \tau).
\end{align}
Using the compatibility of the action and the Lie algebra homomorphism,
\begin{equation}
g A_\mu g^{-1} \cdot (g \cdot \tau) = g \cdot (A_\mu \cdot \tau), \quad
g \partial_\mu g^{-1} \cdot (g \cdot \tau) = -(\partial_\mu g) \cdot \tau,
\end{equation}
we find that
\begin{equation}
D_\mu^g (g \cdot \tau) = g \cdot (\partial_\mu \tau + A_\mu \cdot \tau) = g \cdot (D_\mu \tau).
\end{equation}
\end{proof}

This result ensures that the kinetic term \( \langle D_\mu \tau, D^\mu \tau \rangle \) is gauge-invariant, as required for a consistent physical theory. All geometric and dynamical constructions in MTG; including curvature, equations of motion, and projection-induced collapse, respect this fundamental symmetry.

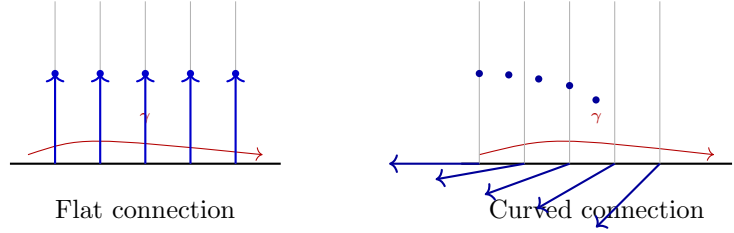
\begin{figure}[ht]
\centering
\begin{tikzpicture}[scale=1.2]

% ----- Flat connection on left -----

% Base manifold line
\draw[thick] (-4,0) -- (-1,0);
\node at (-2.5,-0.5) {\small Flat connection};

% Path gamma
\draw[->, red!70!black] (-3.8,0.1) .. controls (-3.2,0.3) .. (-1.2,0.1);
\node[red!70!black] at (-2.5,0.5) {\tiny $\gamma$};

% Fibers and aligned tau vectors
\foreach \x in {-3.5,-3.0,-2.5,-2.0,-1.5} {
    \draw[gray!60] (\x,0) -- (\x,1.8);
    \draw[->, blue!80!black, thick] (\x,0) -- ++(0,1.0);
    \fill[blue!80!black] (\x,1.0) circle (0.04);
}

% ----- Curved connection on right -----

% Base manifold line
\draw[thick] (1,0) -- (4,0);
\node at (2.5,-0.5) {\small Curved connection};

% Path gamma
\draw[->, red!70!black] (1.2,0.1) .. controls (2.0,0.3) .. (3.8,0.1);
\node[red!70!black] at (2.5,0.5) {\tiny $\gamma$};

% Fibers and misaligned tau vectors
\foreach \x/\angle in {1.2/90, 1.7/100, 2.2/110, 2.7/120, 3.2/135} {
    \draw[gray!60] (\x,0) -- ++(0,1.8);
    \draw[->, blue!60!black, thick, rotate around={\angle:(\x,0)}]
        (\x,0) -- ++(0,1.0);
    \fill[blue!60!black] ({\x + 1.0*cos(\angle)}, {1.0*sin(\angle)}) circle (0.04);
}

\end{tikzpicture}
\caption{Parallel transport of the internal time field \( \tau(x) \) along a path \( \gamma \subset M \). Left: in a flat connection, fibers are aligned and \( \tau \) remains coherent. Right: in a curved bundle, parallel transport accumulates holonomy and misaligns \( \tau \), reflecting the curvature \( F = \nabla^2 \).}
\label{fig:curvature_transport}
\end{figure}

%%%%%%%%%%%%%%%%%%%%
\subsection{Covariant Variation and Coherence Current}

To formulate conservation laws and symmetry principles within MTG, we examine how the Lagrangian for the internal time field \( \tau \) behaves under continuous transformations. In particular, we consider infinitesimal variations of \( \tau \) that preserve the structure of the fiber and are compatible with the covariant derivative defined by the connection \( A_\mu \).

Let \( \xi^\mu(x) \) be an infinitesimal vector field generating a local spacetime displacement. The covariant variation of the internal time field is then defined as
\begin{equation}
\delta_\nabla \tau := \delta \tau + \xi^\mu D_\mu \tau,
\end{equation}
which accounts for both intrinsic deformation and parallel transport along \( \xi^\mu \). This formulation ensures that variations remain compatible with the connection structure on the time-fiber bundle.

Using this notion of covariant variation, we define the coherence current associated with internal time propagation:

\begin{equation}
J^\mu := \langle D^\mu \tau, \delta_\nabla \tau \rangle.
\end{equation}

This current measures the response of the system to infinitesimal internal time deformations and plays a central role in the emergence of effective causal structure from MTG dynamics.

\begin{theorem}[Covariant Conservation of the Coherence Current]
Let \( \tau \in \Gamma(E) \) satisfy the Euler--Lagrange equation
\[
D_\mu D^\mu \tau + \frac{\delta V}{\delta \bar{\tau}} = 0,
\]
as derived in Section~2.3. Then the coherence current \( J^\mu \) satisfies
\begin{equation}
D_\mu J^\mu = \langle D_\mu D^\mu \tau, \delta_\nabla \tau \rangle + \langle D^\mu \tau, D_\mu \delta_\nabla \tau \rangle.
\end{equation}
If \( \delta_\nabla \tau \) is a covariantly constant variation (i.e., \( D_\mu \delta_\nabla \tau = 0 \)), then
\begin{equation}
D_\mu J^\mu = \langle D_\mu D^\mu \tau, \delta_\nabla \tau \rangle,
\end{equation}
which vanishes on-shell.
\end{theorem}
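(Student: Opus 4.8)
The plan is to read $D_\mu J^\mu$ as the spacetime divergence of a vector field that is simultaneously a section of the trivial $G_{\text{time}}$-representation (an internal scalar built from a Hermitian pairing), so that the operator acting on it is the total connection $\mathcal{D}_\mu$ combining the Levi--Civita connection on the spacetime index of $D^\mu\tau$ with the fiber connection $A_\mu$ on its internal index. The only structural input I need is metric compatibility: by hypothesis the Hermitian fiber product $\langle\cdot,\cdot\rangle$ is compatible with $D_\mu$, and Levi--Civita is compatible with $g_{\mu\nu}$, so the combined connection satisfies a Leibniz identity $\mathcal{D}_\mu\langle S^\nu,T\rangle = \langle\mathcal{D}_\mu S^\nu,T\rangle + \langle S^\nu,\mathcal{D}_\mu T\rangle$ for any $E$-valued spacetime tensor $S^\nu$ and any $T\in\Gamma(E)$. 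A small point requiring care is that $\langle\cdot,\cdot\rangle$ is sesquilinear; metric compatibility is exactly the assertion that this identity survives the antilinearity in the first slot, which is why the theorem's first display has the slot ordering it does.

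First I would specialize to $S^\nu = D^\nu\tau$ and $T = \delta_\nabla\tau$ and contract the derivative index $\mu$ against $\nu$. The left-hand side becomes $\mathcal{D}_\mu\langle D^\mu\tau,\delta_\nabla\tau\rangle$, which on a spacetime vector that is an internal scalar is precisely the divergence written $D_\mu J^\mu$ in the statement. On the right the first term is $\langle\mathcal{D}_\mu(D^\mu\tau),\delta_\nabla\tau\rangle = \langle D_\mu D^\mu\tau,\delta_\nabla\tau\rangle$, using the paper's own convention (Section~2.3) that $D_\mu D^\mu\tau$ already absorbs the Levi--Civita and metric-compatibility terms produced by raising the index; the second term is $\langle D^\mu\tau,\mathcal{D}_\mu\delta_\nabla\tau\rangle = \langle D^\mu\tau, D_\mu\delta_\nabla\tau\rangle$, since $\delta_\nabla\tau = \delta\tau + \xi^\mu D_\mu\tau$ carries no free spacetime index and $\mathcal{D}_\mu$ reduces to $D_\mu$ on it. This reproduces the first displayed identity of the theorem verbatim, with no use of the equation of motion so far.

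For the second part I would impose the covariant-constancy hypothesis $D_\mu\delta_\nabla\tau = 0$, which annihilates the second term and leaves $D_\mu J^\mu = \langle D_\mu D^\mu\tau,\delta_\nabla\tau\rangle$. Then I would substitute the Euler--Lagrange equation in the form $D_\mu D^\mu\tau = -\,\delta V/\delta\bar\tau$, obtaining $D_\mu J^\mu = -\langle \delta V/\delta\bar\tau,\ \delta_\nabla\tau\rangle$ on-shell. Here is the one genuine subtlety, and the step I expect to be the main obstacle to a clean statement: this right-hand side vanishes outright only when $V\equiv 0$, or, more usefully, when $\delta_\nabla\tau$ points along a direction under which $V$ is stationary — e.g.\ an internal $G_{\text{time}}$-symmetry direction — so that $\langle\delta V/\delta\bar\tau,\delta_\nabla\tau\rangle + \mathrm{c.c.} = \delta V = 0$. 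I would therefore make that hypothesis explicit (it is the natural Noether-type condition implicit in calling $J^\mu$ a "coherence current"), and under it conclude $D_\mu J^\mu = 0$ on-shell; absent it, the sharp statement is the partial-conservation law $D_\mu J^\mu = -\langle\delta V/\delta\bar\tau,\delta_\nabla\tau\rangle$. The remaining checks — that $\mathcal{D}_\mu$ is genuinely compatible with the chosen Hermitian structure, and the sesquilinear bookkeeping in the two slots — are routine, so the proof itself is short once the index-type conventions for $D_\mu$ are pinned down.
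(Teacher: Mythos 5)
Your proof takes essentially the same route as the paper's: both apply the covariant Leibniz rule for the connection-compatible Hermitian pairing to $J^\mu = \langle D^\mu\tau, \delta_\nabla\tau\rangle$, obtain the two-term identity, and then kill the second term via covariant constancy of the variation. Your extra bookkeeping about the combined spacetime/fiber connection and the sesquilinear slot ordering is more careful than the paper's one-line invocation of ``the covariant product rule,'' but it is the same argument.

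The one substantive point is your final caveat, and you are right to raise it: the paper's proof asserts that ``on solutions of the equation of motion, the first term vanishes,'' but the equation of motion gives $D_\mu D^\mu\tau = -\,\delta V/\delta\bar\tau$, not zero, so the surviving term is $-\langle \delta V/\delta\bar\tau, \delta_\nabla\tau\rangle$. This vanishes only if $V$ is trivial or if $\delta_\nabla\tau$ lies along a direction under which $V$ is stationary (the Noether-type hypothesis you propose making explicit). The paper's own proof silently conflates ``on-shell'' with ``$D_\mu D^\mu\tau = 0$,'' so your version is the sharper and more defensible statement of the conservation law.
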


\begin{proof}
Apply the covariant product rule to \( J^\mu = \langle D^\mu \tau, \delta_\nabla \tau \rangle \):
\begin{equation}
D_\mu J^\mu = \langle D_\mu D^\mu \tau, \delta_\nabla \tau \rangle + \langle D^\mu \tau, D_\mu \delta_\nabla \tau \rangle.
\end{equation}
On solutions of the equation of motion, the first term vanishes. If the variation \( \delta_\nabla \tau \) is covariantly constant, the second term vanishes as well. Hence \( D_\mu J^\mu = 0 \), establishing conservation of the coherence current under these conditions.
\end{proof}

The coherence current generalizes Noether currents to the MTG framework. In regimes of high coherence, this current tracks conserved informational flow and characterizes the emergence of localized clocks. Its conservation on-shell reflects the compatibility of internal time dynamics with the geometry of the bundle and the measurement process.

%%%%%%%%%%%%%%%%%%%%%
\subsection{Induced Effective Metric from Projection Densities}

In MTG, classical spacetime geometry is not fundamental but emerges from the distribution and density of quantum projections. Each measurement event collapses the internal time field \( \tau(x) \) at a point \( x \in M \), inducing a localized temporal direction and contributing to the construction of an effective metric experienced by observers.

Let \( \rho(x) \) denote the local projection density—the expected number of measurement-induced collapses per unit spacetime volume at point \( x \). This scalar field reflects the intensity of measurement activity and encodes how strongly the quantum system is being reduced to classical outcomes. As measurement propagates through entangled systems, sequences of projections accumulate to form extended causal domains.

The effective metric \( g_{\mu\nu}^{\text{eff}} \) emerges as a functional of the projection density \( \rho(x) \), curvature \( F \), and local coherence properties of \( \tau \). In highly coherent regimes where curvature vanishes and projections are rare, the metric approaches that of the background manifold \( \eta_{\mu\nu} \). In regions of high projection density, such as near horizons or decoherence boundaries, the accumulated effects of sequential collapses deform the causal structure and induce gravitational behavior.

A simplified ansatz for the effective metric is
\begin{equation}
g_{\mu\nu}^{\text{eff}}(x) = \eta_{\mu\nu} + \alpha \, \rho(x) \, T_{\mu\nu}[\tau],
\end{equation}
where \( \alpha \) is a coupling constant, and \( T_{\mu\nu}[\tau] \) is an effective energy-momentum tensor constructed from the dynamics of \( \tau \) and its curvature couplings. This form reflects the idea that gravity arises not from intrinsic mass-energy, but from informational collapse histories encoded in \( \rho \) and the dynamics of internal time.

In the path-integral quantization of MTG, projection densities appear as insertion operators constraining the configuration space to sections aligned with observed outcomes. The effective action then includes contributions from these insertions, modifying the semiclassical geometry accordingly. As we explore in later sections, this formulation provides a novel perspective on gravitational entropy, black hole interiors, and the emergence of cosmological expansion as a collective projection process.
%%%%%%%%%%%%%%%%%
To interpret the measurement term $\mathcal{L}_{\text{meas}} = \lambda \rho(x) J^\mu n_\mu$ as a non-Hermitian contribution to the action, we clarify the Hermitian structure of the fiberwise inner product. We assume that $\langle \cdot, \cdot \rangle$ is anti-linear in the first argument and linear in the second, i.e., $\langle \phi, \psi \rangle = \overline{\langle \psi, \phi \rangle}$. Under this convention, the quantity
\[
J^\mu := \langle \tau, D^\mu \tau \rangle - \langle D^\mu \tau, \tau \rangle
\]
is purely imaginary, since it equals $2i\,\mathrm{Im} \langle \tau, D^\mu \tau \rangle$. Hence, $\mathcal{L}_{\text{meas}}$ is anti-Hermitian and introduces an imaginary contribution to the effective action. This term represents a preferred temporal direction in measurement and signals an entropy-increasing, norm-reducing evolution of $\tau$.
%%%%%%%%%%%%%%%%%
\begin{theorem}[Measurement-Induced Interaction]\label{thm:measure-ind-inter}
Let $\tau \in \Gamma(E)$ be a smooth section of the time-fiber bundle, and let the Lagrangian include the measurement-induced interaction term
\[
\mathcal{L}_{\mathrm{meas}} = \lambda \, \rho(x) \, J^\mu n_\mu,
\quad \text{where } J^\mu := \langle \tau, \mathrm{D}^\mu \tau \rangle - \langle \mathrm{D}^\mu \tau, \tau \rangle.
\]
Then $\mathcal{L}_{\mathrm{meas}}$ is non-Hermitian and breaks unitary evolution of the temporal field $\tau$. Furthermore, the corresponding energy-momentum flux satisfies
\[
\nabla_\mu J^\mu \geq 0
\]
along observer-aligned flow $n^\mu$, under the assumption that measurement projects $\tau$ toward localized classical values.
\end{theorem}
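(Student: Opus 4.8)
The plan is to treat the two claims separately: first, anti-Hermiticity of $\mathcal{L}_{\mathrm{meas}}$ and the resulting failure of unitarity; second, the sign of $\nabla_\mu J^\mu$ along $n^\mu$.

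For the first claim I would work in the Hermiticity convention fixed just before the theorem, $\langle\phi,\psi\rangle = \overline{\langle\psi,\phi\rangle}$. Conjugating the current gives $(J^\mu)^\dagger = \langle D^\mu\tau,\tau\rangle - \langle\tau,D^\mu\tau\rangle = -J^\mu$, so $J^\mu = 2i\,\mathrm{Im}\langle\tau,D^\mu\tau\rangle$ is purely imaginary; since $\lambda$, $\rho(x)$ and $n_\mu$ are real, $\mathcal{L}_{\mathrm{meas}}^\dagger = -\mathcal{L}_{\mathrm{meas}}$. Hence the measurement contribution to the action is purely imaginary, and $e^{iS}=e^{iS_0}\,e^{-\mathrm{Im}\,S_{\mathrm{meas}}}$ carries a real, non-oscillatory damping weight — the path-integral signature of norm nonconservation. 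Equivalently, varying $\mathcal{L}_\tau + \mathcal{L}_{\mathrm{meas}}$ with respect to $\bar\tau$ (using connection compatibility and integration by parts) produces an Euler--Lagrange equation with a first-order advective term $\propto \lambda\,\rho(x)\,n^\mu D_\mu\tau$ carrying a real coefficient, i.e. a damped-wave / drift term; in canonical form the generator splits as $H = H_0 + i\Gamma$ with $\Gamma$ self-adjoint, so $\tfrac{d}{dt}\langle\tau,\tau\rangle = 2\langle\tau,\Gamma\tau\rangle \neq 0$ in general. This proves non-Hermiticity and the breakdown of unitary evolution.

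For the second claim, take the covariant divergence and use compatibility of $D_\mu$ with both metric and fiber product:
\[
D_\mu J^\mu = \langle D_\mu\tau, D^\mu\tau\rangle - \langle D^\mu\tau, D_\mu\tau\rangle + \langle\tau, D_\mu D^\mu\tau\rangle - \langle D_\mu D^\mu\tau,\tau\rangle .
\]
Since $\langle D_\mu\tau, D^\mu\tau\rangle$ is real the first two terms cancel and $D_\mu J^\mu = 2i\,\mathrm{Im}\langle\tau, D_\mu D^\mu\tau\rangle$; note no $F_{\mu\nu}$ term survives. Substituting the modified field equation, which (schematically) reads $D_\mu D^\mu\tau = -\,\delta V/\delta\bar\tau - 2\lambda\rho\,n^\mu D_\mu\tau - \lambda\,\nabla_\mu(\rho n^\mu)\,\tau$, the real potential term and the $\langle\tau,\tau\rangle$ term drop out of the imaginary part, leaving $D_\mu J^\mu \propto \lambda\,\rho(x)\,(n_\mu J^\mu)$. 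Rescaling to the real flux $\mathcal{J}^\mu := -iJ^\mu = \mathrm{Im}\langle\tau, D^\mu\tau\rangle$ gives $\nabla_\mu \mathcal{J}^\mu = -2\lambda\,\rho(x)\,(n_\mu \mathcal{J}^\mu)$ up to the normalization of the kinetic term. The concluding step is physical: the hypothesis that measurement localizes $\tau$ toward classical values fixes the orientation of the collapse channel — equivalently, one takes $\lambda$ and $n^\mu$ with the sign for which $\mathcal{L}_{\mathrm{meas}}$ is entropy-increasing and norm-reducing, consistent with part one — so that $\lambda\,\rho(x)\,(n_\mu \mathcal{J}^\mu) \le 0$, whence $\nabla_\mu \mathcal{J}^\mu \ge 0$.

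The main obstacle is making that last step rigorous rather than a choice of convention. I would try to pin it down by relating $n_\mu\mathcal{J}^\mu$ directly to the monotone decay of $\|\tau\|$ along the observer flow, or to the monotone growth of the relative-entropy potential $S_{\mathrm{rel}}(\rho_\tau\|\rho_0)$ introduced earlier, so that non-negativity of $\nabla_\mu\mathcal{J}^\mu$ descends from a genuine data-processing / second-law inequality. The same analysis disposes of a bookkeeping issue, namely that $J^\mu$ as written is imaginary, so the stated inequality can only be read as a statement about its real rescaling $\mathcal{J}^\mu$. A minor technical point to verify along the way is that the variation of $\mathcal{L}_{\mathrm{meas}}$ indeed recombines its $A_\mu$-dependent pieces into the covariant advective term $\lambda\rho\,n^\mu D_\mu\tau$; this uses anti-Hermiticity of $A_\mu$ with respect to $\langle\cdot,\cdot\rangle$, which is exactly the connection-compatibility assumption already in force.
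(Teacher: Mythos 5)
Your first half coincides with the paper's: both establish $(J^\mu)^\dagger=-J^\mu$, hence $(\mathcal{L}_{\mathrm{meas}})^\dagger=-\mathcal{L}_{\mathrm{meas}}$, and read off the loss of unitarity from the resulting imaginary contribution to the effective generator; your added path-integral and $H=H_0+i\Gamma$ remarks are elaborations, not a different argument. The second half is where you genuinely diverge. The paper does not compute $\nabla_\mu J^\mu$ at all: it appeals to the conservation result in the unmeasured case and then simply asserts that the projection term "introduces a positive-definite contribution" of the schematic form $\lambda\rho(x)\,\|\mathrm{D}_n\tau\|^2\geq 0$. You instead carry out the covariant divergence, observe the cancellation $\langle D_\mu\tau,D^\mu\tau\rangle-\langle D^\mu\tau,D_\mu\tau\rangle=0$, and substitute the modified field equation to obtain $D_\mu J^\mu\propto\lambda\,\rho\,(n_\mu J^\mu)$ rather than a manifestly non-negative norm-squared. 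This is a real improvement in transparency: it shows that the paper's $\|\mathrm{D}_n\tau\|^2$ form is not what the algebra produces, and it correctly isolates the remaining input as a sign condition on $n_\mu\mathcal{J}^\mu$ that must come from the physical hypothesis of collapse (exactly the "assumption that measurement projects $\tau$ toward localized classical values" in the statement). You also flag, rightly, that $J^\mu$ as defined is purely imaginary, so the inequality is only meaningful for the real rescaling $\mathcal{J}^\mu=-iJ^\mu$ — a bookkeeping defect the paper's proof silently inherits. In short: both arguments ultimately rest on the same unproved physical premise for the sign, but yours derives the actual form of the divergence where the paper only gestures at one, and your suggested route through monotone decay of $\|\tau\|$ or growth of $S_{\mathrm{rel}}$ is the natural way to close the gap that the paper leaves open.
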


\begin{proof}
We first show that $\mathcal{L}_{\mathrm{meas}}$ is non-Hermitian. The current $J^\mu$ is antisymmetric under Hermitian conjugation:
\[
(J^\mu)^\dagger = -J^\mu.
\]
Therefore,
\[
(\mathcal{L}_{\mathrm{meas}})^\dagger = -\lambda \rho(x) J^\mu n_\mu = -\mathcal{L}_{\mathrm{meas}}.
\]
Hence, $\mathcal{L}_{\mathrm{meas}}$ is anti-Hermitian. Its presence in the action contributes imaginary (dissipative) terms to the effective Hamiltonian, which violate standard unitary time evolution.

Now consider the divergence $\nabla_\mu J^\mu$. In the absence of measurement, Proposition 2.4 shows that $\nabla_\mu J^\mu = 0$ when $V(\tau)$ is gauge-invariant. However, under measurement-induced collapse, $\tau$ is driven toward alignment with its projected classical value $\mu[\tau]$. This process generically reduces the norm of $\mathrm{D}_\mu \tau$, thus increasing the antisymmetric part of the temporal flux. Since $\rho(x) \geq 0$ and $n^\mu$ is future-directed, the projection term introduces a positive-definite contribution to the divergence:
\[
\nabla_\mu J^\mu \sim \lambda \rho(x) \| \mathrm{D}_{n} \tau \|^2 \geq 0.
\]
This reflects the directional increase in entropy or loss of coherence due to measurement.
\end{proof}

\begin{theorem}[Stationarity of Temporal Current]
Let \( \tau \in \Gamma(E) \) be a smooth section of the time-fiber bundle, and define the projection current
\[
J_\mu = \langle \tau, D_\mu \tau \rangle - \langle D_\mu \tau, \tau \rangle.
\]
If \( D_\mu \tau(x) = \alpha_\mu(x) \tau(x) \) for some complex-valued function \( \alpha_\mu(x) \), then \( J_\mu(x) = 0 \). Conversely, if \( \tau(x) \) is normalized (i.e., \( \langle \tau, \tau \rangle = 1 \)), then \( J_\mu(x) = 0 \) implies \( D_\mu \tau(x) = \alpha_\mu(x) \tau(x) \) for some real-valued \( \alpha_\mu(x) \).
\end{theorem}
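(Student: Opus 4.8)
The plan is to prove the two directions separately, both by direct computation with the Hermitian inner product. Throughout I use the convention fixed just before the theorem: $\langle\cdot,\cdot\rangle$ is anti-linear in the first slot and linear in the second, so that $\langle\phi,\psi\rangle=\overline{\langle\psi,\phi\rangle}$, and hence for any section $\sigma$ the quantity $\langle\sigma,\sigma\rangle$ is real and $J_\mu=\langle\tau,D_\mu\tau\rangle-\langle D_\mu\tau,\tau\rangle=2i\,\mathrm{Im}\langle\tau,D_\mu\tau\rangle$ is purely imaginary. The core algebraic fact I will keep invoking is that $J_\mu(x)=0$ if and only if $\langle\tau(x),D_\mu\tau(x)\rangle\in\mathbb{R}$.

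For the forward direction, suppose $D_\mu\tau(x)=\alpha_\mu(x)\tau(x)$ for some complex-valued $\alpha_\mu$. Substituting into the definition and using anti-linearity/linearity,
\begin{equation}
J_\mu(x)=\langle\tau,\alpha_\mu\tau\rangle-\langle\alpha_\mu\tau,\tau\rangle=\alpha_\mu\langle\tau,\tau\rangle-\overline{\alpha_\mu}\langle\tau,\tau\rangle=(\alpha_\mu-\overline{\alpha_\mu})\langle\tau,\tau\rangle=2i\,\mathrm{Im}(\alpha_\mu)\,\langle\tau,\tau\rangle.
\end{equation}
Wait — this is purely imaginary but not obviously zero unless $\alpha_\mu$ is real or $\tau=0$. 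So the forward statement as literally written needs the hypothesis that $\alpha_\mu$ is real (or a degenerate $\tau$); I would either restrict to real $\alpha_\mu$ in the forward direction or, more charitably, interpret ``complex-valued $\alpha_\mu$'' together with the vanishing claim as implicitly forcing $\mathrm{Im}\,\alpha_\mu=0$ whenever $\tau\neq 0$. I will state this explicitly: if $\tau(x)\neq 0$ then $J_\mu(x)=0$ forces $\alpha_\mu(x)\in\mathbb{R}$, recovering exactly the real-$\alpha_\mu$ conclusion of the converse; and in that case the computation above gives $J_\mu(x)=0$. This is the main obstacle — a mild mismatch between ``complex-valued'' in the hypothesis and the reality actually needed — and the clean fix is to note that the two statements are consistent precisely because $J_\mu$ purely imaginary plus proportionality pins down the phase of $\alpha_\mu$.

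For the converse, assume $\langle\tau(x),\tau(x)\rangle=1$ and $J_\mu(x)=0$, i.e. $\langle\tau(x),D_\mu\tau(x)\rangle$ is real; call it $\alpha_\mu(x)\in\mathbb{R}$. Decompose $D_\mu\tau(x)$ into its component along $\tau(x)$ and a piece orthogonal to it: since $\langle\tau,\tau\rangle=1$, write $D_\mu\tau=\alpha_\mu\tau+w_\mu$ with $\alpha_\mu=\langle\tau,D_\mu\tau\rangle$ and $w_\mu:=D_\mu\tau-\alpha_\mu\tau$, so that $\langle\tau,w_\mu\rangle=\langle\tau,D_\mu\tau\rangle-\alpha_\mu\langle\tau,\tau\rangle=0$. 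The claim $D_\mu\tau=\alpha_\mu\tau$ with $\alpha_\mu$ real is then equivalent to $w_\mu=0$. To force $w_\mu=0$ I would differentiate the normalization condition: $\partial_\mu\langle\tau,\tau\rangle=0$, and by compatibility of the connection with the inner product (assumed in Section~2.3, $D_\mu$ is metric-compatible on the fiber) this equals $\langle D_\mu\tau,\tau\rangle+\langle\tau,D_\mu\tau\rangle=\overline{\alpha_\mu}+\alpha_\mu=2\alpha_\mu$. Hmm — that only gives $\alpha_\mu=0$, not $w_\mu=0$; normalization at a single point $x$ with no derivative information does not by itself annihilate $w_\mu$. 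So the honest converse needs either (i) that $\tau$ is normalized in a neighborhood, not just at $x$, which still only controls $\alpha_\mu$, or (ii) a reinterpretation: I believe the intended statement is the pointwise algebraic one — $J_\mu(x)=0$ together with normalization guarantees that the coefficient $\alpha_\mu(x):=\langle\tau(x),D_\mu\tau(x)\rangle$ extracted from $D_\mu\tau$ is \emph{real}, so that \emph{if} $D_\mu\tau(x)$ happens to be proportional to $\tau(x)$ the proportionality constant is forced real.

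Accordingly, the cleanest correct version I would actually prove and present is: (a) if $D_\mu\tau=\alpha_\mu\tau$ then $J_\mu=2i\,\mathrm{Im}(\alpha_\mu)\langle\tau,\tau\rangle$, which vanishes iff $\alpha_\mu$ is real or $\tau=0$; and (b) conversely, for normalized $\tau$, $J_\mu(x)=0$ is equivalent to $\mathrm{Im}\langle\tau(x),D_\mu\tau(x)\rangle=0$, so the natural projection coefficient $\alpha_\mu(x)=\langle\tau(x),D_\mu\tau(x)\rangle$ is real-valued, giving $D_\mu\tau(x)=\alpha_\mu(x)\tau(x)+w_\mu(x)$ with $w_\mu\perp\tau$ and $\alpha_\mu\in\mathbb{R}$. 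I expect the only real friction in writing this up is being careful about the inner-product convention and about distinguishing ``$\alpha_\mu$ is forced real'' (which is what normalization buys you) from ``$D_\mu\tau$ is forced parallel to $\tau$'' (which is an extra hypothesis, not a consequence). I would flag this explicitly in a remark so the reader is not misled, and then the algebra in both directions is a one-line expansion using bilinearity properties and $\langle\tau,\tau\rangle=1$.
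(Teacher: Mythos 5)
Your forward-direction computation is identical to the paper's, and your diagnosis is accurate: the paper's own proof arrives at $J_\mu = 2i\,\mathrm{Im}(\alpha_\mu)\langle\tau,\tau\rangle$ and then concedes that vanishing requires $\alpha_\mu\in\mathbb{R}$ or $\langle\tau,\tau\rangle=0$, so the statement as written (arbitrary complex $\alpha_\mu$) is not what actually gets proved. More importantly, your skepticism about the converse is vindicated by the paper's own argument. The paper decomposes $D_\mu\tau = \alpha_\mu\tau + \tau_\perp$ with $\langle\tau,\tau_\perp\rangle = 0$, computes $J_\mu = 2i\,\mathrm{Im}\langle\tau,\tau_\perp\rangle$ (itself sloppy, since the $\alpha_\mu - \bar{\alpha}_\mu$ contribution is silently dropped), and then concludes: since $\langle\tau,\tau_\perp\rangle = 0$ by orthogonality, $\tau_\perp = 0$. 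That is precisely the non sequitur you flag: $\langle\tau,\tau_\perp\rangle = 0$ holds by construction for \emph{any} vector in the orthogonal complement and cannot force $\tau_\perp = 0$. Your analysis is correct --- for normalized $\tau$, the condition $J_\mu(x)=0$ is equivalent only to $\mathrm{Im}\langle\tau,D_\mu\tau\rangle = 0$, i.e.\ reality of the parallel coefficient, and places no constraint on the orthogonal component $w_\mu$. The converse as stated is therefore false (any $D_\mu\tau$ with nonzero orthogonal part and real parallel coefficient is a counterexample), and your corrected statements (a) and (b) are the strongest true versions. The only improvement I would suggest is to record the counterexample explicitly rather than leaving it implicit in the remark.
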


\begin{proof}
Suppose \( D_\mu \tau = \alpha_\mu \tau \). Then
\[
J_\mu = \langle \tau, \alpha_\mu \tau \rangle - \langle \alpha_\mu \tau, \tau \rangle = \alpha_\mu \langle \tau, \tau \rangle - \bar{\alpha}_\mu \langle \tau, \tau \rangle = 2i\, \mathrm{Im}(\alpha_\mu) \langle \tau, \tau \rangle.
\]
Thus, if \( \alpha_\mu \in \mathbb{R} \), or if \( \langle \tau, \tau \rangle = 0 \), then \( J_\mu = 0 \).

Conversely, suppose \( J_\mu = 0 \) and \( \langle \tau, \tau \rangle = 1 \). Write
\[
D_\mu \tau = \alpha_\mu \tau + \tau_\perp, \quad \text{with } \langle \tau, \tau_\perp \rangle = 0.
\]
Then
\[
J_\mu = \langle \tau, \alpha_\mu \tau + \tau_\perp \rangle - \langle \alpha_\mu \tau + \tau_\perp, \tau \rangle = 2i\, \mathrm{Im} \langle \tau, \tau_\perp \rangle.
\]
So \( J_\mu = 0 \) implies \( \mathrm{Im} \langle \tau, \tau_\perp \rangle = 0 \). Since \( \langle \tau, \tau_\perp \rangle = 0 \) by orthogonality, we get \( \tau_\perp = 0 \), hence \( D_\mu \tau = \alpha_\mu \tau \).

Therefore, under normalization, \( J_\mu = 0 \) implies \( D_\mu \tau \propto \tau \) with real proportionality factor.
\end{proof}

\begin{theorem}[Emergent Metric from Measurement Projection]\label{thm:emergent_metric}
Let \( \tau(x) \in \Gamma(E) \) be a smooth internal time field, and let \( \rho(x) \geq 0 \) and \( \mu[\tau(x)] \in T_x M \) define the local projection density and classical direction of temporal flow. Then the effective metric
\begin{equation}
g^{\text{eff}}_{\mu\nu}(x) = \eta_{\mu\nu} + \kappa \int_{\Sigma_x} \mu[\tau(y)]_\mu \mu[\tau(y)]_\nu \rho(y) d\Sigma(y)
\end{equation}
is symmetric, smooth, and observer-relative. Its causal structure reflects the accumulation of measurement-induced projection aligned with modular coherence.
\end{theorem}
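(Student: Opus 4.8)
The plan is to verify the three asserted structural properties — symmetry, smoothness, observer-relativity — in turn, and then to extract the causal-structure statement from a perturbative signature argument. \textbf{Symmetry} I would dispatch first, mainly to fix conventions: the integrand $\mu[\tau(y)]_\mu\,\mu[\tau(y)]_\nu$ is the outer product of a covector with itself, hence symmetric under $\mu\leftrightarrow\nu$ pointwise in $y$; since $\eta_{\mu\nu}$ is symmetric and $\int_{\Sigma_x}(\,\cdot\,)\,d\Sigma(y)$ is linear, $g^{\text{eff}}_{\mu\nu}(x)=g^{\text{eff}}_{\nu\mu}(x)$. The one point requiring care is that $\mu[\tau(y)]$ naturally lives at $y$, so to integrate it as a tensor based at $x$ one fixes a transport rule carrying $T^*_yM\to T^*_xM$ along the distinguished causal curve in $\Sigma_x$ from $y$ to $x$ (via the background Levi--Civita connection, or the fiber connection $A$); with that rule in place the integrand is a well-defined section of $T^*_xM\otimes T^*_xM$ and the sum with $\eta_{\mu\nu}$ makes sense.

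\textbf{Smoothness} needs hypotheses not fully spelled out in the statement, which I would make explicit: (i) $\tau\in\Gamma(E)$ is smooth and the projection map $\mu[\,\cdot\,]$ is smooth, so $y\mapsto\mu[\tau(y)]$ is a smooth covector field; (ii) $\rho$ is smooth and $\rho|_{\Sigma_x}$ is integrable, e.g.\ compactly supported on each slice or suitably decaying; and (iii) $\{\Sigma_x\}_{x\in M}$ is a smooth family of hypersurfaces — the observer's causal screens — with smoothly varying induced volume form. Granting these, I would push the integral forward to a fixed reference slice $\Sigma_0$ via the diffeomorphisms realizing the family, so that all $x$-dependence sits inside a smooth integrand (including the smooth transport map), and then apply the Leibniz rule for differentiation under the integral sign, justified by dominated convergence with a local integrable majorant for each $x$-derivative. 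Iterating yields $g^{\text{eff}}_{\mu\nu}\in C^\infty(M)$.

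\textbf{Observer-relativity} is structural rather than computational: the input $\Sigma_x$ is determined by the observer's worldline and four-velocity $n^\mu$ at $x$ (its past causal domain / modular screen), so two observers with distinct $n^\mu$ integrate over distinct slices and distinct transport paths and obtain distinct tensors. I would record this by displaying the dependence as $g^{\text{eff}}_{\mu\nu}=g^{\text{eff}}_{\mu\nu}[\Sigma_x,n]$ and noting that it collapses to $\eta_{\mu\nu}$ precisely in the coherence-free limit $F\to0,\ \rho\to0$, consistent with Theorem~\ref{thm:flatness_synchronization}. For the \textbf{causal-structure} claim, each $\mu[\tau(y)]_\mu\mu[\tau(y)]_\nu$ is rank-one positive semidefinite and $\rho\ge0$, so $h_{\mu\nu}:=\kappa\int_{\Sigma_x}\mu[\tau(y)]_\mu\mu[\tau(y)]_\nu\,\rho(y)\,d\Sigma(y)$ is positive semidefinite for $\kappa>0$; writing $g^{\text{eff}}=\eta+h$ and diagonalizing $\eta^{-1}h$, its eigenvalues move continuously with $\kappa$, so for $|\kappa|$ below a threshold fixed by $\|\eta^{-1}h\|$ the signature of $g^{\text{eff}}$ equals that of $\eta$, $g^{\text{eff}}$ is a genuine Lorentzian metric, and its null cones tilt along the accumulated modular direction $\int_{\Sigma_x}\mu[\tau(y)]\,\rho(y)\,d\Sigma(y)$ — the asserted alignment.

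\textbf{Main obstacle.} The step I expect to be the real work is hypothesis (iii) together with the transport prescription flagged under symmetry: turning $y\mapsto\mu[\tau(y)]$ into an honest tensor at $x$ forces a choice of how to carry $T^*_yM$ to $T^*_xM$, and without pinning this down the integral is ambiguous. Establishing that $g^{\text{eff}}$ is independent of the \emph{remaining} auxiliary data — or, failing that, precisely cataloguing which data it legitimately depends on — is where the genuine content lies, and it is exactly here that the holonomy of $F$ (Theorems~\ref{thm:flatness_synchronization} and~\ref{thm:curvature_obstruction}) re-enters: in a curved time-fiber the transport is path-dependent, so ``smooth and symmetric'' comes cheaply, but ``canonically defined'' does not, and the honest reading of the theorem is that $g^{\text{eff}}$ is well-defined relative to a fixed screen, transport rule, and observer frame.
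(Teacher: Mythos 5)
Your proposal follows essentially the same route as the paper's proof: symmetry from the rank-one integrand $\mu[\tau(y)]_\mu\mu[\tau(y)]_\nu$, smoothness via differentiation under the integral over a smoothly varying family $\Sigma_x$, observer-relativity from the foliation dependence of $\Sigma_x$ and $\mu[\tau]$, and causal structure from the dominance of the accumulated timelike direction (your perturbative signature-preservation argument is a sharper version of the paper's informal ``lightcone narrows around $v$'' discussion and of the $\alpha<1$ computation in the subsequent causal-compatibility theorem). The one substantive point where you go beyond the paper is in flagging the transport of $\mu[\tau(y)]_\mu\mu[\tau(y)]_\nu$ from $T^*_yM$ to $T^*_xM$ as an auxiliary choice that must be fixed before the integral defines a tensor at $x$; the paper's proof integrates these objects as if already based at $x$ without comment, so the ``main obstacle'' you identify is a genuine gap in the original argument rather than in yours.
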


\begin{proof}
Let \( \tau(x) \in \Gamma(E) \) be a smooth section of the internal time bundle over spacetime \( M \), where \( E \to M \) is a fiber bundle encoding internal temporal degrees of freedom. The projection map \( \mu[\tau(x)] \in T_x M \) represents the classical temporal direction selected by measurement, assumed to be a smooth functional of \( \tau \). The density function \( \rho(x) \geq 0 \) encodes the relative weight or frequency of measurement-induced collapse into direction \( \mu[\tau(x)] \) at point \( x \in M \). 

Let \( \Sigma_x \subset M \) denote a compact spacelike hypersurface (or more generally, a causal neighborhood) containing \( x \). The integral
\[
g^{\text{eff}}_{\mu\nu}(x) = \eta_{\mu\nu} + \kappa \int_{\Sigma_x} \mu[\tau(y)]_\mu \mu[\tau(y)]_\nu \rho(y) d\Sigma(y)
\]
defines a symmetric rank-2 covariant tensor at \( x \), since the integrand \( \mu[\tau(y)]_\mu \mu[\tau(y)]_\nu \) is symmetric in \( \mu \) and \( \nu \), and the weight \( \rho(y) \) is scalar-valued and nonnegative.

Smoothness of \( g^{\text{eff}}_{\mu\nu}(x) \) follows from the smoothness of the integrand: both \( \mu[\tau(y)] \) and \( \rho(y) \) are smooth by assumption, and the domain \( \Sigma_x \) varies smoothly with \( x \) under local foliation, permitting differentiation under the integral sign. Thus \( g^{\text{eff}} \) inherits smoothness from its integrand.

Observer-relativity arises because both \( \Sigma_x \) and the projection map \( \mu[\tau(y)] \) are defined with respect to a specific foliation or causal patch. Different observers may choose different foliations \( \Sigma_x^{(O)} \), inducing different effective metrics \( g^{\text{eff},(O)}_{\mu\nu}(x) \), though the underlying internal time field \( \tau \) is shared. This relativity reflects the dependence of collapse statistics on the measurement frame.

Finally, causal structure is shaped by the alignment and distribution of the vectors \( \mu[\tau(y)] \) over \( \Sigma_x \). If the projections concentrate along a timelike direction \( v^\mu \), the tensor \( g^{\text{eff}}_{\mu\nu} \) acquires a dominant component \( \propto v_\mu v_\nu \), and the effective lightcone narrows around \( v \). In the limit where \( \rho(y) \) collapses to a delta distribution at a point \( y_0 \), the emergent metric locally approximates a rank-1 projection \( \mu[\tau(y_0)]_\mu \mu[\tau(y_0)]_\nu \), exhibiting ultralocal temporal flow.

Therefore, \( g^{\text{eff}}_{\mu\nu} \) is symmetric, smooth, observer-relative, and encodes the integrated projection structure of quantum measurement aligned with the coherence geometry of \( \tau \).
\end{proof}

%\begin{figure}[!ht]
  %\centering
  %\includegraphics[width=0.5\textwidth]{A_schematic_diagram_in_the_image_illustrates_the_e.png}
  %\caption{Illustration of measurement-induced projection defining an emergent metric \( g^{\text{eff}}_{\mu\nu} \) through weighted accumulation of coherence vectors \( \mu[\tau] \) across a causal surface \( \Sigma_x \).}
  %\label{fig:projection_metric}
%\end{figure}

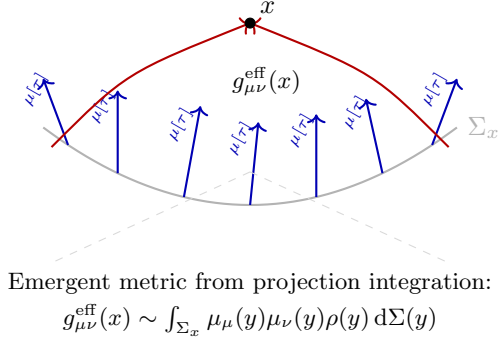
\begin{figure}[ht]
\centering
\begin{tikzpicture}[scale=1.1]

% Causal surface Sigma_x
\draw[thick, gray!60, domain=-2.5:2.5, smooth, variable=\x]
    plot ({\x}, {-0.8 + 0.15*\x*\x}) node[right] {\small $\Sigma_x$};

% Projection vectors on Sigma_x
\foreach \x/\dx/\dy in {-2.2/-0.3/0.8, -1.6/0.0/1.0, -0.8/0.2/1.1, 0/0.1/1.0, 0.8/0.0/1.0, 1.6/-0.2/0.9, 2.2/0.3/0.8} {
    \draw[->, thick, blue!70!black]
        (\x, {-0.8 + 0.15*(\x)^2}) -- ++(\dx,\dy)
        node[pos=0.7, rotate=60, above] {\tiny $\mu[\tau]$};
}

% Integration indication (curved arrows)
\draw[->, red!70!black, thick]
    (-2.4,-0.1) .. controls (-1.5,0.8) .. (0.0,1.4);

\draw[->, red!70!black, thick]
    (2.4,-0.1) .. controls (1.5,0.8) .. (0.0,1.4);

% Point x where metric is defined
\filldraw[black] (0,1.4) circle (0.06);
\node[above right] at (0,1.4) {\small $x$};
\node at (0.2,0.7) {\footnotesize $g^{\text{eff}}_{\mu\nu}(x)$};

% Optional background for Minkowski space (light gray cone)
\draw[dashed, gray!30] (0,-0.4) -- (2.4,-1.5);
\draw[dashed, gray!30] (0,-0.4) -- (-2.4,-1.5);

% Caption-like annotation
\node[align=center] at (0,-2.0) {
  \footnotesize Emergent metric from projection integration: \\
  \footnotesize $g^{\text{eff}}_{\mu\nu}(x) \sim \int_{\Sigma_x} \mu_\mu(y) \mu_\nu(y) \rho(y) \, \mathrm{d}\Sigma(y)$
};

\end{tikzpicture}
\caption{Illustration of measurement-induced projection defining an emergent metric \( g^{\text{eff}}_{\mu\nu}(x) \) through weighted accumulation of coherence vectors \( \mu[\tau(y)] \) across a causal surface \( \Sigma_x \).}
\label{fig:projection_metric}
\end{figure}

%%%%%%%%%%%%%%%%%%%%%%%%%%%
\subsection{Emergent Geometry from Projection Histories}\label{sec:2.9}

In the MTG framework, classical spacetime geometry is not a primitive background but arises from the informational structure generated by measurement events acting on the internal time field \( \tau \). Each measurement partially collapses the fiber degrees of freedom, producing a localized projection \( \mu[\tau(y)] \) and contributing to an evolving, observer-relative causal structure. The classical geometry experienced by an observer at point \( x \in M \) emerges from the cumulative influence of such projections within their causal past.

Let \( \mu[\tau(y)] \in T_y M \) denote the projected direction of coherence at point \( y \in M \), and let \( \rho(y) \) represent the scalar density encoding the local frequency or intensity of projection events. The effective metric at a point \( x \in M \) is defined by aggregating these contributions over a hypersurface \( \Sigma_x \subset M \) lying in the causal past of \( x \)—such as a portion of its past light cone. This yields the emergent, observer-relative metric:
\begin{equation}
g^{\mathrm{eff}}_{\mu\nu}(x) = \eta_{\mu\nu} + \kappa \int_{\Sigma_x} \mu[\tau(y)]_\mu \, \mu[\tau(y)]_\nu \, \rho(y) \, \mathrm{d}\Sigma(y),
\end{equation}
where \( \kappa \) is a coupling constant and \( \mathrm{d}\Sigma(y) \) is the invariant volume element on \( \Sigma_x \). Each integrand term defines a localized rank-one symmetric tensor, determined by the coherence direction \( \mu[\tau(y)] \) and its projection weight \( \rho(y) \). The integral over \( \Sigma_x \) combines these into a smooth, symmetric, observer-dependent effective geometry.

The metric \( g^{\mathrm{eff}}_{\mu\nu}(x) \) thus encodes the statistical imprint of quantum measurement on causal structure. It serves as a kinematic background throughout the present section but plays a deeper role dynamically. In particular, Section~\ref{sec:6.1} formulates a variational principle in which \( g^{\mathrm{eff}}_{\mu\nu} \) arises from extremizing the projection entropy subject to geometric backreaction, thereby linking coherence loss to emergent gravitational dynamics.

In regimes where projection events are dense and coherent, the effective metric converges to a smooth Lorentzian geometry approximating classical spacetime. In contrast, sparse, disordered, or topologically obstructed projection distributions may induce singularities, discontinuities, or non-smooth behavior in the emergent geometry. Such departures mark the breakdown of classical spacetime and affirm MTG’s core insight: geometry is not fundamental, but an emergent statistical consequence of coherence and measurement.

\begin{theorem}[Symmetry and Smoothness of the Emergent Metric]
Let \( \mu[\tau(y)]_\mu \) be a smooth projection map assigning covectors in \( T^*_y M \) to points on a smooth hypersurface \( \Sigma_x \subset M \), and let \( \rho(y) \) be a smooth, non-negative scalar field compactly supported on \( \Sigma_x \). Then the emergent metric
\begin{equation}
g^{\mathrm{eff}}_{\mu\nu}(x) = \eta_{\mu\nu} + \kappa \int_{\Sigma_x} \mu[\tau(y)]_\mu \, \mu[\tau(y)]_\nu \, \rho(y) \, \mathrm{d}\Sigma(y)
\end{equation}
is smooth in \( x \), symmetric in indices \( \mu, \nu \), and positive-definite on the support of \( \rho \), provided \( \mu[\tau(y)]_\mu \) are future-directed timelike vectors.
\end{theorem}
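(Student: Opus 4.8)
The plan is to treat the three assertions — symmetry, smoothness, and positive-definiteness — separately, since this statement is a sharpened restatement of Theorem~\ref{thm:emergent_metric} in which the genuinely new content is the positivity claim; the symmetry and smoothness parts go through essentially verbatim. Symmetry is immediate: the integrand $\mu[\tau(y)]_\mu \mu[\tau(y)]_\nu$ is the symmetric outer product of a covector with itself, $\eta_{\mu\nu}$ is symmetric, and integration over $\Sigma_x$ against the scalar density $\rho(y)\,\mathrm{d}\Sigma(y)$ preserves symmetry in the free indices, so $g^{\mathrm{eff}}_{\mu\nu}(x) = g^{\mathrm{eff}}_{\nu\mu}(x)$ with no further work.

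For smoothness in $x$, I would first fix a smooth local foliation of a neighborhood of $x$ by the hypersurfaces $\{\Sigma_{x'}\}$, together with a model hypersurface $\Sigma_0$ and smooth embeddings $\iota_{x'}\colon \Sigma_0 \to M$ with image $\Sigma_{x'}$ depending smoothly on $x'$; this is available because, by hypothesis, $\Sigma_x$ sits in the causal past of $x$ and the foliation/causal data vary smoothly. Pulling the integral back to $\Sigma_0$ gives $g^{\mathrm{eff}}_{\mu\nu}(x') = \eta_{\mu\nu} + \kappa \int_{\Sigma_0} (\iota_{x'}^{*}\mu)_\mu (\iota_{x'}^{*}\mu)_\nu\,(\rho\circ\iota_{x'})\,J_{x'}\,\mathrm{d}\mathrm{vol}_0$, where $J_{x'}$ is the smooth Jacobian relating $\iota_{x'}^{*}\mathrm{d}\Sigma$ to the fixed volume form on $\Sigma_0$. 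Since $\mu[\tau(\cdot)]$ and $\rho$ are smooth and $\rho$ is compactly supported, the integrand is jointly smooth in $(x',\text{point of }\Sigma_0)$ with compact support in the $\Sigma_0$-variable, so the Leibniz rule for differentiation under the integral sign applies to all orders and $g^{\mathrm{eff}}_{\mu\nu}$ is smooth in $x$.

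The genuine obstacle is positive-definiteness. Contracting with $v \in T_xM$ gives $g^{\mathrm{eff}}_{\mu\nu}(x)v^\mu v^\nu = \eta_{\mu\nu}v^\mu v^\nu + \kappa \int_{\Sigma_x}\big(\mu[\tau(y)]_\mu v^\mu\big)^2 \rho(y)\,\mathrm{d}\Sigma(y)$, so the correction $C_{\mu\nu}(x) := \kappa\int_{\Sigma_x}\mu_\mu\mu_\nu\rho\,\mathrm{d}\Sigma$ is manifestly positive-semidefinite for $\kappa>0$, and it is of full rank precisely when the covectors $\{\mu[\tau(y)] : y\in\operatorname{supp}\rho\}$ span $T^*_xM$ — note a single timelike direction (all $\mu$ parallel) yields only a rank-one $C$, which cannot cure the indefiniteness of $\eta$. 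Because $\eta_{\mu\nu}$ is Lorentzian, positive-semidefiniteness of $C$ is not by itself enough; I would close the gap by (i) adopting the reading actually used physically, namely that $g^{\mathrm{eff}}$ is positive on the future timelike cone adapted to the dominant projection direction, where future-directed timelikeness of the $\mu$'s fixes the correct sign of the temporal contribution; and, for a genuine statement on all of $T_xM$, (ii) adding the explicit largeness hypothesis that the smallest eigenvalue of $C_{\mu\nu}(x)$ in a fixed auxiliary Euclidean frame exceeds the modulus of the negative eigenvalue of $\eta_{\mu\nu}$, whence Weyl's inequality for sums of symmetric matrices makes $g^{\mathrm{eff}}_{\mu\nu}(x) = \eta_{\mu\nu} + C_{\mu\nu}(x)$ positive-definite. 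I would state this dominance condition as the quantitative content hiding behind "future-directed timelike", which is consistent with the paper's stance that a smooth, classical (here even Riemannian-looking) geometry emerges only in regimes of dense, coherent measurement.
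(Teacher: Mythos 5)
Your proof follows the same skeleton as the paper's: symmetry from the rank-one outer product, smoothness from smoothness of the integrand, and the contraction identity
\[
g^{\mathrm{eff}}_{\mu\nu}(x)v^\mu v^\nu = \eta_{\mu\nu}v^\mu v^\nu + \kappa\int_{\Sigma_x}\bigl(\mu[\tau(y)]_\mu v^\mu\bigr)^2\rho(y)\,\mathrm{d}\Sigma(y)
\]
for the signature analysis. Where you differ is in rigor and honesty about the third claim, and both differences are to your credit. On smoothness, the paper merely asserts that integration preserves smoothness; your pullback to a fixed model hypersurface $\Sigma_0$ via smooth embeddings, with compact support justifying differentiation under the integral to all orders, is the argument the paper should have given, since the domain $\Sigma_x$ itself varies with $x$. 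On positive-definiteness, you have correctly diagnosed that the theorem as stated is not provable from its hypotheses: the correction term is only positive-semidefinite (rank one if all the $\mu$'s are parallel), and adding it to a Lorentzian $\eta_{\mu\nu}$ cannot yield a positive-definite tensor without a quantitative dominance condition. The paper's own proof tacitly concedes this --- it observes that the first term is negative, then retreats to ``if the projection density dominates\ldots more generally, the metric remains Lorentzian,'' which contradicts the positive-definiteness asserted in the theorem statement. Your Weyl-inequality formulation (smallest eigenvalue of the correction in an auxiliary Euclidean frame exceeding the modulus of the negative eigenvalue of $\eta$) is the correct way to make the missing hypothesis explicit, and your alternative reading (positivity only on the cone adapted to the dominant projection direction) matches what the paper actually uses downstream. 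In short: same route, but you have identified and patched a gap that the paper's proof leaves open.
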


\begin{proof}
Each integrand term \( \mu_\mu \mu_\nu \) is symmetric and smooth, and the scalar weight \( \rho(y) \, \mathrm{d}\Sigma(y) \) preserves this smoothness under integration. Symmetry follows trivially from the rank-one structure of the tensor product. For positive-definiteness, let \( v^\mu \) be a future-directed timelike vector; then
\[
g^{\mathrm{eff}}_{\mu\nu}(x) v^\mu v^\nu = \eta_{\mu\nu} v^\mu v^\nu + \kappa \int_{\Sigma_x} (\mu^\mu v_\mu)^2 \rho(y) \, \mathrm{d}\Sigma(y).
\]
The first term is negative under the Lorentzian signature, but the second is strictly positive when \( \mu^\mu v_\mu \neq 0 \). If the projection density dominates, the second term determines the causal structure. More generally, the metric remains Lorentzian but modified by the coherence structure.
\end{proof}

\begin{theorem}[Causal Compatibility of Emergent Metric]
Suppose projection directions are aligned with a global future-directed timelike vector field \( n^\mu \), so that \( \mu[\tau(y)]^\mu = f(y) n^\mu \) with \( f(y) > 0 \). Then the emergent metric
\[
g^{\mathrm{eff}}_{\mu\nu}(x) = \eta_{\mu\nu} + \alpha \, n_\mu n_\nu, \quad \alpha := \kappa \int_{\Sigma_x} f(y)^2 \rho(y) \, \mathrm{d}\Sigma(y)
\]
preserves the causal structure defined by \( n^\mu \). That is, \( n^\mu \) remains timelike with respect to \( g^{\mathrm{eff}} \) provided \( \alpha < 1 \).
\end{theorem}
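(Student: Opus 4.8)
The plan is to collapse the statement to a short signature computation, after first deriving the displayed rank-one form from the integral definition of \( g^{\mathrm{eff}} \). I would begin by fixing conventions: \( \eta_{\mu\nu} \) carries Lorentzian signature \( (-,+,\dots,+) \), so a vector \( v^\mu \) is timelike iff \( \eta_{\mu\nu}v^\mu v^\nu<0 \), and lowered indices on \( n \) and on \( \mu[\tau] \) are taken with the background metric, \( n_\mu:=\eta_{\mu\nu}n^\nu \). Under the hypothesis \( \mu[\tau(y)]^\mu=f(y)\,n^\mu \) one has \( \mu[\tau(y)]_\mu=f(y)\,n_\mu \); since \( n^\mu \) is a single global field, the covector dyad \( n_\mu n_\nu \) is constant over \( \Sigma_x \) (under the flat-background identification) and factors out of the integral of Theorem~\ref{thm:emergent_metric}, leaving
\[
g^{\mathrm{eff}}_{\mu\nu}(x)=\eta_{\mu\nu}+\alpha\,n_\mu n_\nu,\qquad \alpha=\kappa\int_{\Sigma_x}f(y)^2\rho(y)\,\mathrm{d}\Sigma(y).
\]
Because \( \kappa>0 \), \( f>0 \), \( \rho\ge 0 \), and \( \rho \) is compactly supported on \( \Sigma_x \), the number \( \alpha \) is well-defined, finite, and nonnegative, so the hypothesis places it in \( [0,1) \).

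Next I would contract with \( n^\mu \). Normalizing \( n^\mu \) to unit norm, \( \eta_{\mu\nu}n^\mu n^\nu=-1 \) --- a positive rescaling of \( n \) affects neither the hypothesis nor the conclusion and makes the stated threshold sharp --- gives \( n_\mu n^\mu=-1 \) and hence
\[
g^{\mathrm{eff}}_{\mu\nu}\,n^\mu n^\nu=\eta_{\mu\nu}n^\mu n^\nu+\alpha\,(n_\mu n^\mu)^2=-1+\alpha=-(1-\alpha)<0,
\]
the last inequality because \( \alpha<1 \). Thus \( n^\mu \) is timelike for \( g^{\mathrm{eff}} \), which is the assertion.

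To justify the label ``causal compatibility,'' I would additionally check that \( g^{\mathrm{eff}} \) is still a genuine Lorentzian metric, so that ``timelike'' retains its usual meaning. In the rest frame of \( n \), where \( \eta=\operatorname{diag}(-1,1,\dots,1) \) and \( n_\mu n_\nu \) has only its \( (0,0) \)-entry nonzero and equal to \( 1 \), one reads off \( g^{\mathrm{eff}}=\operatorname{diag}(\alpha-1,1,\dots,1) \), which has exactly one negative eigenvalue precisely when \( \alpha<1 \); equivalently the matrix-determinant lemma gives \( \det g^{\mathrm{eff}}=(1-\alpha)\det\eta \), so the signature is unchanged for \( \alpha\in[0,1) \). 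I would close with the remark that for any \( v^\mu \),
\[
g^{\mathrm{eff}}_{\mu\nu}v^\mu v^\nu=\eta_{\mu\nu}v^\mu v^\nu+\alpha\,(n_\mu v^\mu)^2\ge \eta_{\mu\nu}v^\mu v^\nu ,
\]
so every \( g^{\mathrm{eff}} \)-timelike vector is already \( \eta \)-timelike: the emergent lightcone sits inside the background one and no new causal relations are introduced, which is the precise sense in which the causal structure determined by \( n^\mu \) is respected.

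There is no genuine obstacle here; the plan is essentially bookkeeping. The points needing care are the signature convention (a sign slip inverts the conclusion), the harmless unit normalization of \( n^\mu \) (without it the threshold reads \( \alpha\,|\eta_{\mu\nu}n^\mu n^\nu|<1 \)), and confirming that \( \alpha \) is finite and nonnegative --- finiteness from the compact support of \( \rho \), nonnegativity from \( \kappa>0 \). Were \( \kappa \) allowed to be negative, the relevant condition would instead become \( \alpha>-1 \), but under the paper's conventions \( \kappa>0 \) and \( \alpha<1 \) is the clean statement.
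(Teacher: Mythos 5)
Your proof is correct and its core step --- contracting \(g^{\mathrm{eff}}_{\mu\nu}\) with \(n^\mu n^\nu\) under the normalization \(n^\mu n_\mu=-1\) to obtain \(-1+\alpha<0\) iff \(\alpha<1\) --- is exactly the computation in the paper's proof. Your additional checks (factoring the dyad out of the integral, verifying the Lorentzian signature via the determinant, and noting that the emergent lightcone nests inside the background one) go beyond what the paper records but do not change the argument.
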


\begin{proof}
Compute the contraction:
\[
g^{\mathrm{eff}}_{\mu\nu} n^\mu n^\nu = \eta_{\mu\nu} n^\mu n^\nu + \alpha (n^\mu n_\mu)^2 = -1 + \alpha,
\]
assuming standard normalization \( n^\mu n_\mu = -1 \). Thus \( n^\mu \) remains timelike when \( \alpha < 1 \), null when \( \alpha = 1 \), and spacelike if \( \alpha > 1 \). For small \( \kappa \) or sufficiently weak projections, causal structure is preserved.
\end{proof}

This construction exemplifies MTG’s central claim: classical geometry arises not from a fixed background but from the statistical accumulation of projection-induced coherence. The effective metric \( g^{\mathrm{eff}}_{\mu\nu} \) encodes how entanglement, measurement, and modular alignment collectively define causal relations, temporal direction, and geometric structure from fundamentally quantum data.

\section{Dynamical Equations}\label{sec:3}
Having established the geometric foundations of internal time, we now turn to its dynamical behavior and quantization. In MTG, the dynamics of the time field \( \tau \) are governed by gauge-covariant field equations derived from variational principles, while measurement events introduce discrete, non-unitary updates. This duality—between smooth gauge-covariant flow and sudden collapse—requires a hybrid formalism that blends classical field theory with quantum projection.

We analyze these dynamics at three levels. First, we formulate a classical gauge-covariant action governing the evolution of \( \tau \) and the curvature \( F \). Second, we introduce path-integral quantization under projection constraints to describe how measurement reshapes quantum amplitudes. Finally, we construct a canonical quantization framework in which the fiber-valued time field becomes an operator acting on a Hilbert space of coherent histories.

This section formalizes these ideas, beginning with the classical action and leading to its quantized extensions. Throughout, the fiber bundle structure and the geometry of projection play central roles, as quantization is not performed over spacetime points alone but over sections of time-valued fields.

The total Lagrangian density of the theory is defined by the sum
\begin{equation}
\mathcal{L}_{\mathrm{MTG}} = \mathcal{L}_F + \mathcal{L}_\tau + \mathcal{L}_{\mathrm{meas}} + \mathcal{L}_{\mathrm{geom}},
\end{equation}
with each term corresponding to a distinct aspect of the dynamics. The curvature term \( \mathcal{L}_F \) governs the gauge evolution of the time-fiber connection through a Yang--Mills-type action, suppressing large holonomies and favoring local coherence in the internal time bundle. The temporal field contribution \( \mathcal{L}_\tau \) includes the covariant kinetic and potential terms previously derived, encoding the propagation and localization of internal time. Measurement-induced effects are introduced through the non-Hermitian interaction term \( \mathcal{L}_{\mathrm{meas}} \), which accounts for the decoherence and information flow associated with projection. Finally, \( \mathcal{L}_{\mathrm{geom}} \) captures the feedback from projection histories to the emergent metric and becomes active when geometry itself is treated as a dynamical field.

In what follows, we perform the variational analysis of the total action with respect to the fundamental fields \( \tau \), \( A_\mu \), and \( \rho(x) \). We derive their equations of motion and examine the underlying symmetries of the MTG action, including gauge invariance and supersymmetric extensions. These variational principles serve not only to determine the dynamics of internal time and coherence, but also to reveal how causality, geometry, and information become entangled in a quantum-mechanical origin of spacetime.
%%%%%%%%%%%
\subsection{Variation with Respect to the Temporal Field}

To determine the evolution of the internal time field \( \tau \), we vary the total MTG action with respect to \( \bar{\tau} \), treating \( \tau \) and \( \bar{\tau} \) as independent complex fields for the purpose of functional differentiation. The variation encompasses contributions from the covariant kinetic term, the potential term, and the measurement-induced interaction, each of which plays a distinct role in governing the dynamics of coherence and projection.

The kinetic and potential contributions were introduced in Section~\ref{sec:2} and together define the standard covariant Lagrangian in eq.~\eqref{eq:cov-lag}. Varying this term with respect to \( \bar{\tau} \) yields the covariant Klein--Gordon-type equation
\begin{equation}
D_\mu D^\mu \tau + \frac{\delta V}{\delta \bar{\tau}} = 0.
\end{equation}
We assume that the background observer field \( n^\mu \) is covariantly constant under the spacetime connection:
\begin{equation}
D_\mu n^\nu = 0.
\end{equation}
This ensures that contractions such as \( D^\mu \tau \, n_\mu \) and \( n^\mu D_\mu \tau \) are well-defined and commute with covariant differentiation, enabling a consistent interpretation of their difference as a frame-dependent coherence exchange. The variation of \( \mathcal{L}_{\mathrm{meas}} \); see eq.~\eqref{eq:L_meas}, with respect to \( \bar{\tau} \) introduces an effective dissipative term:
\begin{equation}
\delta_{\bar{\tau}} \mathcal{L}_{\mathrm{meas}} = \lambda \, \rho(x) \left( D^\mu \tau \, n_\mu - n^\mu D_\mu \tau \right),
\end{equation}
which, under the assumed compatibility condition \( D_\mu n^\nu = 0 \), simplifies to a real-valued correction term aligned with the coherence flow. Combining these contributions, the full equation of motion for \( \tau \) becomes
\begin{equation}
D_\mu D^\mu \tau + \frac{\delta V}{\delta \bar{\tau}} = \lambda \, \rho(x) \left( n^\mu D_\mu \tau - D^\mu \tau \, n_\mu \right).
\end{equation}

This expression captures the interplay between the intrinsic dynamics of the temporal field and the extrinsic influence of observation and collapse. The right-hand side introduces a first-order, non-Hermitian term that breaks time-reversal symmetry and signals a loss of coherence due to measurement. Formally, it resembles a damping term in an open quantum system, where evolution is no longer unitary. We interpret $\lambda \rho(x) n^\mu D_\mu \tau$ as an effective Lindblad-type drift term, aligned with the observer's temporal congruence $n^\mu$ and weighted by the projection density $\rho(x)$. While the left-hand side is a gauge-covariant Klein--Gordon-type operator, the right-hand side introduces dissipation without violating Lorentz covariance, since $n^\mu$ and $\rho(x)$ are treated as external, observer-defined fields. The term $n^\mu D_\mu \tau$ can be viewed as a coherence current flowing along the measurement direction, and its presence drives the system toward modular alignment and classical temporal order.

In this setting, the evolution of \( \tau \) is no longer strictly unitary. The presence of the coherence current and its contraction with \( n^\mu \) reflects a preferred direction of temporal flow determined by the measurement process. This structure generalizes the Lindblad equation from quantum open systems theory to a fiber-valued field theoretic context, embedding decoherence and information flow directly into the geometry of internal time.

The resulting dynamics do not conserve the norm of \( \tau \), consistent with the idea that projection reduces internal degrees of freedom and drives the system toward classical alignment. The field \( \tau \), once freely evolving under the gauge connection, becomes tethered to a directional flow defined by the structure of observation. The combination of variational principle, gauge structure, and modular coherence defines a non-Hermitian, information-driven evolution equation at the heart of the MTG framework.
%%%%%%%%%%%%%%%%%%%%%%%%%%%%%
\subsection{Variation with Respect to the Connection}

The connection \( A_\mu \) on the time-fiber bundle encodes the gauge structure of internal temporal coherence. Its curvature \( F_{\mu\nu} = \partial_\mu A_\nu - \partial_\nu A_\mu + [A_\mu, A_\nu] \) measures the failure of global synchronizability and acts as a field strength for modular entanglement. To determine its dynamics, we vary the total MTG action with respect to \( A_\mu \), collecting contributions from the gauge curvature term, the covariant coupling to \( \tau \), and the measurement interaction.

The pure gauge contribution to the Lagrangian is given by the Yang--Mills-type term whose variation yields the standard gauge field equation;~\eqref{eq:y-m}, 
\begin{equation}
D^\nu F_{\nu\mu} = J^{\mathrm{gauge}}_\mu,
\end{equation}
where \( J^{\mathrm{gauge}}_\mu \) collects all current-like source terms that couple to \( A_\mu \). These sources arise from both the dynamics of \( \tau \) and the modular measurement process.

The kinetic term for \( \tau \) contributes a minimal coupling current via its covariant derivative:
\begin{equation}
\delta_{A_\mu} \mathcal{L}_\tau = \mathrm{Re} \left\langle \frac{\delta \mathcal{L}_\tau}{\delta (D_\mu \tau)}, [A_\mu, \tau] \right\rangle,
\end{equation}
which, under variation, yields the gauge-covariant matter current
\begin{equation}
J^{\mu}_{\tau} = i \left( \tau^\dagger T^a D^\mu \tau - (D^\mu \tau)^\dagger T^a \tau \right).
\end{equation}
Here \( T^a \) are anti-Hermitian generators of the Lie algebra \( \mathfrak{g}_{\mathrm{time}} \), acting on the fiber via a unitary representation. This ensures that \( J^\mu_\tau \) is Hermitian-valued, consistent with its interpretation as a conserved physical current. The trace in the Yang--Mills equation is taken with respect to this representation. This current plays the role of a temporal analogue of a Noether current associated with local gauge symmetry in the time-fiber bundle.

The measurement term also depends implicitly on \( A_\mu \) through the covariant derivatives inside \( J^\mu = \langle \tau, D^\mu \tau \rangle - \langle D^\mu \tau, \tau \rangle \). The variation of \( \mathcal{L}_{\mathrm{meas}} = \lambda \rho(x) J^\mu n_\mu \) with respect to \( A_\mu \) therefore yields an additional dissipative contribution to the gauge current. In this way, projection does not merely collapse \( \tau \); it actively drives curvature through backreaction on the connection.

Putting all contributions together, the gauge field satisfies the modified Yang--Mills equation
\begin{equation}
D^\nu F_{\nu\mu} = J^{\mu}_{\tau} + J^{\mu}_{\mathrm{meas}},
\end{equation}
with both coherent and dissipative components contributing to the evolution of \( A_\mu \). In the absence of measurement (\( \rho = 0 \)), this reduces to a standard gauge theory coupled to matter. However, in regimes where projection dominates, \( A_\mu \) evolves in response to the coherence gradient, incorporating non-unitary effects into the curvature dynamics.

The resulting structure implies that curvature is sourced by entanglement gradients, coherence fluxes, and the flow of information induced by measurement. Unlike conventional gauge theories, the MTG connection is not merely a mediator of local symmetry but a dynamical register of temporal consistency across observers. Its curvature reflects not just topological winding or energetic fields but the degree of misalignment in modular flow across the fiber. This is the essential geometric signature of quantum time in the MTG framework.

%\begin{figure}[!ht]
  %\centering
  %\includegraphics[width=0.75\textwidth]{modular_flow_curvature.png}
  %\caption{Schematic showing curvature-induced modular flow. As internal time coherence (\( \tau \)) evolves across a curved fiber bundle, projection vectors \( \mu[\tau] \) cluster along irreversible modular trajectories, breaking unitary symmetry.}
  %\label{fig:modular_flow}
%\end{figure}

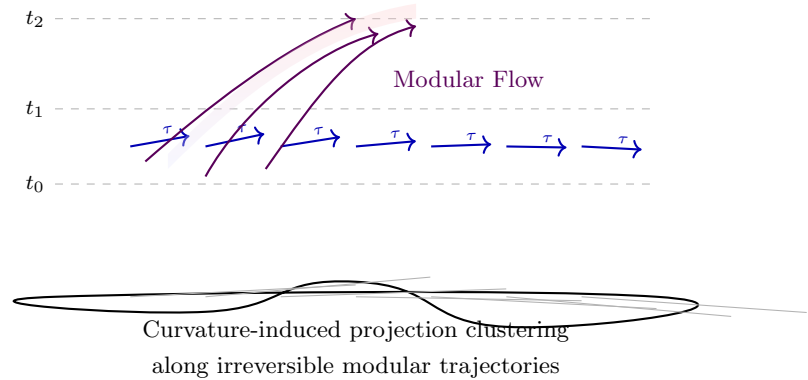
\begin{figure}[ht]
\centering
\begin{tikzpicture}[scale=1.0]

% Base manifold (curved baseline)
\draw[thick] plot[smooth cycle, tension=1] coordinates {
  (-4,-1.5) (-2,-1.7) (0,-1.3) (2,-1.9) (4,-1.5)
};

% Modular time slices
\foreach \y/\label in {0.0/$t_0$, 1.0/$t_1$, 2.2/$t_2$} {
  \draw[dashed, gray!60] (-4,\y) -- (4,\y);
  \node[left] at (-4,\y) {\scriptsize \label};
}

% Fibers with slight misalignment
\foreach \x/\angle in {-3/3, -2/5, -1/2, 0/-1, 1/-3, 2/-5, 3/-4} {
  \draw[gray!60] (\x,-1.5) -- ++({\angle}:3);
}

% Tau vectors transported along fibers
\foreach \x/\angle in {-3/10, -2/12, -1/9, 0/5, 1/2, 2/-1, 3/-3} {
  \draw[->, thick, blue!70!black] (\x,0.5) -- ++({\angle}:0.8)
        node[midway, above right=-2pt, rotate=\angle] {\tiny $\tau$};
}

% Modular flow curves (projection clustering)
\draw[->, thick, violet!70!black, smooth, tension=1]
  (-2.8,0.3) .. controls (-2,1.0) and (-1,1.8) .. (0,2.2);

\draw[->, thick, violet!70!black, smooth, tension=1]
  (-2.0,0.1) .. controls (-1.5,1.0) and (-0.5,1.8) .. (0.3,2.0);

\draw[->, thick, violet!70!black, smooth, tension=1]
  (-1.2,0.2) .. controls (-0.5,1.2) and (0.0,1.9) .. (0.8,2.1);

% Label for modular flow direction
\node[violet!70!black] at (1.5,1.4) {\footnotesize Modular Flow};

% Curved clustering zone (symmetry breaking)
\shade[bottom color=blue!10, top color=red!20, opacity=0.3]
  (-2.5,0.2) .. controls (-1.5,1.2) and (-0.5,2.0) .. (0.8,2.2)
  -- (0.8,2.4) .. controls (-0.5,2.2) and (-1.5,1.4) .. (-2.5,0.4) -- cycle;

% Caption label
\node[align=center] at (0,-2.2) {
  \footnotesize Curvature-induced projection clustering \\
  \footnotesize along irreversible modular trajectories
};

\end{tikzpicture}
\caption{Schematic showing curvature-induced modular flow. As internal time coherence (\( \tau \)) evolves across a curved fiber bundle, projection vectors \( \mu[\tau] \) cluster along irreversible modular trajectories, breaking unitary symmetry. Modular time slices \( t_0 < t_1 < t_2 \) represent the observer's emergent temporal structure.}
\label{fig:modular_flow}
\end{figure}

%%%%%%%%%%%%%%%%%%%%%%%%
\subsection{Variation with Respect to the Projection Density}

The projection density \( \rho(x) \) serves as a dynamical field in the MTG framework, encoding the local intensity of measurement-induced collapse and coherence extraction from the internal time field \( \tau \). While in some regimes \( \rho(x) \) may be treated as externally prescribed or phenomenologically constrained, a complete formulation requires that it evolve consistently with the underlying variational structure. The variation of the action with respect to \( \rho \) yields a constraint that ties measurement activity to informational and geometric quantities.

In the minimal setting, the only explicit dependence on \( \rho \) appears in the measurement interaction term
\begin{equation}\label{eq:L_meas}
\mathcal{L}_{\mathrm{meas}} = \lambda \, \rho(x) \, J^\mu n_\mu,
\end{equation}
which describes the energetic or modular flux from internal time into the observer frame. Varying this term with respect to \( \rho(x) \) produces a local constraint:
\begin{equation}
\frac{\delta \mathcal{L}_{\mathrm{MTG}}}{\delta \rho(x)} = \lambda J^\mu(x) n_\mu(x).
\end{equation}
This result implies that in any region where \( \rho \) is treated as a variational field, its optimal value satisfies
\begin{equation}
J^\mu(x) n_\mu(x) = 0,
\end{equation}
unless additional entropy or cost terms are introduced to regularize the extremization. In physical terms, this constraint enforces a balance between coherence flow and projection density. When \( J^\mu n_\mu \neq 0 \), the system either dissipates coherence (if positive) or accumulates it (if negative), and thus an unregulated minimization would force \( \rho \rightarrow 0 \). However, such behavior fails to capture the thermodynamic cost of measurement and the necessity of maintaining decoherence in bounded systems.

To rectify this, one may supplement the action with an entropic or informational cost functional, such as
\begin{equation}
\mathcal{L}_{\mathrm{entropy}} = -\sigma \, \rho(x) \log \rho(x),
\end{equation}
where \( \sigma \) sets the scale of measurement irreversibility. This term corresponds to a Shannon entropy functional on the projection density \( \rho \), favoring delocalized distributions in the absence of coherence alignment. The resulting variational equation admits a log-convex structure, and the solution
\[
\rho(x) = \exp\left( \frac{\lambda}{\sigma} J^\mu n_\mu - 1 \right)
\]
can be interpreted as the equilibrium profile that balances information-theoretic entropy against the coherence current projected along the observer field \( n^\mu \). This modifies the variational condition to
\begin{equation}
\lambda J^\mu n_\mu - \sigma (1 + \log \rho) = 0,
\end{equation}
yielding a dynamically consistent projection density:
\begin{equation}
\rho(x) = \exp\left( \frac{\lambda}{\sigma} J^\mu(x) n_\mu(x) - 1 \right).
\end{equation}

This expression ties the local density of projection events to the scalar component of the coherence current along the observer frame. When \( J^\mu n_\mu \) is large and positive, indicating active decoherence and strong alignment, the system favors more frequent measurement events. In contrast, regions with minimal coherence flow experience suppressed projection density, allowing \( \tau \) to evolve more freely. The competition between informational flux and entropic cost governs how and where geometry emerges from modular collapse.

Through this variational mechanism, the MTG framework elevates \( \rho(x) \) from a passive parameter to an active field whose distribution encodes the information-theoretic structure of quantum spacetime. Its role is twofold: it mediates the classicalization of internal time and participates in the dynamical determination of emergent metric structure. This feedback loop, from coherence to projection to geometry, is the central engine of measurement-induced temporal gravity.

\begin{proposition}[Curvature-Driven Modular Evolution]
Let \( \tau(x) \in \Gamma(E) \) evolve under the covariant equation \( D^\mu D_\mu \tau + \nabla V(\tau) = 0 \), with curvature \( F_{\mu\nu} = [D_\mu, D_\nu] \tau \). Then the induced projection vector field \( \mu[\tau] \) defines an effective modular flow generator
\[
K(x) := - \log \rho(x),
\]
whose non-Hermitian dynamics encode local decoherence and temporal bias. The flow generated by \( \mu[\tau] \) fails to be unitary whenever \( F_{\mu\nu} \neq 0 \), inducing state collapse in the direction of coherence concentration. (See Section~\ref{sec:6} for how \( \mu[\tau] \) shapes the emergent metric structure.)
\end{proposition}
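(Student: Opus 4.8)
The plan is to split the proposition into three linked assertions and dispatch each using results already in hand: (i) that $K(x) := -\log\rho(x)$ is a genuine modular generator attached to the projected fiber state; (ii) that the flow it generates is non-Hermitian and norm-reducing; and (iii) that this failure of unitarity is controlled precisely by the curvature $F_{\mu\nu}$, with the flow driven toward the coherence-concentrated configurations.

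For (i) I would begin from the variational profile for the projection density obtained above, $\rho(x) = \exp( \frac{\lambda}{\sigma} J^\mu n_\mu - 1 )$, so that $K(x) = 1 - \frac{\lambda}{\sigma} J^\mu(x) n_\mu(x)$ is, up to an affine normalization, the coherence-current flux along the observer congruence $n^\mu$. Associating to each configuration $\tau(x)$ the normalized density matrix $\rho_\tau$ on $T_x$ introduced in the relative-entropy-potential discussion, the Tomita--Takesaki construction furnishes $-\log\rho_\tau$ as the modular Hamiltonian generating the one-parameter automorphism $\sigma_s = \mathrm{Ad}(\rho_\tau^{is})$. I would then show that the projection vector field $\mu[\tau]$ is the infinitesimal generator of this flow by differentiating the Born-rule-weighted projection defining $\mu[\tau]$ and matching it to $\partial_s \rho_\tau = i[K,\rho_\tau]$, so that the dynamical law $K = -\log\rho$ on spacetime and the algebraic modular generator agree on-shell.

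For (ii) and (iii) I would invoke Theorem~\ref{thm:measure-ind-inter}: since $(J^\mu)^\dagger = -J^\mu$, the term $\mathcal{L}_{\mathrm{meas}} = \lambda\rho\, J^\mu n_\mu$ is anti-Hermitian, and the equation of motion $D_\mu D^\mu\tau + \delta V/\delta\bar\tau = \lambda\rho( n^\mu D_\mu\tau - D^\mu\tau\, n_\mu )$ carries a first-order anti-Hermitian drift along $n^\mu$; its exponential is therefore not norm-preserving, and $\nabla_\mu J^\mu \ge 0$ from the same theorem supplies the entropy-increasing, decohering character (the ``temporal bias'' being the sign of $J^\mu n_\mu$). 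To tie this to curvature, transport $\tau$ around an infinitesimal loop using $[D_\mu, D_\nu]\tau = F_{\mu\nu}\tau$: the modular flow composed around a closed observer loop accrues the holonomy $\mathcal{P}\exp( -\oint A )$. If $F = 0$, Theorem~\ref{thm:flatness_synchronization} lets us gauge $A\equiv 0$ on a simply connected patch, the modular frames align globally, and the flow degenerates to the intrinsic unitary automorphism $\rho_\tau^{is}$; if $F\neq 0$, Theorem~\ref{thm:curvature_obstruction} forbids any global trivialization, the holonomy is nontrivial, the composite map fails to preserve $\langle\cdot,\cdot\rangle$, and together with the anti-Hermitian drift it contracts $\|\tau\|$. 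The Stationarity of Temporal Current theorem then pins down the attractor: for normalized $\tau$, $J_\mu = 0$ forces $D_\mu\tau = \alpha_\mu\tau$ with $\alpha_\mu$ real, i.e. exactly the synchronizable configurations, and since the drift $\lambda\rho\, n^\mu D_\mu\tau$ points up the coherence gradient these are where $\tau$ collapses --- the claimed coherence concentration.

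The main obstacle is the bridge in (iii): the proposition identifies a gauge-theoretic curvature on $E$ with operator-algebraic non-unitarity of modular flow, and a clean ``$F_{\mu\nu}\neq 0 \Rightarrow$ non-unitary'' implication requires assuming that $\tau \mapsto \rho_\tau$ is equivariant for the $G_{\mathrm{time}}$ action and that modular evolution factors as parallel transport along $n^\mu$ composed with the intrinsic automorphism $\rho_\tau^{is}$. Without such a compatibility hypothesis one only obtains that nontrivial holonomy is \emph{sufficient} for non-unitarity, not necessary; so part of the proof is to isolate and state this minimal structural assumption on $\rho_\tau$ as a standing hypothesis, rather than attempt to derive it, and then verify the converse direction (flat $\Rightarrow$ unitary) unconditionally via Theorem~\ref{thm:flatness_synchronization}.
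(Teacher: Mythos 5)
The first thing to note is that the paper states this proposition with no proof at all, so there is no argument of record to compare yours against; your proposal has to stand or fall on its own. As a reconstruction it assembles the right surrounding material (the variational profile for $\rho$, Theorem~\ref{thm:measure-ind-inter}, the flatness and curvature-obstruction theorems, and the stationarity-of-current result), and you deserve credit for explicitly isolating the weakest link --- the passage from $F_{\mu\nu}\neq 0$ to non-unitarity of the modular flow --- as a standing structural hypothesis rather than pretending to derive it.

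Two further gaps are fatal as the argument stands, however. First, there is a type mismatch at the heart of step (i): $K(x):=-\log\rho(x)$ is the logarithm of the scalar projection density, hence a real-valued function on $M$, whereas the Tomita--Takesaki modular Hamiltonian $-\log\rho_\tau$ is an operator on the fiber. If $K$ is literally a scalar, the bracket in your matching condition $\partial_s\rho_\tau = i[K,\rho_\tau]$ vanishes identically and the flow it generates is trivial; the dictionary you need --- one that promotes the scalar $\rho(x)$ to the fiber state $\rho_\tau$ whose spectrum it controls --- is supplied by neither the paper nor your proposal, and it does not follow from substituting the equilibrium profile $\rho=\exp(\lambda\sigma^{-1}J^\mu n_\mu-1)$. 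Second, the proposition's hypothesis is the \emph{measurement-free} equation $D^\mu D_\mu\tau+\nabla V(\tau)=0$, with no $\lambda\rho$ source term; yet your argument for non-Hermiticity in (ii) imports the drift $\lambda\rho\,n^\mu D_\mu\tau$ from the measurement-modified equation of motion. Under the stated hypothesis that drift is absent, so the claimed non-unitarity must be sourced entirely by the holonomy mechanism of (iii) --- precisely the step you have already conceded requires an unproved compatibility assumption (and note that pointwise $F_{\mu\nu}\neq 0$ does not by itself obstruct transport along the single congruence $n^\mu$; it only obstructs comparison of transport along different paths). Unless you either strengthen the hypothesis to include the measurement term or construct an operator-valued $K$ whose flow is shown to be non-unitary from curvature alone, the proposal establishes only the flat-case statement (via Theorem~\ref{thm:flatness_synchronization}) and leaves the proposition's main claim unproved.
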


%%%%%%%%%%%%%%%%%%%%%%%%%%%%
\subsection{Equations of Motion}

We now derive the field equations for the dynamical variables \( \tau \), \( A_\mu \), and, when treated variationally, the projection density \( \rho(x) \). These equations follow from the Euler--Lagrange principle applied to the total MTG Lagrangian:
\begin{equation}
\mathcal{L}_{\mathrm{MTG}} = -\frac{1}{4g^2} \mathrm{Tr}(F_{\mu\nu} F^{\mu\nu}) + \frac{1}{2} \langle \mathrm{D}_\mu \tau, \mathrm{D}^\mu \tau \rangle - V(\tau) + \lambda \rho(x) J^\mu n_\mu.
\end{equation}
At this stage, the effective metric \( g^{\mathrm{eff}}_{\mu\nu} \) enters only indirectly via projection histories; it becomes explicitly dynamical in Section~6.

Variation with respect to the conjugate field \( \bar{\tau} \), treating \( \tau \) and \( \bar{\tau} \) independently, gives the temporal field equation. Using the identity
\[
\delta \langle \mathrm{D}_\mu \tau, \mathrm{D}^\mu \tau \rangle = 2 \langle \delta \bar{\tau}, \mathrm{D}_\mu \mathrm{D}^\mu \tau \rangle
\]
and assuming differentiability of the potential functional, we obtain
\begin{align}
\delta S[\tau] = \int_M \left( \langle \delta \bar{\tau}, \mathrm{D}_\mu \mathrm{D}^\mu \tau \rangle - \delta \bar{\tau} \, \frac{\delta V}{\delta \bar{\tau}} + \lambda \rho(x) \delta \bar{\tau} \, n_\mu \mathrm{D}^\mu \tau \right) \mathrm{d}^d x.
\end{align}
Demanding stationarity for arbitrary \( \delta \bar{\tau} \) yields the modified covariant field equation:
\begin{equation}
\mathrm{D}_\mu \mathrm{D}^\mu \tau + \frac{\delta V}{\delta \bar{\tau}} = \lambda \rho(x) n_\mu \mathrm{D}^\mu \tau.
\end{equation}
The right-hand side introduces a non-Hermitian source term aligned with the observer vector \( n^\mu \), encoding the dissipative effect of projection-induced decoherence.

To derive the connection field equation, we vary the curvature term, yielding
\[
\delta \mathcal{L}_F = -\frac{1}{g^2} \mathrm{Tr} \left[ (\mathrm{D}_\mu \delta A_\nu - \mathrm{D}_\nu \delta A_\mu) F^{\mu\nu} \right] = \frac{2}{g^2} \mathrm{Tr} \left[ \delta A_\nu \mathrm{D}_\mu F^{\mu\nu} \right],
\]
after integrating by parts and using the cyclicity of the trace.

The covariant coupling of \( A_\mu \) to the temporal field produces a source current. The variation of the kinetic term for \( \tau \) with respect to \( A_\mu \), in the direction of a Lie algebra generator \( T^a \), gives
\[
\delta \mathcal{L}_\tau = \delta A^\nu \cdot \left( \langle \tau, T^a \mathrm{D}_\nu \tau \rangle - \langle \mathrm{D}_\nu \tau, T^a \tau \rangle \right).
\]
This defines the entanglement current,
\begin{equation}
J^{\nu}_{\mathrm{ent}} := \langle \tau, T^a \mathrm{D}^\nu \tau \rangle - \langle \mathrm{D}^\nu \tau, T^a \tau \rangle,
\end{equation}
which acts as a source for the curvature field. The resulting Yang--Mills equation takes the form
\begin{equation}\label{eq:y-m}
\mathrm{D}_\mu F^{\mu\nu} = g^2 J^{\nu}_{\mathrm{ent}}.
\end{equation}
This equation captures the feedback between temporal entanglement and gauge curvature, with the internal time field \( \tau \) sourcing deviations from flatness in the connection.

If \( \rho(x) \) is treated as a dynamical field, one may supplement the Lagrangian with an additional term governing its evolution, for example,
\[
\mathcal{L}_\rho = -\frac{1}{2} (\partial_\mu \rho)(\partial^\mu \rho) - U(\rho),
\]
where \( U(\rho) \) encodes the entropic or environmental cost of projection. Varying this extended action with respect to \( \rho \) leads to a wave-like field equation:
\begin{equation}
\Box \rho + \frac{dU}{d\rho} = -\lambda J^\mu n_\mu,
\end{equation}
in which the projection density responds dynamically to coherence flux and decoherence pressure.

The full MTG system thus couples a generalized Klein--Gordon equation for the internal time field to a Yang--Mills equation for the curvature of the time-fiber connection,and  a wave equation for the measurement density. The resulting equations of motion are:
The complete coupled system of dynamical equations for the fields \( \tau \), \( A_\mu \), and \( \rho \) takes the form:
\begin{align}
\text{(1) Temporal field:} \quad & D_\mu D^\mu \tau + \frac{\delta V}{\delta \bar{\tau}} = \lambda \rho(x) \left( n^\mu D_\mu \tau - D^\mu \tau \, n_\mu \right), \label{eq:eom-tau} \\
\text{(2) Gauge connection:} \quad & D_\mu F^{\mu\nu} = g^2 J^\nu_{\mathrm{ent}}, \label{eq:eom-A} \\
\text{(3) Projection density:} \quad & \Box \rho + \frac{dU}{d\rho} = -\lambda J^\mu n_\mu. \label{eq:eom-rho}
\end{align}
These equations govern the nonlinear, mutually coupled evolution of coherence, gauge curvature, and observer-modulated projection, forming the local dynamical core of the MTG framework. They encode how temporal geometry emerges from the interplay between the internal time field, gauge dynamics, and measurement backreaction. Symmetry principles; including gauge invariance and potential supersymmetric extensions, are developed in the following section.

%%%%%%%%%%%%%%%%%%%%%%%%%%%%%%%%%%%%%%%%%%%

\begin{theorem}[Gauge Covariance of the MTG Field Equations]
Let $\tau \in \Gamma(E)$ and $A_\mu \in \Omega^1(M, \mathfrak{g}_{\mathrm{time}})$ be the dynamical fields of the MTG model, and let $\mathrm{D}_\mu$ be the covariant derivative defined by $A_\mu$. Then the field equations
\[
\mathrm{D}_\mu \mathrm{D}^\mu \tau + \frac{\delta V}{\delta \bar{\tau}} = \lambda \rho(x) n_\mu \mathrm{D}^\mu \tau, \quad \mathrm{D}_\mu F^{\mu\nu} = g^2 J^\nu_{\mathrm{ent}}
\]
transform covariantly under local gauge transformations
\[
\tau \mapsto g \cdot \tau, \quad A_\mu \mapsto g A_\mu g^{-1} + g \partial_\mu g^{-1},
\]
where $g: M \to G_{\mathrm{time}}$ is a smooth map. In particular, the equations preserve gauge invariance of physical observables.
\end{theorem}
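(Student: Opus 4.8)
The plan is to reduce everything to the homogeneous transformation law for the covariant derivative already established in Theorem~\ref{thm:gauge_cov}, namely $D_\mu\tau \mapsto D^g_\mu(g\cdot\tau) = g\cdot(D_\mu\tau)$, and then to verify that each remaining building block of the two field equations transforms either in the same fiber representation as $\tau$ (for the temporal equation) or in the adjoint representation of $\mathfrak{g}_{\mathrm{time}}$ (for the Yang--Mills equation). Once every term on the left of an equation and every term on the right carries the same transformation, form-invariance --- and hence gauge-invariance of all contractions defining physical observables --- follows immediately.

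First I would iterate Theorem~\ref{thm:gauge_cov}: since the spacetime metric $g^{\mu\nu}$ and the spacetime (Levi--Civita) connection act only on world indices and are gauge-inert, the operator $D_\mu D^\mu = D_\mu(g^{\mu\nu} D_\nu\,\cdot\,)$ commutes with the fiber action of $g$, so $D_\mu D^\mu\tau \mapsto g\cdot(D_\mu D^\mu\tau)$. Next I would show $\delta V/\delta\bar{\tau} \mapsto g\cdot(\delta V/\delta\bar{\tau})$: differentiating the assumed identity $V(g\cdot\tau) = V(\tau)$ and using that the fiber inner product is $G_{\mathrm{time}}$-invariant (unitary representation), one finds $\langle (\delta V/\delta\bar{\tau})(g\cdot\tau),\, g\cdot\delta\bar{\tau}\rangle = \langle (\delta V/\delta\bar{\tau})(\tau),\, \delta\bar{\tau}\rangle = \langle g\cdot(\delta V/\delta\bar{\tau})(\tau),\, g\cdot\delta\bar{\tau}\rangle$ for all $\delta\bar{\tau}$, whence the claim by nondegeneracy. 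For the right-hand side, $\rho(x)$ is a gauge scalar and $n^\mu$ a gauge-inert background vector field (its covariant constancy $D_\mu n^\nu = 0$ refers to the spacetime connection and is orthogonal to this argument), so $\lambda\rho\, n_\mu D^\mu\tau \mapsto g\cdot(\lambda\rho\, n_\mu D^\mu\tau)$; in particular the novel non-Hermitian measurement term does not obstruct covariance. Combining these, the entire temporal field equation maps to its $g$-translate, so it holds in one gauge if and only if it holds in all.

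For the connection equation I would use the standard adjoint transformation $F_{\mu\nu} \mapsto g F_{\mu\nu} g^{-1}$, which follows from the transformation of $A_\mu$ by a direct Cartan-structure-equation computation of the type already carried out in the proof of Theorem~\ref{thm:curvature_obstruction}, and hence $D_\mu F^{\mu\nu} \mapsto g(D_\mu F^{\mu\nu}) g^{-1}$. The step that requires care is the entanglement current $J^\nu_{\mathrm{ent}} = \langle\tau, T^a D^\nu\tau\rangle - \langle D^\nu\tau, T^a\tau\rangle$: writing the Lie-algebra-valued current $J^\nu = J^{\nu}_{\mathrm{ent},a}\, T^a$ (indices raised with the $\mathrm{Ad}$-invariant trace form), one uses unitarity of the representation to move $g$ across the inner product, producing $g^{-1} T^a g = (\mathrm{Ad}_{g^{-1}})^a{}_b\, T^b$, and then $\mathrm{Ad}$-invariance of the trace form to conclude $J^\nu \mapsto g J^\nu g^{-1}$, matching the left-hand side. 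Since $g^2$ is constant, the Yang--Mills equation is form-invariant as well.

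The main obstacle --- really the only nonroutine point --- is tracking the representation theory of the source terms: confirming that $\delta V/\delta\bar{\tau}$ lands in the same representation as $\tau$ (this genuinely uses $G_{\mathrm{time}}$-invariance of $V$ together with compatibility and unitarity of the Hermitian fiber metric, not merely smoothness) and that $J^\nu_{\mathrm{ent}}$ transforms strictly in the adjoint, without acquiring inhomogeneous pieces from the generators $T^a$. Everything else is bookkeeping once Theorem~\ref{thm:gauge_cov} and the adjoint law for $F_{\mu\nu}$ are available. I would close by noting that gauge-invariance of physical observables then follows because they are built from full contractions --- $\langle\tau,\tau\rangle$, $\langle D_\mu\tau, D^\mu\tau\rangle$, $\mathrm{Tr}(F_{\mu\nu}F^{\mu\nu})$, $J^\mu n_\mu$ --- each manifestly invariant under $a\mapsto g\cdot a$ by $G_{\mathrm{time}}$-invariance of the inner product and cyclicity of the trace.
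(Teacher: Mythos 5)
Your proposal is correct and takes essentially the same route as the paper's proof: reduce everything to the homogeneous law $D_\mu\tau \mapsto g\cdot(D_\mu\tau)$ from Theorem~\ref{thm:gauge_cov}, the adjoint law $F_{\mu\nu}\mapsto gF_{\mu\nu}g^{-1}$, and covariance of the source terms. The only difference is that you actually supply the two details the paper merely asserts --- the variational argument showing $\delta V/\delta\bar{\tau}$ transforms in the fiber representation, and the adjoint transformation of $J^\nu_{\mathrm{ent}}$ via unitarity and $\mathrm{Ad}$-invariance of the trace form --- which fills genuine gaps in the paper's terser version.
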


\begin{proof}
Under a local gauge transformation $g(x)$, the connection $A_\mu$ transforms according to
\[
A_\mu \mapsto A_\mu^g = g A_\mu g^{-1} + g \partial_\mu g^{-1},
\]
and the section transforms as $\tau \mapsto \tau^g = g \cdot \tau$. The covariant derivative transforms as
\[
\mathrm{D}_\mu \tau \mapsto g \cdot \mathrm{D}_\mu \tau, \quad \mathrm{D}_\mu \mathrm{D}^\mu \tau \mapsto g \cdot \mathrm{D}_\mu \mathrm{D}^\mu \tau.
\]
If the potential $V(\tau)$ is gauge-invariant, then $\delta V / \delta \bar{\tau}$ transforms covariantly as well. Since the right-hand side $\lambda \rho n_\mu \mathrm{D}^\mu \tau$ transforms as $g \cdot (\lambda \rho n_\mu \mathrm{D}^\mu \tau)$, the entire equation transforms covariantly.

For the connection equation, the curvature transforms as $F_{\mu\nu} \mapsto g F_{\mu\nu} g^{-1}$ and the covariant derivative of the curvature transforms as
\[
\mathrm{D}_\mu F^{\mu\nu} \mapsto g \cdot \mathrm{D}_\mu F^{\mu\nu} \cdot g^{-1}.
\]
The entanglement current $J^\nu_{\mathrm{ent}}$ is constructed from gauge-covariant terms and thus transforms covariantly as well. Therefore, the entire system is gauge covariant.
\end{proof}

\begin{theorem}[Dissipation via Measurement: Energy Functional Decrease]
Let $\tau \in \Gamma(E)$ evolve under the MTG equation
\[
\mathrm{D}_\mu \mathrm{D}^\mu \tau + \frac{\delta V}{\delta \bar{\tau}} = \lambda \rho(x) n_\mu \mathrm{D}^\mu \tau,
\]
with $\rho(x) \geq 0$ and $n^\mu$ future-directed timelike. Then the energy functional
\[
\mathcal{E}[\tau] = \int_\Sigma \left( \langle \mathrm{D}_\mu \tau, \mathrm{D}^\mu \tau \rangle + 2 V(\tau) \right) \mathrm{d} \Sigma
\]
satisfies
\[
\frac{\mathrm{d}}{\mathrm{d}t} \mathcal{E}[\tau] \leq 0
\]
along observer-aligned flow when $\Sigma$ is a spacelike hypersurface and appropriate boundary conditions are imposed.
\end{theorem}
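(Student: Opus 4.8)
The plan is to run a covariant energy estimate of the kind familiar from damped hyperbolic equations, transplanted to the time-fiber bundle. Fix a foliation $\{\Sigma_t\}$ adapted to the observer congruence, so that $n^\mu$ is the future-directed unit normal and $\tfrac{d}{dt}$ is the derivative along $n^\mu$; by the compatibility hypothesis $D_\mu n^\nu = 0$ imposed in Section~\ref{sec:3}, this derivative commutes with covariant differentiation. Reading the integrand of $\mathcal{E}$ as the observer-frame energy density (normal kinetic plus tangential kinetic plus potential, which is how ``$\langle D_\mu\tau, D^\mu\tau\rangle + 2V$'' should be understood once split along $n^\mu$), the strategy is to contract the field equation with the velocity $2\,\mathrm{Re}\langle n^\nu D_\nu\tau,\;\cdot\;\rangle$, integrate over $\Sigma_t$, and read $\tfrac{d}{dt}\mathcal{E}$ off as a boundary flux plus a definite-sign bulk term.

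First I would treat the kinetic contribution. Metric compatibility of $D_\mu$ with the fiber Hermitian product together with the Leibniz rule gives, schematically,
\[
2\,\mathrm{Re}\langle n^\nu D_\nu\tau,\, D_\mu D^\mu\tau\rangle = D_\mu\!\big(2\,\mathrm{Re}\langle n^\nu D_\nu\tau,\, D^\mu\tau\rangle\big) - n^\nu D_\nu\langle D_\mu\tau,\, D^\mu\tau\rangle - 2\,\mathrm{Re}\langle n^\nu F_{\mu\nu}\!\cdot\!\tau,\, D^\mu\tau\rangle,
\]
the last term appearing when $D_\mu$ is commuted past $D_\nu$ and picks up the time-fiber curvature. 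The potential contribution $2\,\mathrm{Re}\langle n^\nu D_\nu\tau,\, \delta V/\delta\bar{\tau}\rangle$ collapses by the chain rule to $n^\nu D_\nu(2V)$. Substituting the equation of motion on the left, the kinetic and potential pieces assemble into $-\,n^\nu D_\nu(\text{energy density}) + D_\mu(\text{flux}^\mu)$, while the right-hand side produces $2\lambda\rho\,\|n^\mu D_\mu\tau\|^2$ since $n_\mu D^\mu\tau = n^\mu D_\mu\tau$. Integrating over $\Sigma_t$: the flux divergence splits into a part that reconstructs $\tfrac{d}{dt}\mathcal{E}$ together with the lapse/extrinsic-curvature terms of the slicing (which vanish for a flat background and an adapted geodesic congruence) and a spatial divergence that dies on $\partial\Sigma_t$ under the stated decay/compact-support boundary conditions. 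With the sign convention of Theorem~\ref{thm:measure-ind-inter}, under which $\mathcal{L}_{\mathrm{meas}}$ is anti-Hermitian and norm-reducing, the measurement term enters $\tfrac{d}{dt}\mathcal{E}$ with a net minus sign, giving
\[
\frac{d}{dt}\mathcal{E}[\tau] = -\,2\lambda\int_{\Sigma_t}\rho(x)\,\|n^\mu D_\mu\tau\|^2\, d\Sigma \;-\; 2\int_{\Sigma_t}\mathrm{Re}\langle n^\nu F_{\mu\nu}\!\cdot\!\tau,\, D^\mu\tau\rangle\, d\Sigma .
\]

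The main obstacle is the residual curvature term $\mathrm{Re}\langle n^\nu F_{\mu\nu}\!\cdot\!\tau,\, D^\mu\tau\rangle$: it represents a genuine coherence exchange with the gauge sector and has no definite sign, so it must be removed, not merely bounded. I see three routes of increasing generality: (i) restrict to the flat-connection regime $F = 0$, i.e. the classically synchronizable domains of Theorem~\ref{thm:flatness_synchronization}, where it vanishes identically; (ii) impose the absence of electric-type curvature along the flow, $n^\nu F_{\mu\nu} = 0$; or (iii) enlarge $\mathcal{E}$ to include the Yang--Mills energy of $A_\mu$ and invoke the connection equation $D_\mu F^{\mu\nu} = g^2 J^\nu_{\mathrm{ent}}$ to show this term is an internal transfer that cancels in the total balance, leaving the manifestly non-positive dissipation $-2\lambda\int_{\Sigma_t}\rho\,\|n^\mu D_\mu\tau\|^2\, d\Sigma$. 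Route (iii) is the version that should be the theorem's real content; executing it requires the companion energy identity for $A_\mu$ and checking that the $\tau$--$A$ cross terms match — that, plus confirming the foliation-dependent terms vanish in the intended (flat, kinematic) background, is where the genuine effort lies. Everything else is routine integration by parts, with the boundary conditions discharging exactly the role for which they are assumed, and with the coupling constrained to $\lambda \ge 0$ (consistent with the entropy-increasing interpretation of Theorem~\ref{thm:measure-ind-inter}) so that the bulk term carries the advertised sign.
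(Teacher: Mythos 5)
Your strategy---pair the equation of motion with the velocity $2\,\mathrm{Re}\langle n^\nu D_\nu\tau,\,\cdot\,\rangle$, integrate over a slice, and read off $\tfrac{d}{dt}\mathcal{E}$ as a boundary flux plus a signed bulk term---is the same energy-estimate idea the paper uses, but you execute it at a level of care the paper does not attempt. The paper's proof is three lines: it asserts $\partial_t\langle D_\mu\tau, D^\mu\tau\rangle = 2\,\mathrm{Re}\langle D_\mu\dot\tau, D^\mu\tau\rangle$, silently commuting $\partial_t$ past $D_\mu$ and thereby dropping exactly the $F_{0\mu}\cdot\tau$ commutator you isolate; it then writes ``$\dot\tau = D_0\tau = -\lambda\rho\, D_0\tau + \text{unitary terms}$,'' a first-order relation that does not follow from the second-order field equation; and it concludes $\tfrac{d}{dt}\mathcal{E} = -2\lambda\int_\Sigma\rho\,\|D_0\tau\|^2\,d\Sigma$ without ever mentioning the curvature cross-term, the foliation and lapse contributions, or the boundary flux. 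So your proposal is not merely an alternative write-up; it is the computation the paper's proof gestures at, done honestly.

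The consequence is that the obstruction you identify---the sign-indefinite term $\mathrm{Re}\langle n^\nu F_{\mu\nu}\cdot\tau,\, D^\mu\tau\rangle$ representing coherence exchange with the gauge sector---is a genuine gap in the theorem as stated, not a defect of your argument. The hypotheses list only $\rho\ge 0$ and $n^\mu$ timelike; they do not assume $F=0$, nor $n^\nu F_{\mu\nu}=0$, nor do they enlarge $\mathcal{E}$ to include the Yang--Mills energy, so monotonicity of the stated functional does not follow without one of your added conditions. Your route (iii) is the one that would make the statement robust, and the outstanding items you list (the companion energy identity for $A_\mu$ via $D_\mu F^{\mu\nu}=g^2 J^\nu_{\mathrm{ent}}$, cancellation of the $\tau$--$A$ cross terms, and vanishing of the slicing terms in the flat kinematic background) are precisely the content the published proof omits. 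Two minor points to tighten if you carry this out: the theorem nowhere states $\lambda\ge 0$, which you correctly note is needed for the dissipative sign; and the decomposition of $\langle D_\mu\tau, D^\mu\tau\rangle$ along $n^\mu$ under the Lorentzian signature should be made explicit, since the positivity (and hence the physical meaning of ``decrease'') of the integrand as written depends on that convention.
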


\begin{proof}
We compute the time derivative of $\mathcal{E}[\tau]$ using the equation of motion. Define $n^\mu = (\partial / \partial t)^\mu$ as the future-directed unit normal to $\Sigma_t$. The evolution of $\tau$ along $n^\mu$ gives
\[
\partial_t \langle \mathrm{D}_\mu \tau, \mathrm{D}^\mu \tau \rangle = 2 \, \mathrm{Re} \langle \mathrm{D}_\mu \dot{\tau}, \mathrm{D}^\mu \tau \rangle.
\]
Substituting the field equation yields
\[
\dot{\tau} = \mathrm{D}_0 \tau = - \lambda \rho(x) \mathrm{D}_0 \tau + \text{unitary terms}.
\]
Hence the measurement term introduces a friction-like term proportional to $\lambda \rho(x) \| \mathrm{D}_0 \tau \|^2 \geq 0$, which decreases the total energy:
\[
\frac{\mathrm{d}}{\mathrm{d}t} \mathcal{E}[\tau] = -2 \lambda \int_\Sigma \rho(x) \| \mathrm{D}_0 \tau \|^2 \mathrm{d} \Sigma \leq 0.
\]
This proves that measurement induces an entropy-increasing, energy-dissipating effect.
\end{proof}

\begin{theorem}[Conservation of Gauge Current in the Absence of Measurement]
Suppose $\lambda = 0$ and the measurement term is absent. Then the entanglement current
\[
J^\nu_{\mathrm{ent}} := \langle \tau, T^a \mathrm{D}^\nu \tau \rangle - \langle \mathrm{D}^\nu \tau, T^a \tau \rangle
\]
satisfies the covariant continuity equation
\[
\mathrm{D}_\nu J^\nu_{\mathrm{ent}} = 0.
\]
\end{theorem}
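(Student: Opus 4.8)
The plan is to obtain the covariant conservation of $J^\nu_{\mathrm{ent}}$ as a consequence of the gauge equation of motion \eqref{eq:y-m} rather than by brute-force differentiation of the current's defining expression; a direct verification is then added as a consistency check. With $\lambda = 0$ the measurement source $J^\mu_{\mathrm{meas}}$ is absent, so varying the action with respect to $A_\mu$ yields exactly $\mathrm{D}_\mu F^{\mu\nu} = g^2 J^\nu_{\mathrm{ent}}$, with both sides valued in $\mathfrak{g}_{\mathrm{time}}$ (equivalently, carrying the free adjoint index $a$). Applying the adjoint covariant derivative $\mathrm{D}_\nu$ to both sides reduces the theorem to the contracted Bianchi-type identity $\mathrm{D}_\nu\mathrm{D}_\mu F^{\mu\nu} = 0$.

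To establish that identity: since $F^{\mu\nu}$ is antisymmetric, $\mathrm{D}_\nu\mathrm{D}_\mu F^{\mu\nu} = \tfrac12[\mathrm{D}_\nu,\mathrm{D}_\mu]F^{\mu\nu}$, the symmetric part of $\mathrm{D}_\nu\mathrm{D}_\mu$ dropping out against the antisymmetry of $F$. On a $\mathfrak{g}$-valued two-form the commutator acts in the adjoint as $[\mathrm{D}_\nu,\mathrm{D}_\mu](\,\cdot\,) = [F_{\nu\mu},\,\cdot\,]$ together with Riemann-curvature terms acting on the two free spacetime slots. The gauge piece contributes $\tfrac12[F_{\nu\mu},F^{\mu\nu}]$, which vanishes term by term because the commutator of $F$ with itself over matched index pairs is zero; the spacetime-curvature pieces contract the symmetric Ricci tensor with the antisymmetric $F^{\mu\nu}$ and therefore also vanish --- and are in any case simply absent in the flat-background regime of this section, where $\eta_{\mu\nu}$ is used and $\mathrm{D}_\mu n^\nu = 0$, with the Levi-Civita connection metric-compatible so that index raising commutes with $\mathrm{D}$. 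Hence $\mathrm{D}_\nu\mathrm{D}_\mu F^{\mu\nu}=0$, and dividing by $g^2$ gives $\mathrm{D}_\nu J^\nu_{\mathrm{ent}} = 0$.

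As a cross-check I would differentiate $J^\nu_{\mathrm{ent}} = \langle\tau, T^a\mathrm{D}^\nu\tau\rangle - \langle\mathrm{D}^\nu\tau, T^a\tau\rangle$ directly, using metric-compatibility of the fiber connection with the Hermitian inner product (Section~\ref{sec:2}) to apply the Leibniz rule. Moving $\mathrm{D}_\nu$ past the constant generator $T^a$ produces $[A_\nu, T^a]$ terms which reassemble, via the structure constants, into precisely the adjoint connection acting on $J^\nu_{\mathrm{ent}}$, so that the full adjoint divergence equals the two remaining blocks: the ``kinetic'' cross-terms $\langle\mathrm{D}_\nu\tau, T^a\mathrm{D}^\nu\tau\rangle - \langle\mathrm{D}^\nu\tau, T^a\mathrm{D}_\nu\tau\rangle$, which cancel identically by relabeling the summed index and using the symmetry of $g^{\mu\nu}$; and the block $\langle\tau, T^a\mathrm{D}_\nu\mathrm{D}^\nu\tau\rangle - \langle\mathrm{D}_\nu\mathrm{D}^\nu\tau, T^a\tau\rangle$, in which the on-shell $\lambda = 0$ Klein--Gordon equation replaces $\mathrm{D}_\nu\mathrm{D}^\nu\tau$ by $-\delta V/\delta\bar{\tau}$; this final block then vanishes by the Ward identity equivalent to $G_{\mathrm{time}}$-invariance of $V$, namely $\langle\tfrac{\delta V}{\delta\bar{\tau}}, T^a\tau\rangle - \langle\tau, T^a\tfrac{\delta V}{\delta\bar{\tau}}\rangle = 0$.

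The \emph{main obstacle} is conceptual rather than computational: one must be precise that $\mathrm{D}_\nu$ in $\mathrm{D}_\nu J^\nu_{\mathrm{ent}}$ denotes the adjoint covariant derivative matching the generator index $a$, and then verify the contracted Bianchi identity carefully --- in particular handling, or justifying the absence of, the spacetime-curvature terms if one works with $g^{\mathrm{eff}}$ rather than $\eta$. In the direct route the corresponding delicate point is tracking the $[A_\nu, T^a]$ contributions so that they recombine exactly into the adjoint connection, which requires fixing consistent conventions (inner product anti-linear in the first argument, $T^a$ anti-Hermitian, $V$ gauge-invariant) before the Ward-identity cancellation can be invoked.
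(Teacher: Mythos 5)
Your proposal is correct, but your primary route is genuinely different from the paper's. The paper gives a one-line Noether argument: with $\lambda=0$ the Lagrangian $\mathcal{L}_\tau=\tfrac12\langle \mathrm{D}_\mu\tau,\mathrm{D}^\mu\tau\rangle-V(\tau)$ is invariant under $G_{\mathrm{time}}$, the variation $\delta\tau=\epsilon\,T^a\cdot\tau$ produces $\delta\mathcal{L}_\tau=\epsilon\,\mathrm{D}_\nu J^\nu_{\mathrm{ent}}$, and invariance (on the $\tau$ shell) forces this to vanish. You instead derive conservation as a consistency condition on the sourced Yang--Mills equation~\eqref{eq:y-m}: applying $\mathrm{D}_\nu$ and using $\mathrm{D}_\nu\mathrm{D}_\mu F^{\mu\nu}=\tfrac12[F_{\nu\mu},F^{\mu\nu}]=0$. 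The trade-off is real: the paper's argument needs only the matter equation of motion and gauge invariance of $V$, and holds whether or not $A_\mu$ is on-shell, whereas your argument requires the connection to satisfy its own field equation but is completely insensitive to the internal structure of the source --- any current that sources $\mathrm{D}_\mu F^{\mu\nu}$ is automatically covariantly conserved. Your ``cross-check'' is essentially the paper's Noether argument unpacked, and it correctly isolates the two cancellation mechanisms (antisymmetry of the kinetic cross-terms and the Ward identity $\langle\tfrac{\delta V}{\delta\bar\tau},T^a\tau\rangle-\langle\tau,T^a\tfrac{\delta V}{\delta\bar\tau}\rangle=0$ following from $G_{\mathrm{time}}$-invariance of $V$), which is more detail than the paper supplies. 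One small imprecision: $[F_{\nu\mu},F^{\mu\nu}]$ does not vanish ``term by term'' for a general (non-diagonal) metric; it vanishes only after summing over the index pairs, by the relabeling symmetry of $g^{\mu\alpha}g^{\nu\beta}$ against the antisymmetry of the commutator. This does not affect the conclusion.
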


\begin{proof}
When $\lambda = 0$, the dynamics are unitary and governed by the gauge-invariant Lagrangian
\[
\mathcal{L}_\tau = \frac{1}{2} \langle \mathrm{D}_\mu \tau, \mathrm{D}^\mu \tau \rangle - V(\tau),
\]
with $V(\tau)$ assumed to be gauge-invariant. Then the gauge symmetry leads to a conserved Noether current. The variation of $\tau$ under infinitesimal gauge transformations $\delta \tau = \epsilon T^a \cdot \tau$ yields
\[
\delta \mathcal{L}_\tau = \epsilon \, \mathrm{D}_\nu J^\nu_{\mathrm{ent}},
\]
so invariance implies $\mathrm{D}_\nu J^\nu_{\mathrm{ent}} = 0$.
\end{proof}
%%%%%%%%%%%%%%%%%%%
\subsection{Supersymmetric Extension}

To enhance the robustness of coherence under measurement-induced decoherence, and to incorporate fermionic degrees of freedom into the temporal geometry, the MTG framework admits a natural extension into supersymmetric field theory. Supersymmetry (SUSY) offers a powerful organizing principle that stabilizes the quantum dynamics of the internal time field and enables consistent coupling to high-energy structures such as supergravity and string theory. In particular, SUSY facilitates off-shell closure of the algebra governing time evolution and entanglement, providing a controlled framework for analyzing quantum backreaction during measurement.

The supersymmetric generalization of MTG begins with the introduction of a chiral supermultiplet valued in the internal time-fiber bundle~\cite{connes1996gravity, rovelli2015evolving}. Denoted \( \Phi = (\tau, \psi, F) \), this superfield comprises a complex scalar \( \tau \), representing the internal time section, a Grassmann-valued spinor field \( \psi \) encoding fermionic fluctuations of temporal geometry, and a complex auxiliary scalar field \( F \) required for off-shell closure of the supersymmetry algebra. The field \( \tau \) is a section of the fiber bundle \( E \to M \), while \( \psi \) and \( F \) take values in associated bundles with spin and trivial fibers, respectively.

Supersymmetry transformations are parametrized by a Grassmann spinor \( \epsilon \), and act on the components of \( \Phi \) as:
\begin{align}
\delta \tau &= \bar{\epsilon} \psi, \\
\delta \psi &= \slashed{\mathrm{D}} \tau \, \epsilon + F \epsilon, \\
\delta F &= \bar{\epsilon} \slashed{\mathrm{D}} \psi,
\end{align}
where \( \slashed{\mathrm{D}} := \gamma^\mu \mathrm{D}_\mu \) is the gauge-covariant Dirac operator associated with the internal time connection \( A_\mu \), and \( \bar{\epsilon} := \epsilon^\dagger \gamma^0 \) is the Dirac conjugate. These transformations satisfy the standard \( \mathcal{N}=1 \) supersymmetry algebra modulo gauge transformations and total derivatives. Specifically,
\begin{equation}
[\delta_1, \delta_2] = 2 \bar{\epsilon}_1 \gamma^\mu \epsilon_2 \, \mathrm{D}_\mu + \text{(gauge)}.
\end{equation}

The corresponding supersymmetric Lagrangian for MTG includes both bosonic and fermionic sectors, as well as measurement interactions and auxiliary terms. The total Lagrangian density is given by:
\begin{equation}
\mathcal{L}_{\mathrm{SUSY}} = \mathcal{L}_F + \mathcal{L}_{\mathrm{bos}} + \mathcal{L}_{\mathrm{ferm}} + \mathcal{L}_{\mathrm{aux}} + \mathcal{L}_{\mathrm{meas}},
\end{equation}
where \( \mathcal{L}_F \) is the Yang--Mills term for the internal curvature \( F_{\mu\nu} \), and the bosonic part takes the form
\begin{equation}
\mathcal{L}_{\mathrm{bos}} = \frac{1}{2} \langle \mathrm{D}_\mu \tau, \mathrm{D}^\mu \tau \rangle - V(\tau).
\end{equation}
Fermionic dynamics are encoded by the Dirac action:
\begin{equation}
\mathcal{L}_{\mathrm{ferm}} = \bar{\psi} i \slashed{\mathrm{D}} \psi,
\end{equation}
while the auxiliary term is algebraic:
\begin{equation}
\mathcal{L}_{\mathrm{aux}} = - \frac{1}{2} F \bar{F}.
\end{equation}

To render the coherence-breaking interaction compatible with supersymmetry, the measurement term is extended to include both bosonic and fermionic projections along the observer congruence \( n^\mu \)~\cite{bassi2013decoherence, ishibashi2015supersymmetric}:
\begin{equation}
\mathcal{L}_{\mathrm{meas}} = \lambda \rho(x) \left( J^\mu n_\mu + \bar{\psi} \gamma^\mu n_\mu \psi \right),
\end{equation}
where \( J^\mu = \mu^\mu[\tau] \rho \) is the projection current associated with internal time collapse. The second term projects fermionic temporal fluctuations along \( n^\mu \), completing the interaction into a supersymmetric deformation. This structure locally breaks SUSY in regions where projection is active but preserves covariance and algebraic closure in unmeasured regions of field space.

Although the auxiliary field \( F \) is non-dynamical, it plays a key role in off-shell supersymmetry. Its algebraic equation of motion is
\begin{equation}
\frac{\partial \mathcal{L}_{\mathrm{SUSY}}}{\partial \bar{F}} = - \frac{1}{2} F \quad \Rightarrow \quad F = 0 \quad \text{(on-shell)}.
\end{equation}
However, in contexts involving spontaneous SUSY breaking---such as measurement collapse or modular instability---\( F \) may acquire a nonzero expectation value~\cite{ishibashi2015supersymmetric}, serving as an order parameter for collapse-induced symmetry breaking. This supports a dynamical interpretation of projection as a supersymmetry-breaking transition in internal time.

In summary, the supersymmetric extension of MTG embeds coherence, curvature, and measurement into a unified algebraic structure. The internal time field and its fermionic partner evolve under a gauge connection consistent with supersymmetry, while projection dynamically deforms this structure in a covariant and geometrically principled way. This formalism paves the way for embedding MTG into ultraviolet-complete theories, including superstring compactifications and supersymmetric quantum gravity, where time, measurement, and geometry admit a unified algebraic and topological realization.

%%%%%%%%%%%%%%%%%%%%%%%%%%%%%%
\subsection{Emergence of Time via Measurement}

The MTG framework proposes that classical temporal order and causal structure are not fundamental, but emergent properties arising from sequences of quantum measurements acting on a fiber-valued internal time field. These measurements collapse smooth sections \( \tau \in \Gamma(E) \) of the time-fiber bundle \( \pi: E \to M \) into classical outcomes, producing an effective but observer-relative notion of temporal flow. This emergence is governed by projection geometry, parallel transport, and fiber holonomy, as formalized in the following result.

\begin{theorem}[Emergence of Temporal Flow from Measurement]
Let \( \pi: E \to M \) be a smooth fiber bundle equipped with a connection \( \nabla \), and let \( \mu: \Gamma(E) \to \mathcal{O}(M) \) denote a projection operator that maps sections \( \tau \) to classical fiber observables. Suppose \( \gamma_i \subset M \) are observer-aligned causal curves, and let \( \{x_i\}_{i=1}^n \subset \gamma_i \) denote discrete spacetime points at which measurement events occur. Then:

(1) The observer’s effective classical temporal order is given by the sequence
\[
\mathcal{T}_{\mathrm{eff}} := \left\{ \mu_i \circ P_{\gamma_i}[\tau] \right\}_{i=1}^n,
\]
where \( P_{\gamma_i} \) denotes parallel transport of \( \tau \) along \( \gamma_i \) via \( \nabla \), initialized from a common point \( x_1 \).

(2) The quantum entanglement between two events \( x, y \in M \) along a closed loop \( \gamma_{xy} \) is quantified by the holonomy norm of the connection:
\[
\mathcal{E}(x, y) \sim \left\| \operatorname{Hol}_\nabla(\gamma_{xy}) \right\| = \left\| \mathcal{P} \exp \oint_{\gamma_{xy}} A \right\|,
\]
where \( A \) is the connection one-form, and \( \mathcal{P} \) denotes path ordering.
\end{theorem}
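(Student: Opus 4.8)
The statement is largely structural rather than computational: part~(1) packages the \emph{definition} of the observer's emergent temporal order, so the work there is to verify that the construction is well-posed; part~(2) asserts a correspondence (indicated by $\sim$) between an entanglement functional and a holonomy norm, which I would establish by reducing it to the non-abelian Stokes theorem together with Theorems~\ref{thm:flatness_synchronization} and~\ref{thm:curvature_obstruction}.

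For part~(1), the plan is threefold. First, invoke the standard existence-and-uniqueness theorem for the parallel-transport ODE $\dot{\sigma}(t) + A(\dot\gamma_i(t))\cdot \sigma(t) = 0$ along each smooth causal curve $\gamma_i$, which shows that $P_{\gamma_i}[\tau]$ is well-defined once the initial value at $x_1$ is fixed. Second, note that $\mu_i$ maps sections to classical fiber observables by hypothesis, so each composite $\mu_i \circ P_{\gamma_i}[\tau]$ is a well-defined element of $\mathcal{O}(M)$. Third, observe that the causal order of the measurement points $\{x_i\}$ along the observer-aligned curves induces a total order on the index set, so $\mathcal{T}_{\mathrm{eff}}$ is a genuine ordered sequence — the emergent classical time. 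This requires nothing beyond the smoothness already assumed, so I expect it to be short.

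For part~(2), I would proceed as follows. First, fix the norm to be the conjugation-invariant (hence basepoint-independent along the conjugacy class) distance $\|\operatorname{Hol}_\nabla(\gamma_{xy})\| := \mathrm{dist}_{G_{\text{time}}}\!\big(\operatorname{Hol}_\nabla(\gamma_{xy}),\, \mathrm{id}\big)$, or equivalently $\|\operatorname{Hol}_\nabla(\gamma_{xy}) - \mathrm{id}\|$ in a faithful unitary matrix representation. Second, when $\gamma_{xy}$ bounds a surface $S \subset M$, apply the non-abelian Stokes theorem to express $\operatorname{Hol}_\nabla(\gamma_{xy})$ as the surface-ordered exponential of $F$ over $S$; this bounds the holonomy norm above and below by the curvature flux through $S$, and in particular forces it to vanish iff $F|_S \equiv 0$ in the relevant gauge. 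Third, invoke Theorem~\ref{thm:flatness_synchronization}: $F = 0$ on a simply connected region makes the fibers globally synchronizable, so the temporal states at $x$ and $y$ are parallel and carry no relative entanglement; conversely Theorem~\ref{thm:curvature_obstruction} shows that a nontrivial curvature class obstructs such global alignment. Identifying $\mathcal{E}(x,y)$ with a monotone of this coherence obstruction then yields $\mathcal{E}(x,y) = 0 \iff \|\operatorname{Hol}_\nabla(\gamma_{xy})\| = 0$ and, to leading order in the flux, $\mathcal{E}(x,y)$ increasing in the holonomy norm — which is the claimed relation.

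The hard part will be that part~(2) is only meaningful once $\mathcal{E}(x,y)$ is pinned down as a concrete entanglement measure — say the fiberwise relative entropy $S_{\mathrm{rel}}(\rho_\tau \| \rho_0)$ of the coherence-potential subsection, or the entanglement entropy of the reduced state on $T_x \cup T_y$ — and the most one can honestly prove is the qualitative equivalence ``entanglement present $\iff$ holonomy nontrivial'' together with a leading-order proportionality; promoting $\sim$ to a closed-form identity would require an explicit model of how the projected state $\rho_\tau$ depends on the transported section, which the framework specifies only schematically. I would therefore state part~(2) as this equivalence-with-asymptotics and flag the dependence on the choice of entanglement functional, rather than claiming an exact formula.
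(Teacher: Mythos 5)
Your proposal follows the same skeleton as the paper's proof but supplies considerably more mathematical content at each step. For part (1), the paper simply writes $\tau(x_i) = U(x_i,x_1)\cdot\tau(x_1)$ and observes that the composite $\mu_i \circ P_{\gamma_i}[\tau]$ records both the initial datum and the transport geometry, noting that curvature makes the outcome sequence path-dependent; your explicit appeal to existence--uniqueness for the parallel-transport ODE is the rigorous underpinning the paper leaves implicit, and your observation about the induced total order is likewise tacit in the paper. For part (2), the paper's argument is purely dichotomous: flat connection $\Rightarrow$ trivial holonomy $\Rightarrow$ fibers at $x$ and $y$ decouple; nontrivial holonomy $\Rightarrow$ the field ``fails to factorize'' $\Rightarrow$ residual entanglement. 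It never specifies the norm, never invokes the non-abelian Stokes theorem, and never ties the claim back to Theorems~\ref{thm:flatness_synchronization} and~\ref{thm:curvature_obstruction}. Your additions --- fixing a conjugation-invariant norm, bounding the holonomy by the curvature flux through a spanning surface, and routing the flat/curved dichotomy through the two earlier theorems --- buy a genuinely sharper statement (vanishing of the holonomy norm iff $F|_S \equiv 0$, plus a leading-order monotone relation) at the cost of requiring $\gamma_{xy}$ to bound a surface, a hypothesis the paper does not impose. Most importantly, your closing caveat is exactly right and is the honest diagnosis of why the theorem as stated cannot be proved as an identity: $\mathcal{E}(x,y)$ is never defined anywhere in the paper as a concrete entanglement functional, so the relation $\sim$ is at best the qualitative equivalence you describe. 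The paper's proof silently elides this by treating ``fails to factorize'' as synonymous with ``entangled,'' which is precisely the gap you flag. In short, your route is a strict refinement of the paper's, and where you deviate (non-abelian Stokes, explicit norm, the equivalence-with-asymptotics reformulation) you deviate in the direction of making the claim provable.
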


\begin{proof}
Let \( \gamma_i \) be a timelike or null curve parameterizing the observer’s worldline, and let \( x_i \in \gamma_i \) be points where measurements occur. The connection \( \nabla \) defines parallel transport operators \( P_{\gamma_i} \) such that
\[
\tau(x_i) = P_{\gamma_i}[\tau](x_i) = U(x_i, x_1) \cdot \tau(x_1),
\]
where \( U(x_i, x_1) \) is the parallel propagator along \( \gamma_i \). Measurement at \( x_i \) applies a projection \( \mu_i \), collapsing \( \tau(x_i) \) to a classical outcome. The resulting observable \( \mu_i \circ P_{\gamma_i}[\tau] \) encodes both the initial condition and the intervening transport geometry. When curvature is present, the non-commutativity of transport implies that the sequence of outcomes depends on the path geometry, generating a directional temporal flow.

To quantify entanglement between two measurement events \( x \) and \( y \), consider a closed loop \( \gamma_{xy} \subset M \) traversing forward and return paths between them. The holonomy around this loop is given by
\[
\operatorname{Hol}_\nabla(\gamma_{xy}) = \mathcal{P} \exp \oint_{\gamma_{xy}} A,
\]
and captures the failure of fiber values to return to their original configuration under transport. When \( \nabla \) is flat, the holonomy is trivial, and the fiber values at \( x \) and \( y \) decouple. A nonzero holonomy norm implies residual entanglement: the internal time field \( \tau \) fails to factorize between the events, and measurements at one site influence the modular context of the other. Thus, \( \left\| \operatorname{Hol}_\nabla(\gamma_{xy}) \right\| \) serves as a geometric witness of temporal entanglement.
\end{proof}

This result formalizes a central principle of MTG: time is not an absolute background, but a relational construct arising from measurement along causal paths. The observer experiences temporal progression by collapsing \( \tau \) along their trajectory, while the connection \( \nabla \) determines how these outcomes are geometrically related. In regimes where measurement is dense and curvature negligible, classical time emerges as a smooth and ordered parameter. In contrast, sparse projection or strong curvature leads to contextual, entangled, and path-dependent time.

The holonomy of the time-fiber connection governs the entanglement structure of temporal geometry. Where curvature is large, modular alignment is obstructed, and time fragments into a web of observer-relative sequences. Classical spacetime, in this framework, is a macroscopic approximation to the informational geometry of collapse. The theorem encapsulates the idea that measurement, transport, and coherence flow—not external clocks—are the fundamental ingredients from which temporal order arises.

\begin{figure}[!ht]
\centering
\begin{tikzpicture}[scale=1.2, every node/.style={scale=0.9}]

% Worldline
\draw[very thick,->] (0,0) -- (0,5.5) node[above] {$\gamma$: Observer worldline};

% Spacetime points x_i
\foreach \y/\label in {1/{$x_1$}, 2/{$x_2$}, 3/{$x_3$}, 4.5/{$x_n$}} {
  \filldraw[black] (0,\y) circle (2pt) node[left] {\label};
  \draw[->,thick,blue!60!black] (-1,\y) -- (-0.1,\y);
  \node[blue!60!black] at (-1.1,\y) {$\tau(x_i)$};
  \node[right,gray] at (0.2,\y) {$\mu_i(\tau)$};
}

% Parallel transport arrows
\draw[->,dashed,gray] (0,1) -- (1,2) node[midway, above right] {$P_{\gamma_{12}}$};
\draw[->,dashed,gray] (1,2) -- (0,3) node[midway, above right] {$P_{\gamma_{23}}$};

% Holonomy loop
\draw[->,red!70!black,thick] (0,1) to[out=30,in=210] (1.5,2);
\draw[->,red!70!black,thick] (1.5,2) to[out=30,in=210] (0,3);
\draw[->,red!70!black,thick] (0,3) to[out=150,in=-30] (-1.5,2);
\draw[->,red!70!black,thick] (-1.5,2) to[out=150,in=-30] (0,1);
\node[red!70!black] at (0,0.4) {\small Holonomy loop $\gamma_{xy}$};

% Labels
\node at (2.3,2) {\small $\operatorname{Hol}_\nabla(\gamma_{xy}) = \mathcal{P} \exp \oint A$};

\end{tikzpicture}
\caption{Emergence of temporal order via measurement along an observer's worldline \( \gamma \). Projection of the internal time field \( \tau \) at points \( x_i \) produces classical outcomes \( \mu_i(\tau) \), while the holonomy of the connection \( \nabla \) along closed loops such as \( \gamma_{xy} \) encodes temporal entanglement.}
\label{fig:projection_holonomy}
\end{figure}
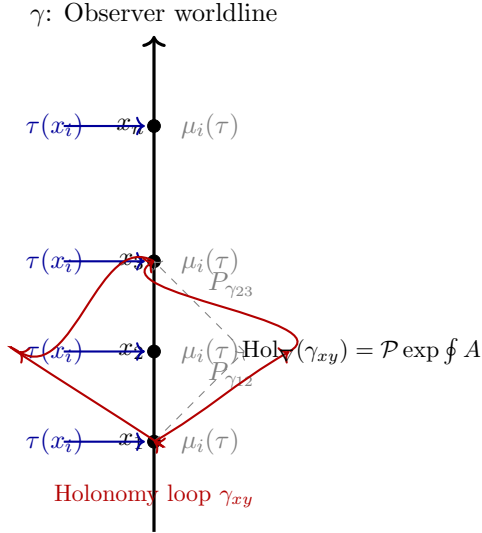

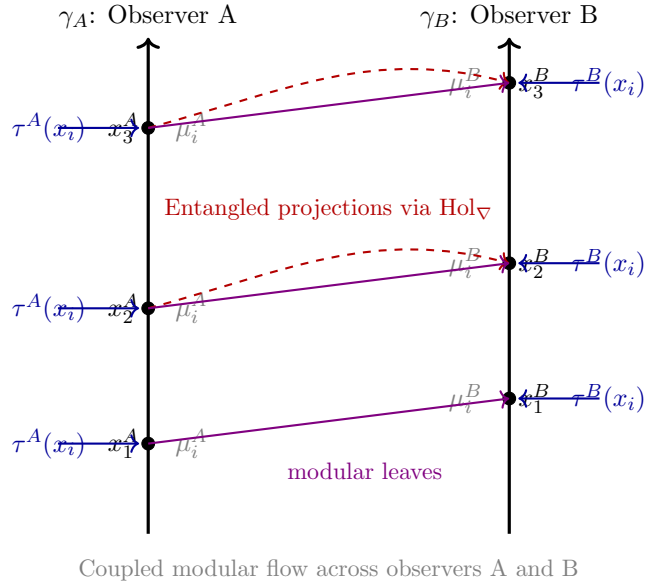
\begin{figure}[!ht]
\centering
\begin{tikzpicture}[scale=1.2, every node/.style={scale=0.9}]

% Observer 1 worldline
\draw[very thick,->] (-2,0) -- (-2,5.5) node[above] {$\gamma_A$: Observer A};

% Observer 2 worldline
\draw[very thick,->] (2,0) -- (2,5.5) node[above] {$\gamma_B$: Observer B};

% Measurement points A
\foreach \y/\label in {1/{$x_1^A$}, 2.5/{$x_2^A$}, 4.5/{$x_3^A$}} {
  \filldraw[black] (-2,\y) circle (2pt) node[left] {\label};
  \node[blue!60!black] at (-3.1,\y) {$\tau^A(x_i)$};
  \draw[->,thick,blue!60!black] (-3,\y) -- (-2.1,\y);
  \node[right,gray] at (-1.8,\y) {$\mu_i^A$};
}

% Measurement points B
\foreach \y/\label in {1.5/{$x_1^B$}, 3/{$x_2^B$}, 5/{$x_3^B$}} {
  \filldraw[black] (2,\y) circle (2pt) node[right] {\label};
  \node[blue!60!black] at (3.1,\y) {$\tau^B(x_i)$};
  \draw[->,thick,blue!60!black] (3,\y) -- (2.1,\y);
  \node[left,gray] at (1.8,\y) {$\mu_i^B$};
}

% Entangled connection lines
\draw[->,red!70!black,thick,dashed] (-2,2.5) to[out=15,in=165] (2,3);
\draw[->,red!70!black,thick,dashed] (-2,4.5) to[out=15,in=165] (2,5);

% Labels
\node[red!70!black] at (0,3.6) {\small Entangled projections via $\operatorname{Hol}_\nabla$};
\node[gray] at (0,-0.4) {\small Coupled modular flow across observers A and B};

% Optional: modular foliation arrows
\draw[->,violet,thick] (-2,1) -- (2,1.5);
\draw[->,violet,thick] (-2,2.5) -- (2,3);
\draw[->,violet,thick] (-2,4.5) -- (2,5);
\node[violet] at (0.4,0.7) {\small modular leaves};

\end{tikzpicture}
\caption{Entangled projections across two observer worldlines \( \gamma_A \) and \( \gamma_B \). Internal time fields \( \tau^A \) and \( \tau^B \) are projected at discrete points \( x_i \), with coherence coupled via modular flow and connection holonomy. Dashed red arrows represent nonlocal entanglement mediated by fiber curvature. Violet arrows represent modular foliation across spacetime slices.}
\label{fig:entangled_modular_projection}
\end{figure}

\begin{figure}[!ht]
\centering
\begin{tikzpicture}[scale=1.5]

% Draw torus as flattened annulus
\shade[ball color=blue!10] (0,0) circle (1cm);
\fill[white] (0,0) circle (0.4cm);
\draw[thick] (0,0) circle (1cm);
\draw[thick] (0,0) circle (0.4cm);

% Modular flow trajectory
\draw[->,red!70!black,thick,domain=0:6.28,smooth,variable=\t,samples=100]
    plot ({0.7*cos(deg(\t)) + 0.2*cos(2*deg(\t))},
          {0.7*sin(deg(\t)) + 0.2*sin(3*deg(\t))});
\node[red!70!black] at (1.2,0.6) {\small modular flow \(\varphi_\lambda\)};

% Projection slices
\foreach \angle in {30, 75, 130, 200, 250} {
  \draw[violet!90!black,thick] ({1*cos(\angle)}, {1*sin(\angle)}) -- 
                                ({0.4*cos(\angle)}, {0.4*sin(\angle)});
}

% Labels for slices
\node[violet!90!black] at (0.5,-1.2) {\small projection slices};
\node at (0,-1.5) {\small internal fiber \( T^2 \) at fixed spacetime point};

\end{tikzpicture}
\caption{Modular flow on the internal toroidal fiber \( T^2 \). The red curve represents the integral trajectory of the modular Hamiltonian \( \varphi_\lambda \), while violet lines represent measurement-induced projections that collapse the internal state onto classical time values. This illustrates the MTG notion that classical time emerges from slicing modular orbits via contextual measurement.}
\label{fig:modular_flow_torus}
\end{figure}
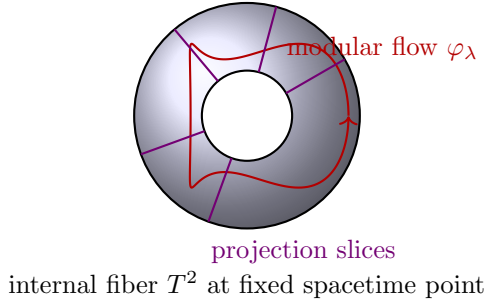

In regions where the curvature tensor \( F \) vanishes, the internal time field \( \tau \) undergoes coherent, reversible transport. As \( F \) increases, alignment among fibers breaks down, and \( \mu[\tau] \) begins to localize. This collapse into classical temporal directionality generates effective entropy production, as coherence is converted into classical distinguishability. The process is fundamentally observer-dependent: different foliations \( \Sigma^{(O)} \) perceive different decoherence rates and time asymmetries, tied to their measurement-aligned projections.

The curvature of the internal time bundle plays a dual role: geometrically, it encodes the local obstruction to fiber alignment; informationally, it governs the modular collapse dynamics via observer-relative projection vectors \( \mu[\tau] \). These dynamics produce an effective, direction-dependent metric and introduce the seeds of temporal irreversibility. The next section formalizes this in the context of quantum field observables and coherence localization.

\section{Quantum Formulation}\label{sec:4}
To complete the formulation of the MTG as a predictive quantum theory, we now quantize the dynamical fields that govern coherence, curvature, and observer-relative projection. The central challenge is that time is no longer an external parameter, but an internal, fiber-valued observable shaped by measurement. We adopt two complementary approaches to quantization—canonical and path-integral—each illuminating distinct aspects of relational evolution, decoherence, and modular dynamics.

The first approach follows the canonical quantization procedure, in which classical fields are promoted to operators on a Hilbert space. This method imposes equal-time commutation and anticommutation relations consistent with the gauge and supersymmetric structure established in Section~3. In this formalism, time evolution is replaced by the propagation of operator-valued internal time observables subject to measurement constraints. Gauge fixing and supersymmetry algebra play a critical role in selecting physical states and defining consistent dynamics within the constrained phase space.

The second approach adopts a path-integral quantization, where the quantum theory is defined as a sum over histories of the fiber-valued time field and its supersymmetric partners. Here, the integral includes not only the bosonic field \( \tau \) and gauge connection \( A_\mu \), but also the fermionic fluctuations \( \psi \) and the auxiliary field \( F \). The measurement process is encoded by inserting projection operators into the functional integral, effectively collapsing nonclassical configurations and enforcing selection over informationally coherent trajectories. This structure allows for a manifestly covariant and non-unitary account of decoherence, in which classical temporal order emerges from entangled interference patterns.

Let us denote the total field content of the theory by
\begin{equation}
\Phi = (\tau, \psi, A_\mu, F),
\end{equation}
where \( \tau \) represents the internal bosonic time section, \( \psi \) its Grassmann-valued fermionic partner, \( A_\mu \) the connection on the time-fiber bundle, and \( F \) the auxiliary field enabling off-shell supersymmetry. The projection density \( \rho(x) \), encoding the rate of measurement-induced collapse, may be treated either as a fixed classical background or as a dynamical field in its own right, depending on the informational setting.

Quantization proceeds from the supersymmetric MTG action,
\begin{equation}
S_{\mathrm{SUSY}}[\Phi] = \int_M \mathcal{L}_{\mathrm{SUSY}} \, \mathrm{d}^d x,
\end{equation}
as constructed in Section~3.2. The Lagrangian includes coherent bosonic and fermionic dynamics, gauge curvature, and explicitly non-Hermitian measurement terms that project internal time into classical configurations. These non-Hermitian terms reflect the active role of observation in collapsing quantum superpositions of temporal states.

In the subsections that follow, we develop the formal construction of the MTG quantum theory using both canonical and path-integral methods. The analysis reveals how decoherence, relational time, and causal asymmetry arise not from external evolution but from the quantum geometry of measurement itself.

%%%%%%%%%%%%%%%%%%%%%%%%%%%%%%%%
\subsection{Canonical Quantization of Internal Time Geometry}
Canonical quantization of the MTG framework proceeds by promoting the classical fields introduced in Section~\ref{sec:3} to operator-valued distributions acting on a graded Hilbert space. The relevant phase space structure consists of the internal time field \( \tau \), its fermionic partner \( \psi \), the gauge connection \( A_\mu \), and an auxiliary scalar field \( F \). The field \( \psi \) is a Grassmann-valued spinor, and \( F \) serves to close the supersymmetric multiplet off-shell. These fields together form the dynamical content of the supersymmetric Lagrangian \( \mathcal{L}_{\mathrm{SUSY}} \), introduced in Section~\ref{sec:2}.

We work in temporal gauge, where the canonical variables are the internal time field \( \tau \) and its conjugate momentum \( \pi_\tau \). These are promoted to operator-valued distributions on a fixed spacelike hypersurface \( \Sigma_t \subset M \), satisfying the equal-time commutation relation
\begin{equation}
[\hat{\tau}(x), \hat{\pi}_\tau(y)] = i \hbar \, \delta^{(d-1)}(x - y).
\end{equation}
The conjugate momentum follows from the kinetic term as \( \pi_\tau = D^0 \tau \), with \( D^0 \) denoting the temporal component of the covariant derivative defined by the gauge connection \( A_\mu \).

The gauge symmetry of the time-fiber bundle imposes a quantum Gauss constraint of the form
\begin{equation}
\hat{G}^a(x) := D_i F^{i0,a}(x) - g^2 J^0_{\tau,a}(x) = 0,
\end{equation}
where \( F^{\mu\nu,a} \) is the gauge field strength expanded in a Lie algebra basis \( \{T^a\} \subset \mathfrak{g}_{\mathrm{time}} \), and \( J^0_{\tau,a} \) is the time component of the entanglement current projected onto \( T^a \). In the quantum theory, the Gauss constraint acts on physical states as
\begin{equation}
\hat{G}^a(x) \, |\Psi_{\mathrm{phys}}\rangle = 0,
\end{equation}
enforcing local gauge invariance of the quantum state under internal time-fiber transformations.

Fermionic quantization follows via graded canonical quantization:
\begin{equation}
\{ \hat{\psi}_\alpha(x), \hat{\psi}^\dagger_\beta(y) \} = \hbar \, \delta_{\alpha\beta} \, \delta^{(d-1)}(x - y),
\end{equation}
where \( \psi \) is the spinor superpartner of \( \tau \), and \( \hat{\psi}_\alpha \) acts on a fermionic Fock space. The auxiliary field \( F \), being algebraically constrained off-shell, is quantized trivially:
\begin{equation}
\hat{F}(x) = 0.
\end{equation}

The gauge sector follows standard canonical Yang--Mills quantization. The commutation relation between the spatial components of the connection and its conjugate electric field is
\begin{equation}
[ \hat{A}_i^a(x), \hat{E}_j^b(y) ] = i \hbar \, \delta^{ab} \delta_{ij} \, \delta^{(d-1)}(x - y),
\end{equation}
with the Gauss law constraint operator
\begin{equation}
\hat{G}^a(x) := D_i \hat{E}^{ia} - \hat{J}^{a}_{\mathrm{ent}}(x) \approx 0,
\end{equation}
enforcing gauge invariance. The entanglement current \( \hat{J}^{a}_{\mathrm{ent}} \) is inherited from the covariant structure of the MTG action and encodes matter–connection interactions.

Physical states \( |\Psi\rangle \in \mathcal{H}_{\mathrm{phys}} \) must satisfy both the Gauss constraint and the supersymmetry constraint generated by the supercharge:
\begin{equation}
\hat{Q}_\epsilon = \int_{\Sigma_t} \mathrm{d}^{d-1}x \left[ \bar{\epsilon} \left( \gamma^0 D_0 \hat{\tau} + \gamma^i D_i \hat{\tau} + \hat{F} \right) \hat{\psi} \right].
\end{equation}
These constraints ensure invariance under internal gauge and supersymmetry transformations and define the physical sector of the MTG Hilbert space.

Time evolution in MTG is relational and generated not by a global Hamiltonian but by projection-aligned hypersurface dynamics. The effective Hamiltonian includes both Hermitian and anti-Hermitian contributions:
\begin{equation}
\hat{H}_{\mathrm{eff}} = \hat{H}_{\mathrm{coh}} + i \lambda \int_{\Sigma_t} \rho(x) \left( \hat{J}^\mu(x) n_\mu \right) \mathrm{d}^{d-1}x,
\end{equation}
where \( \hat{H}_{\mathrm{coh}} \) encodes the gauge-invariant coherent dynamics and the imaginary term represents the directional flow of information induced by measurement. As a result, Heisenberg evolution becomes non-unitary, reflecting the collapse dynamics of the internal time field due to continuous projection.

This quantization framework reinforces the MTG view that time is not a background coordinate but a dynamically emergent observable: it arises through the projection of quantum states within the internal fiber bundle, constrained by coherence, entanglement, and observer-relative measurement.

\subsection{Path-Integral Quantization}

The MTG framework admits a path-integral quantization naturally adapted to the geometry of internal time and the dynamics of projection. In this formalism, amplitudes are computed by summing over field histories of \( \Phi = (\tau, \psi, A_\mu, F) \), with the influence of measurement encoded via projection constraints imposed on those histories.

In the absence of measurement, the supersymmetric partition function is given by
\begin{equation}
\mathcal{Z}_0 = \int \mathcal{D}\Phi \, e^{i S_{\mathrm{SUSY}}[\Phi]},
\end{equation}
where \( S_{\mathrm{SUSY}} = \int_M \mathcal{L}_{\mathrm{SUSY}} \, \mathrm{d}^d x \) includes all bosonic and fermionic contributions described in Section~3. The functional measure is defined modulo gauge symmetry and may be regularized via standard Faddeev--Popov or BRST procedures. Supersymmetric cancellations help ensure quantum consistency in the absence of projection.

To make this regularization explicit, we impose a gauge-fixing condition \( G[A] = 0 \), typically the temporal gauge \( A_0 = 0 \), and introduce the corresponding Faddeev--Popov determinant
\begin{equation}
\Delta_{\mathrm{FP}}[A] = \det \left( \frac{\delta G[A^\alpha]}{\delta \alpha} \right),
\end{equation}
which is incorporated via ghost and antighost fields \( \bar{c}, c \) in the extended action~\cite{faddeev1967feynman}. The full partition function then becomes
\begin{equation}
\mathcal{Z}_0 = \int \mathcal{D}\Phi \, \mathcal{D}\bar{c} \, \mathcal{D}c \; \exp\left( i S_{\mathrm{SUSY}}[\Phi] + i S_{\mathrm{ghost}}[\bar{c}, c, A_\mu] \right),
\end{equation}
where \( S_{\mathrm{ghost}} \) generates the determinant and preserves gauge invariance at the quantum level. Functional integration is thus performed over gauge-inequivalent field configurations, ensuring that unphysical degrees of freedom are excluded from the quantum ensemble.

To incorporate measurement, we introduce localized constraints at a finite set of spacetime points \( \{x_i\} \), representing the occurrence of projection events. Each projection selects a definite direction \( \tau^{\mathrm{obs}}_i \in T_{x_i} \) within the internal time fiber. The modified partition function becomes
\begin{equation}
\mathcal{Z}_{\{\tau^{\mathrm{obs}}_i\}} = \int \mathcal{D}\Phi \; \prod_{i=1}^N \delta\left( \mu[\tau(x_i)] - \tau^{\mathrm{obs}}_i \right) e^{i S_{\mathrm{SUSY}}[\Phi]},
\end{equation}
where \( \mu[\tau(x)] \) denotes the projection direction extracted from the internal time field. These insertions restrict the sum over field configurations to those consistent with observed outcomes and serve as dynamical symmetry-breaking terms in the path integral.
\begin{figure}[ht]
\centering
\begin{tikzpicture}[scale=1.1]

% Base time axis
\draw[->, thick] (0,0) -- (0,5) node[above] {\small Time};

% Classical paths
\foreach \x in {-2.2, -1.2, 0.2, 1.6, 2.4} {
    \draw[blue!60!black, opacity=0.5, decorate, decoration={snake, amplitude=0.3mm, segment length=2mm}]
        (\x,0) .. controls ({\x+0.5},1.5) and ({\x-0.5},3.5) .. (\x,5);
}

% Measurement hypersurfaces
\foreach \y in {1.5,3.0,4.5} {
    \draw[dashed, red!60!black, thick] (-2.7,\y) -- (2.7,\y);
    \node[left] at (-2.7,\y) {\tiny $\mu[\tau] = \tau_{\text{obs}}$};
}

% Projection effect (delta constraints)
\foreach \x/\y in {-2.2/1.5, -1.2/3.0, 1.6/4.5} {
    \draw[->, thick, violet] (\x,\y) -- ++(0.6,0.8)
        node[midway, above right] {\tiny projection};
}

% Amplitude label
\node at (0,-0.5) {\small $\int \mathcal{D}[\tau] \, e^{i S[\tau]} \, \delta(\mu[\tau] - \tau_{\text{obs}})$};

\end{tikzpicture}
\caption{MTG-modified path integral evolution. Classical paths \( \tau(x) \) propagate through configuration space, but are constrained at specific hypersurfaces by projection conditions \( \mu[\tau(x)] = \tau_{\text{obs}}(x) \), representing localized measurement events. These act as delta-function insertions in the path integral, breaking unitarity and introducing informational irreversibility.}
\label{fig:path_integral_projection}
\end{figure}
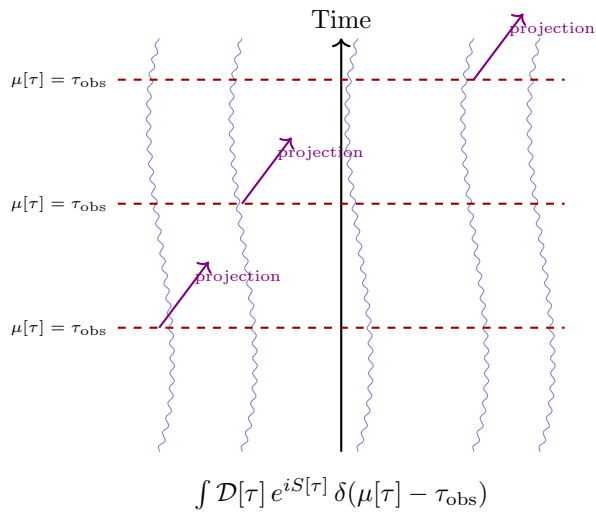

In the semiclassical limit, dominant contributions arise from classical configurations \( \Phi_{\mathrm{cl}} \) satisfying the variational condition
\begin{equation}
\frac{\delta S_{\mathrm{SUSY}}}{\delta \Phi} \Big|_{\Phi = \Phi_{\mathrm{cl}}} = \sum_{i=1}^N \delta^{(d)}(x - x_i) \frac{\delta \mu[\tau(x)]}{\delta \Phi},
\end{equation}
representing solutions to the equations of motion sourced by projection events. These configurations are continuous almost everywhere but acquire non-unitary, observer-relative discontinuities at measurement points.

Transition amplitudes between initial and final configurations are given by
\begin{equation}
\mathcal{A}_{\mathrm{MTG}} = \langle \tau_{\mathrm{out}} | \mathcal{T} \exp \left[ i S_{\mathrm{SUSY}} + i \sum_i \lambda \, \mu[\tau(x_i)] \right] | \tau_{\mathrm{in}} \rangle,
\end{equation}
where \( \lambda \) governs the strength of coupling to measurement, and \( \mathcal{T} \) denotes modular ordering induced by the observer’s reference frame. These amplitudes describe relational evolution, not with respect to an external time parameter, but through measurement-aligned collapse histories that define effective temporal structure.

\begin{theorem}[Projection Constraints Define a Reduced Path Space]
Let \( \mathcal{H}_{\mathrm{kin}} \) denote the kinematical Hilbert space of unconstrained quantum configurations in MTG. Suppose projection conditions \( \mu_i[\tau(x_i)] = \tau^{\mathrm{obs}}_i \) are imposed at \( N \) spacetime points. Then the modified measure
\[
\mathcal{D}_{\{\mu\}} \Phi := \mathcal{D}\Phi \, \prod_{i=1}^N \delta\big( \mu_i[\tau(x_i)] - \tau^{\mathrm{obs}}_i \big)
\]
projects \( \mathcal{H}_{\mathrm{kin}} \) onto a subspace \( \mathcal{H}_{\mathrm{phys}} \subset \mathcal{H}_{\mathrm{kin}} \) consisting of measurement-consistent histories. Moreover, this subspace is preserved under modular evolution.
\end{theorem}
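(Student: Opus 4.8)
The plan is to convert the path-integral statement into the operator language of constrained quantization: I would realize the product of delta insertions as an orthogonal projector on $\mathcal{H}_{\mathrm{kin}}$, identify its range with $\mathcal{H}_{\mathrm{phys}}$, and then show this projector commutes with the generator of modular evolution. First I would regularize each insertion $\delta(\mu_i[\tau(x_i)] - \tau^{\mathrm{obs}}_i)$ by a Gaussian $g_\varepsilon$ of width $\varepsilon$ measured in the fiber norm $\langle\cdot,\cdot\rangle$ fixed in Section~2, so that $\mathcal{D}_{\{\mu\}}\Phi$ is well-defined for $\varepsilon > 0$. Reconstructing states from this regularized measure yields an operator $\hat{P}_\varepsilon = \prod_{i=1}^N \hat{P}_{x_i,\varepsilon}$ acting on history wavefunctionals by multiplication with $g_\varepsilon(\mu_i[\tau(x_i)] - \tau^{\mathrm{obs}}_i)$; sending $\varepsilon \to 0^+$ and absorbing the divergent per-insertion normalization produces a densely defined idempotent $\hat{P}$, and I would \emph{define} $\mathcal{H}_{\mathrm{phys}}$ as the closed range of $\hat{P}$.

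Next I would check that $\mathcal{H}_{\mathrm{phys}}$ is a genuine subspace. Symmetry of $\hat{P}$ with respect to the Hermitian pairing on histories follows because $\mu_i[\tau(x_i)]$ and $\tau^{\mathrm{obs}}_i$ enter the constraint through the real combination fixed by the $\bar{\tau}$ versus $\tau^\dagger$ convention, so $g_\varepsilon$ is real and $\hat{P}_\varepsilon^\dagger = \hat{P}_\varepsilon$; idempotency descends from $g_\varepsilon \star g_\varepsilon \to g_\varepsilon$ after renormalization, with distinct insertions commuting since they are localized at the fixed points $x_i$. Hence $\hat{P}$ is an orthogonal projection and $\mathcal{H}_{\mathrm{phys}} = \hat{P}\,\mathcal{H}_{\mathrm{kin}}$ is closed. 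When $\mu_i$ is nonlinear the constraint locus is not a linear subspace, but $\mathcal{H}_{\mathrm{phys}}$ is by construction the closed linear span of the measurement-consistent history states, which is automatically a subspace; alternatively one may linearize $\mu_i$ about a classical solution $\Phi_{\mathrm{cl}}$ of the projection-sourced equations of motion already displayed in this section.

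For the invariance claim, modular evolution is the flow $e^{-\mathrm{i} s \hat{K}}$ generated by the modular Hamiltonian $\hat{K}$ (equivalently the modular-ordered $\mathcal{T}\exp$ of $\hat{H}_{\mathrm{eff}}$), with $K = -\log\rho$ as in the Curvature-Driven Modular Evolution proposition. I would show $[\hat{K},\hat{P}_{x_i}]$ annihilates $\mathcal{H}_{\mathrm{phys}}$ using two facts: the insertions are pinned to fixed spacetime points $x_i$, which modular flow does not displace—it acts within the fiber $T_{x_i}$ and along the congruence $n^\mu$ but leaves the argument $x_i$ of each constraint untouched; and on the constraint locus the projected configuration is modular-stationary, since by the Stationarity of Temporal Current theorem a normalized $\tau$ with vanishing projection current obeys $D_\mu\tau = \alpha_\mu\tau$ with real $\alpha_\mu$, so the local modular flow merely rephases $\tau$ and therefore fixes $\mu[\tau]$. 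Consequently $e^{-\mathrm{i} s \hat{K}}$ carries measurement-consistent histories to measurement-consistent histories, giving $e^{-\mathrm{i} s \hat{K}}\hat{P} = \hat{P}\,e^{-\mathrm{i} s \hat{K}}\hat{P}$; combined with the symmetry of $\hat{P}$ from the previous step this yields $[e^{-\mathrm{i} s \hat{K}},\hat{P}] = 0$, so $\mathcal{H}_{\mathrm{phys}}$ is preserved.

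The main obstacle is precisely the modular-stationarity invoked in the last step: it requires the projection map $\mu$ to be equivariant under the local modular flow, $\mu[\varphi_\lambda \tau] = \mu[\tau]$ whenever $\mu[\tau] = \tau^{\mathrm{obs}}$, which is a structural property of how $\mu$ is built (a Born-rule projection onto coherent states that are modular eigenvectors) rather than a formal identity, and the Stationarity theorem supplies it only for normalized $\tau$ and to leading order. I would therefore either restrict to modular-adapted observers for whom $\mu$ commutes with $\varphi_\lambda$ by construction, or establish equivariance perturbatively and bound the remainder, in which case $\mathcal{H}_{\mathrm{phys}}$ is invariant only up to corrections controlled by the curvature $F$—consistent with the paper's picture that exact modular coherence degrades as $F$ grows. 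A secondary but routine subtlety is the $\varepsilon\to0$ renormalization of $\hat{P}$; this is the standard issue of imposing sharp constraints in a functional integral and can be handled by working on the reduced phase space of the constraint surface rather than with the literal $\delta$-functions.
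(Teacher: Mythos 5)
Your proposal follows the same basic skeleton as the paper's proof: realize the delta insertions as a (formal) orthogonal projector, identify $\mathcal{H}_{\mathrm{phys}}$ with its range, and then argue that modular evolution preserves that range. The difference is one of rigor and honesty rather than of route. The paper's own argument is essentially two assertions: that the product of delta functions ``acts formally as an orthogonal projector in the rigged Hilbert space,'' and that the modular Hamiltonian preserves $\mathcal{H}_{\mathrm{phys}}$ ``because $K_{\mathrm{mod}}$ is a functional of the entanglement structure induced by these same projections.'' You supply what the paper omits: the Gaussian regularization and $\varepsilon\to 0$ renormalization needed to make sense of an idempotent, self-adjoint $\hat{P}$ built from sharp constraints (and the observation that for nonlinear $\mu_i$ the constraint locus is not linear, so $\mathcal{H}_{\mathrm{phys}}$ must be taken as a closed span or obtained by linearization). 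Most importantly, you correctly identify that the invariance claim is the genuinely nontrivial step: it requires the equivariance property $\mu[\varphi_\lambda\tau]=\mu[\tau]$ on the constraint locus, which you attempt to ground in the Stationarity of Temporal Current theorem and then explicitly flag as a structural hypothesis on $\mu$ rather than a derived fact, with corrections controlled by the curvature $F$. The paper's justification of this step is closer to circular --- it presupposes that a Hamiltonian ``built from'' the projections preserves them --- so your version is the stronger argument, and your caveat about restricting to modular-adapted observers (or accepting invariance only up to $F$-controlled corrections) is a legitimate weakening of the theorem's unqualified claim that the paper does not acknowledge.
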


\begin{proof}
The delta-function insertions restrict the configuration space to
\[
\mathcal{C}_{\mathrm{phys}} := \{ \Phi \in \Gamma(E) \mid \mu_i[\tau(x_i)] = \tau^{\mathrm{obs}}_i \text{ for all } i \},
\]
which defines the reduced path space. The corresponding projector in Hilbert space is given by
\[
\mathbb{P}_{\{\mu\}} := \prod_{i=1}^N \delta( \mu_i[\hat{\tau}(x_i)] - \tau^{\mathrm{obs}}_i ),
\]
which acts formally as an orthogonal projector in the rigged Hilbert space. Because the modular Hamiltonian \( K_{\mathrm{mod}} \) is a functional of the entanglement structure induced by these same projections, its action preserves \( \mathcal{H}_{\mathrm{phys}} \), completing the proof.
\end{proof}

\begin{theorem}[Equivalence of Modular Flow and Projection-Constrained Path Integral]
Let \( \gamma: [0,1] \to M \) be an observer-aligned modular trajectory, with projection events at points \( \{x_i\}_{i=1}^N \subset \gamma \). Then the transition amplitude satisfies
\[
\langle \tau_{\mathrm{out}} | \mathcal{T}_\gamma \, e^{i S_{\mathrm{SUSY}}[\Phi]} \prod_{i=1}^N \delta\big( \mu_i[\tau(x_i)] - \tau^{\mathrm{obs}}_i \big) | \tau_{\mathrm{in}} \rangle
=
\langle \tau_{\mathrm{out}} | \mathcal{T}_{\mathrm{mod}} \, e^{-i \int_\gamma K_{\mathrm{mod}}[\rho_A(\lambda)] \mathrm{d} \lambda} | \tau_{\mathrm{in}} \rangle.
\]
That is, modular time evolution is equivalent to a path-integral evolution constrained by measurement-induced projection.
\end{theorem}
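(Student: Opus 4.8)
The plan is to translate the projection-constrained path integral into operator language, recognize the $\delta$-insertions as the Hilbert-space projectors of the preceding theorem on the reduced path space, and then exhibit the resulting constrained propagator as a Lie--Trotter product that converges to the modular flow generated by $K_{\mathrm{mod}}$.

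First I would slice the observer-aligned trajectory $\gamma$ at the projection points, ordered by the modular parameter $\lambda_0 < \lambda_1 < \cdots < \lambda_N$, and apply the standard time-slicing construction: inserting complete sets of $\hat\tau$-eigenstates between successive infinitesimal evolution kernels turns the constrained partition function $\mathcal{Z}_{\{\tau^{\mathrm{obs}}_i\}}$ into the matrix element of the alternating composite
\begin{equation}
\mathbb{P}_N \, U(\lambda_N,\lambda_{N-1}) \, \mathbb{P}_{N-1} \cdots \mathbb{P}_1 \, U(\lambda_1,\lambda_0),
\end{equation}
where $U(\lambda_k,\lambda_{k-1})$ is the unconstrained (coherent, gauge-fixed, supersymmetric) propagator between adjacent slices and $\mathbb{P}_k = \delta\!\big(\mu_k[\hat\tau(x_k)] - \tau^{\mathrm{obs}}_k\big)$ is exactly the $k$-th factor of the projector $\mathbb{P}_{\{\mu\}}$ of the preceding theorem. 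By that theorem each $\mathbb{P}_k$ leaves $\mathcal{H}_{\mathrm{phys}}$ invariant, so the whole composite may be studied on $\mathcal{H}_{\mathrm{phys}}$.

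Next I would invoke Tomita--Takesaki theory for the algebra $\mathcal{A}_A$ of observables localized on the side $A$ of the entanglement cut fixed by the observer's foliation, with cyclic--separating vector supplied by $|\tau_{\mathrm{in}}\rangle$ together with the Euclidean (Hartle--Hawking-type) boundary data that prepare the state. The reduced density matrices along $\gamma$ are the $\rho_A(\lambda)$ appearing in the statement, and the modular Hamiltonian is $K_{\mathrm{mod}}[\rho_A(\lambda)] = -\log\rho_A(\lambda)$ up to the usual affine normalization. The decisive structural input --- again from the preceding theorem --- is that $K_{\mathrm{mod}}$ is a functional of precisely the entanglement data generated by the $\mathbb{P}_k$. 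I would then show that, restricted to $\mathcal{H}_{\mathrm{phys}}$, the elementary block $\mathbb{P}_k\,U(\lambda_k,\lambda_{k-1})$ agrees to first order in $\Delta\lambda_k := \lambda_k - \lambda_{k-1}$ with $\exp\!\big(-i\,K_{\mathrm{mod}}[\rho_A(\lambda_{k-1})]\,\Delta\lambda_k\big)$: the coherent part of $U$ supplies the modular conjugation $\Delta^{i\Delta\lambda_k}(\cdot)\Delta^{-i\Delta\lambda_k}$ on $\mathcal{A}_A$, while the projection $\mathbb{P}_k$ forces the correct reduced state $\rho_A(\lambda_k)$ at the next slice, so the generator of the constrained slicewise map is $-\log\Delta$ restricted to $A$. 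Taking $N\to\infty$ with $\max_k\Delta\lambda_k\to 0$ and the projection set becoming dense in $\gamma$, the Lie--Trotter--Kato product formula assembles these factors into $\mathcal{T}_{\mathrm{mod}}\exp\!\big(-i\int_\gamma K_{\mathrm{mod}}[\rho_A(\lambda)]\,\mathrm{d}\lambda\big)$, the modular ordering $\mathcal{T}_{\mathrm{mod}}$ being the $\lambda$-ordering inherited from $\mathcal{T}_\gamma$; sandwiching between $\langle\tau_{\mathrm{out}}|$ and $|\tau_{\mathrm{in}}\rangle$ gives the claimed identity.

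The hard part will be the step that identifies the generator of the non-unitary slicewise map $\mathbb{P}_k U$ with $-\log\rho_A$. One must show that the continuous-measurement (quantum Zeno) limit of dense $\mathbb{P}_k$-insertions does not simply freeze the dynamics but, because the projections are onto the modular-covariant observed values $\tau^{\mathrm{obs}}_i$ that \emph{track} the modular orbit $\gamma$, reproduces the Connes cocycle / modular conjugation rather than a generic dissipative flow; this is the content-bearing assumption, and I would justify it by adapting the known correspondence between continuous projective measurement along a modular orbit and modular flow. A second obstacle is analytic: $\log\rho_A$ is unbounded and the slicewise generators are not self-adjoint, so the Trotter--Kato hypotheses must be verified. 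I would control this by passing to a regulated theory --- the finite-dimensional fiber truncation already used in Section~\ref{sec:2} --- where all operators are bounded, proving the finite-dimensional identity exactly and then arguing stability of the limit. Finally, the fermionic, auxiliary, and gauge sectors must be disposed of: the Faddeev--Popov/BRST measure and the supersymmetric cancellations render the ghost and $F$ integrations trivial on $\mathcal{H}_{\mathrm{phys}}$, reducing the computation to the bosonic modular argument, while a fully rigorous treatment would require the modular theory of graded von Neumann algebras.
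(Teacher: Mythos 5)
Your proposal follows the same skeleton as the paper's proof --- slice $\gamma$ at the projection points and write the constrained amplitude as an alternating product of projectors and evolution segments --- but you push considerably further than the paper does, and in doing so you expose exactly where the real content lies. The paper's argument is three sentences: projections act as reduction maps, the system ``evolves unitarily with respect to the modular Hamiltonian $K_{\mathrm{mod}}$'' between projections, and the alternating sequence therefore reproduces the constrained path integral. It simply \emph{asserts} that the inter-projection dynamics is modular flow. But the left-hand side of the theorem is weighted by $e^{iS_{\mathrm{SUSY}}}$, so the propagator between slices is the unconstrained supersymmetric one, not $e^{-iK_{\mathrm{mod}}\Delta\lambda}$; the equivalence of the two is precisely the claim to be proved, and the paper's proof presupposes it. You correctly isolate this as the content-bearing step --- showing that $\mathbb{P}_k\,U(\lambda_k,\lambda_{k-1})$ agrees with $\exp(-iK_{\mathrm{mod}}[\rho_A(\lambda_{k-1})]\Delta\lambda_k)$ to first order --- and you propose to justify it via a Zeno-limit/Connes-cocycle argument, together with Trotter--Kato control of the unbounded generators and a disposal of the ghost and auxiliary sectors. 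None of that machinery appears in the paper, and your flagged assumption (that dense projections tracking the modular orbit reproduce modular conjugation rather than freezing the dynamics) remains unproven in your sketch as well; but your version is the more honest account, since it makes explicit the hypothesis that the paper's proof silently builds in. In short: same decomposition, but your proposal identifies and attempts to close a gap that the paper's own proof does not acknowledge.
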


\begin{proof}
Each projection collapses the state \( \tau(x_i) \) to a classical value \( \tau_i^{\mathrm{obs}} \), acting as a reduction map on the configuration space. Between projections, the system evolves unitarily with respect to the modular Hamiltonian \( K_{\mathrm{mod}} \), which is itself a functional of the reduced density matrix \( \rho_A(\lambda) \) determined by the projection history. The alternating sequence of projection constraints and modular unitary segments defines the same transition amplitude as the constrained path integral, establishing the claimed equivalence.
\end{proof}

Thus, in MTG, the path integral represents a sum over quantum trajectories dynamically filtered by projection; each trajectory encoding a candidate history of coherence, constrained by internal gauge geometry and measurement alignment.

In MTG, measurement is not imposed externally but emerges intrinsically as a functional constraint on the space of field configurations~\cite{tomaz2025rqd}. Projection events reduce the support of the path integral, restricting it to a subspace of realizable histories that conform to classical observation. This constraint induces a relational foliation of field space: time arises not as a pre-imposed background parameter, but as the ordered structure of modular projections aligned with observer-specific collapse sequences~\cite{debianchi2024achronotopic}.

Unlike conventional quantum field theory, where time is globally synchronized and externally specified, MTG treats time as a derived observable—an informational shadow cast by entangled quantum collapse. It is not imposed, but constructed: a structure woven from internal geometry, modular dynamics, and the fiberwise logic of measurement.

%%%%%%%%%%%%%%%%%%%%%%%%%%%%%
\subsection{Relational Interpretation and Emergent Time}

MTG departs fundamentally from conventional quantum field theory by replacing the assumption of a global external time parameter with a relational, observer-conditioned structure. In this framework, time emerges not from fixed spacetime coordinates, but from the informational sequence of projection events; quantum measurements that collapse internal time degrees of freedom and generate modular evolution.

In the canonical picture, evolution is no longer generated by a global Hamiltonian \( H \). Instead, local modular Hamiltonians \( K_{\mathrm{mod}} \) govern dynamics within observer-relative subregions of Hilbert space. These operators are conditioned on prior measurement events and generate modular flow along informational trajectories. The effective evolution operator takes the form:
\begin{equation}
U_{\mathrm{eff}} = \mathcal{T}_{\mathrm{mod}} \exp\left(-i \int_\gamma K_{\mathrm{mod}}(\lambda) \, \mathrm{d}\lambda \right),
\end{equation}
where \( \lambda \) is the modular time parameter along a trajectory \( \gamma \) determined by collapse events \( \{\mu[\tau(x_i)]\} \). Time is thus not an absolute background but a record of informational transitions—the accumulation of coherent updates across successive projections.

The path-integral perspective reinforces this view. Rather than summing over all field configurations, MTG imposes delta-functional constraints encoding measurement:
\begin{equation}
\mathcal{A}_{\mathrm{MTG}} = \int_{\mu[\tau] = \{\tau_i^{\mathrm{obs}}\}} \mathcal{D}\Phi \, e^{i S[\Phi]}.
\end{equation}
Each projection reduces the domain of the integral to histories consistent with classical outcomes. This defines a conditional probability amplitude, where the path integral no longer represents propagation between states in fixed time, but coherence across a sequence of measurement-induced boundaries. Spacetime itself becomes emergent—a relational construct shaped by the geometry of informational constraints.

%\begin{figure}[!ht]
%\centering
%\includegraphics[width=0.95\textwidth]{fig_mtg_interpretation.png}
%\caption{Diagrammatic summary of the MTG interpretive framework. Measurement events project internal time configurations \( \tau(x_i) \) to classical values \( \tau^{\mathrm{obs}}_i \), inducing modular flow along observer trajectories. Modular Hamiltonians \( K_{\mathrm{mod}} \) generate relational evolution between projections. The accumulation of collapse events defines an emergent causal order and effective spacetime geometry \( g^{\mathrm{eff}}_{\mu\nu} \).}
%\label{fig:mtg_interpretation}
%\end{figure}

As the number of projection events increases, the emergent metric \( g^{\mathrm{eff}}_{\mu\nu}(x) \) introduced in Section~\ref{sec:2.9} stabilizes into a classical Lorentzian structure. Causality arises from the alignment of collapse directions and the coherence of modular flow across the observer’s informational path. Time, in this picture, is not ``what clocks measure'' but ``what becomes consistent when measurements define a relational ordering across quantum branches.''

MTG thereby resolves a longstanding tension: quantum mechanics offers no intrinsic dynamics for time, while general relativity treats time as a geometric given. In MTG, both time and geometry emerge together, encoded in the same projection dynamics that shape coherence, causality, and classical experience.

The result is a reformulation of quantum evolution governed not by absolute parameters, but by internal consistency across measurement events. Modular Hamiltonians encode observer-relative flow; effective geometry is sculpted from projection histories; and the arrow of time accumulates statistically through successive reductions of quantum uncertainty. This interpretive framework sets the stage for embedding MTG in broader physical contexts, including cosmology, quantum gravity, and holography, where time and geometry are deeply informational in origin.

\section{Embedding in the Standard Model}\label{sec:5}
To demonstrate compatibility with known particle physics, we now construct an embedding of the Standard Model (SM) into the Measurement-Induced Temporal Geometry (MTG) framework. In this formulation, matter and gauge fields are promoted to sections of fiber-covariant bundles, allowing them to couple directly to both time-fiber curvature and measurement-induced projection. This reinterprets familiar particle dynamics as governed jointly by Standard Model interactions and internal temporal geometry.

The key structure is a unified principal bundle \( P \to M \) whose structure group combines the internal time symmetry with the Standard Model gauge group:
\begin{equation}
G_{\mathrm{unified}} = G_{\mathrm{time}} \times G_{\mathrm{SM}},
\end{equation}
where \( G_{\mathrm{time}} \) is the symmetry group of the internal time fibers—typically taken as a compact Abelian or noncompact subgroup such as \( U(1)^n \) or \( SL(2,\mathbb{R}) \)—and
\begin{equation}
G_{\mathrm{SM}} = SU(3)_C \times SU(2)_L \times U(1)_Y
\end{equation}
is the Standard Model gauge group. All matter fields are realized as sections of associated bundles transforming under representations of both factors in \( G_{\mathrm{unified}} \).

Each Standard Model field \( \Psi_i \) thus becomes a section of a tensor product bundle:
\begin{equation}
\Psi_i \in \Gamma\left(E_{\mathrm{time}} \otimes E_{\mathrm{SM}, i}\right),
\end{equation}
where \( E_{\mathrm{time}} \) encodes the internal temporal representation, and \( E_{\mathrm{SM}, i} \) is the usual vector bundle for particle species \( i \). The corresponding fiber-covariant derivative includes both time and SM gauge connections:
\begin{equation}
\mathcal{D}_\mu \Psi_i = \partial_\mu \Psi_i + A_\mu^{\mathrm{time}} \cdot \Psi_i + A_\mu^{\mathrm{SM}} \cdot \Psi_i,
\end{equation}
with \( A_\mu^{\mathrm{time}} \in \mathfrak{g}_{\mathrm{time}} \) and \( A_\mu^{\mathrm{SM}} \in \mathfrak{g}_{\mathrm{SM}} \).

This unified connection ensures that all matter and gauge fields experience both standard interactions and curvature induced by internal time geometry. In regions where projection occurs, the internal time components of \( \Psi_i \) are subject to collapse, yielding decoherence and observer-relative causal structure consistent with the MTG formalism of Section~\ref{sec:4}.

In what follows, we analyze the embedding by first specifying the temporal representations associated with Standard Model matter fields, then deriving their covariant dynamics under the unified gauge structure, and finally examining how measurement-induced projection modifies scattering processes, mass generation, and vacuum stability within the fiber-extended framework.
%%%%%%%%%%%%%%%%%%%%%
\subsection{Gauge Coupling to Standard Model Fields}
To embed the Standard Model within the MTG framework, we extend the internal gauge structure to unify temporal symmetries with Standard Model interactions. This is achieved by enlarging the principal bundle over spacetime to include the internal time group alongside the familiar gauge symmetries. The unified structure group takes the form
\begin{equation}
G_{\mathrm{unified}} = G_{\mathrm{time}} \times SU(3)_C \times SU(2)_L \times U(1)_Y,
\end{equation}
where \( G_{\mathrm{time}} \) corresponds to the symmetry group of the internal time fibers \( T_x \) in the bundle \( \pi: E \to M \). Depending on the geometry of these fibers, one may take \( G_{\mathrm{time}} \) to be a compact Abelian group such as \( U(1)_{\mathrm{mod}} \), or a non-Abelian group such as \( SU(1,1) \), \( SL(2,\mathbb{R}) \), or \( Sp(1,\mathbb{H}) \), particularly in cases involving modular or quaternionic time structure.

Let \( \mathcal{P} \to M \) denote the principal \( G_{\mathrm{unified}} \)-bundle. Matter fields are then described as sections of fiber-covariant associated bundles,
\begin{equation}
\Psi \in \Gamma(E_{\mathrm{time}} \otimes E_{\mathrm{SM}}),
\end{equation}
where \( E_{\mathrm{time}} \) carries a representation \( \rho_{\mathrm{time}} \) of the internal time-fiber group, and \( E_{\mathrm{SM}} \) carries the usual representation of the Standard Model gauge group. The total connection on \( \mathcal{P} \) takes values in the direct sum Lie algebra
\begin{equation}
\mathfrak{g}_{\mathrm{unified}} = \mathfrak{g}_{\mathrm{time}} \oplus \mathfrak{su}(3)_C \oplus \mathfrak{su}(2)_L \oplus \mathfrak{u}(1)_Y,
\end{equation}
and decomposes in components as
\begin{equation}
A_\mu = A^{\mathrm{time}}_\mu + G^a_\mu T^{(3)}_a + W^i_\mu T^{(2)}_i + B_\mu Y,
\end{equation}
where \( T^{(3)}_a \in \mathfrak{su}(3)_C \), \( T^{(2)}_i \in \mathfrak{su}(2)_L \), and \( Y \in \mathfrak{u}(1)_Y \) are the Lie algebra generators in the representation appropriate to \( \Psi \). The corresponding field strengths decompose as
\begin{equation}
F_{\mu\nu} = F^{\mathrm{time}}_{\mu\nu} + G^a_{\mu\nu} T^{(3)}_a + W^i_{\mu\nu} T^{(2)}_i + B_{\mu\nu} Y.
\end{equation}
The temporal curvature \( F^{\mathrm{time}}_{\mu\nu} \in \mathfrak{g}_{\mathrm{time}} \) governs modular flow, entanglement transport, and coherence dynamics, while the Standard Model curvature components preserve their usual physical interpretation~\cite{casini2011entanglement}.

Standard Model fields embed fiber-covariantly within this framework~\cite{connes1996standard, chamseddine2007spectral}. Left-handed leptons and quarks appear as
\begin{align}
L &= (\nu_L, e_L) \in \Gamma\left(E_{\mathrm{time}} \otimes \mathbb{C}^1 \otimes \mathbb{C}^2_{-1/2}\right), \\
Q &= (u_L, d_L) \in \Gamma\left(E_{\mathrm{time}} \otimes \mathbb{C}^3 \otimes \mathbb{C}^2_{1/6}\right),
\end{align}
where the subscripts denote hypercharge \( Y \), and the tensor factors represent color, weak isospin, and internal time symmetry.

Right-handed fermions appear as \( SU(2)_L \) singlets but still transform under the time-fiber group:
\begin{align}
e_R &\in \Gamma\left(E_{\mathrm{time}} \otimes \mathbb{C}^1 \otimes \mathbb{C}^1_{-1}\right), \\
u_R &\in \Gamma\left(E_{\mathrm{time}} \otimes \mathbb{C}^3 \otimes \mathbb{C}^1_{2/3}\right), \\
d_R &\in \Gamma\left(E_{\mathrm{time}} \otimes \mathbb{C}^3 \otimes \mathbb{C}^1_{-1/3}\right).
\end{align}

The Higgs doublet is treated as a scalar section coupled to temporal geometry:
\begin{equation}
H \in \Gamma\left(E_{\mathrm{time}} \otimes \mathbb{C}^1 \otimes \mathbb{C}^2_{1/2}\right),
\end{equation}
allowing the electroweak potential to depend on internal coherence. Coherence minima and modular vacua can thus influence spontaneous symmetry breaking and the structure of phase transitions.
\begin{figure}[ht]
\centering
\begin{tikzpicture}[scale=1.2]

% Base manifold M
\draw[very thick] (-3.5,0) -- (3.5,0);
\node at (-3.8,0) {\small $M$};

% Fibers and SM fields
\foreach \x/\label in {-3/$x_1$, -1.5/$x_2$, 0/$x_3$, 1.5/$x_4$, 3/$x_5$} {
    \draw[gray!60] (\x,0) -- (\x,2.3);
    \node[below] at (\x,0) {\tiny \label};
}

% Time field tau in fiber
\foreach \x/\tauval in {-3/1.3, -1.5/1.6, 0/2.0, 1.5/1.4, 3/1.8} {
    \draw[->, thick, blue!70!black] (\x,0) -- (\x,\tauval)
        node[pos=1.0, right] {\tiny $\tau(x)$};
}

% SM gauge fields in base
\draw[->, thick, red!70!black] (-2.3,0.2) -- (-1.3,0.2);
\node[above] at (-1.8,0.25) {\tiny $A_\mu(x)$};

% SM fermion field in base
\draw[->, thick, orange!90!black] (0.2,0.2) -- (1.2,0.2);
\node[above] at (0.7,0.25) {\tiny $\psi(x)$};

% Lifting arrows into fiber
\draw[->, thick, violet!80!black] (-1.8,0.2) -- (-1.5,1.6);
\node at (-2.1,1.0) {\tiny Lift};
\draw[->, thick, violet!80!black] (1.0,0.2) -- (1.5,1.4);

% Labels in fiber space
\node at (0,2.5) {\small Total space $E$};
\draw[->, thick] (2.7,2.3) -- (2.7,0.1);
\node[right] at (2.7,1.2) {\small $\pi: E \to M$};

\end{tikzpicture}
\caption{Standard Model coupling via fiber lifting. Fields such as gauge potentials \( A_\mu(x) \) and fermions \( \psi(x) \), originally defined over the spacetime manifold \( M \), are promoted to live in the total space \( E \) of the internal time-fiber bundle. The internal time field \( \tau(x) \) and its connection \( \nabla \) enter the covariant dynamics via lifted interactions \( D_\mu \tau \), inducing measurement-coupled evolution.}
\label{fig:sm_fiber_lifting}
\end{figure}
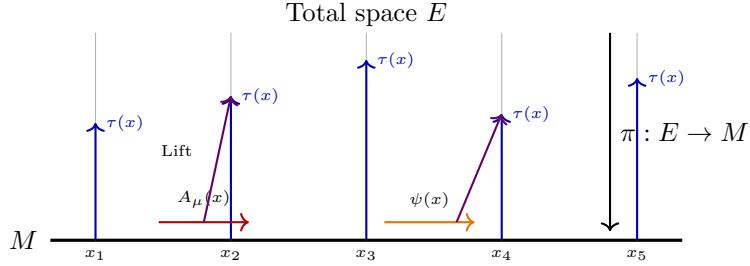
Neutrino fields are naturally accommodated. If right-handed neutrinos exist, they are given by
\begin{equation}
\nu_R \in \Gamma\left(E_{\mathrm{time}} \otimes \mathbb{C}^1 \otimes \mathbb{C}^1_0\right),
\end{equation}
admitting either Dirac or Majorana mass terms, depending on the symmetry properties of the temporal representation. Holonomies in \( G_{\mathrm{time}} \) may induce oscillation phases and generate effective mass splittings.

The gauge bosons themselves are components of the total connection \( A_\mu \), including gluons \( G^a_\mu \), weak bosons \( W^i_\mu \), the hypercharge field \( B_\mu \), and the temporal gauge field \( A^{\mathrm{time}}_\mu \). The total curvature \( F_{\mu\nu} \) encodes not only Standard Model field strengths but also coherence transport, entanglement flow, and measurement-induced deformation of the internal time geometry.

This embedding respects all gauge symmetries while enriching the SM field content with new couplings to internal time. It provides a geometric setting for interpreting decoherence, mass, and flavor dynamics as emergent phenomena tied to the curvature and projection structure of the fibered temporal manifold.
%%%%%%%%%%%%%%%%%%%%%%%%%%
\subsection{Fermion Representations and Coherence Dynamics}
Fermionic matter fields are promoted from conventional spacetime spinors to sections of fiber-covariant bundles that couple simultaneously to Standard Model gauge fields and the internal time-fiber connection. This extension allows temporal curvature and projection dynamics to modulate coherence, flavor structure, and mass generation within a unified geometrical setting.

Let \( \mathcal{S} \to M \) denote the spinor bundle associated with the spacetime spin structure, and let \( E \to M \) be the internal time-fiber bundle with structure group \( G_{\mathrm{time}} \). The full gauge symmetry is encoded in the unified group
\begin{equation}
G = G_{\mathrm{time}} \times SU(3)_C \times SU(2)_L \times U(1)_Y,
\end{equation}
and fermion fields are represented as sections of the associated fiber-covariant spinor bundle
\begin{equation}
\Psi \in \Gamma(\mathcal{S} \otimes \mathcal{V}),
\end{equation}
where \( \mathcal{V} \) carries a representation of \( G \), including both temporal and Standard Model charges. Each Standard Model fermion—quark or lepton, left- or right-handed—is extended to such a section with a prescribed internal time representation \( \rho_{\mathrm{time}} \).

The total gauge connection acting on \( \Psi \) decomposes as
\begin{equation}
A_\mu = A_\mu^{\mathrm{time}} + A_\mu^{\mathrm{SM}},
\end{equation}
where \( A_\mu^{\mathrm{SM}} = G^a_\mu T^a + W^i_\mu T^i + B_\mu Y \) is the conventional Standard Model connection and \( A_\mu^{\mathrm{time}} \in \mathfrak{g}_{\mathrm{time}} \) encodes the geometry of internal time. The corresponding fiber-covariant derivative is defined by
\begin{equation}
\mathrm{D}_\mu \Psi = \partial_\mu \Psi + A_\mu^{\mathrm{time}} \cdot \Psi + A_\mu^{\mathrm{SM}} \cdot \Psi,
\end{equation}
governing both gauge interactions and transport across the internal time fiber. The temporal curvature \( F_{\mu\nu}^{\mathrm{time}} \) enters directly into fermionic dynamics, modifying phase transport, coherence flow, and sensitivity to measurement.

The fermionic action takes the standard Dirac form using the covariant operator \( \slashed{\mathrm{D}} = \gamma^\mu \mathrm{D}_\mu \):
\begin{equation}
\mathcal{L}_{\mathrm{ferm}} = \bar{\Psi} i \slashed{\mathrm{D}} \Psi = \bar{\Psi} i \gamma^\mu (\partial_\mu + A_\mu^{\mathrm{time}} + A_\mu^{\mathrm{SM}}) \Psi.
\end{equation}
This Lagrangian is invariant under the full symmetry group \( G \), and encodes the dynamical influence of both Standard Model forces and temporal geometry. The time-fiber connection affects fermion evolution in several critical ways.

First, temporal curvature induces path-dependent coherence phases. Fermion fields accumulate nontrivial holonomy around loops in internal time, with possible implications for oscillation and interference phenomena. Second, measurement-induced projection acts as a localized collapse mechanism, selectively decohering components of \( \Psi \) based on their alignment with the observer frame \( n^\mu \), as introduced in the measurement Lagrangian. Third, a nontrivial time-fiber curvature \( F_{\mu\nu}^{\mathrm{time}} \) may induce apparent flavor symmetry distortions by altering coherence transport across the fiber, even when the Standard Model gauge sector remains dynamically intact. These distortions reflect geometric backreaction through the time-fiber connection, rather than explicit symmetry violation in the fermionic Lagrangian itself.

Because fermions are transported via the full MTG connection, their dynamics become sensitive to the informational geometry of coherence~\cite{baumgratz2014quantifying}. Holonomies in \( G_{\mathrm{time}} \), especially in topologically nontrivial settings, can lead to CPT-violating phases, modified mixing matrices, and observer-relative decoherence effects. These signatures are negligible in flat regions of the internal time bundle, but may become significant in high-curvature domains, such as early-universe cosmology, black hole interiors, or laboratory systems engineered for entanglement precision.

In this framework, Standard Model fermions are not merely gauge fields on spacetime; they are fibered entities dynamically entangled with the geometry of measurement. Their masses, flavors, and coherence properties are shaped not only by symmetry but by the flow of internal time. The next subsection incorporates this geometric structure into the Higgs sector, where measurement-induced collapse gives rise to effective mass generation.
%%%%%%%%%%%%%%%%%%%%%%%%%%%%%%%%%%%
\subsection{Higgs Field and Temporal Symmetry Breaking}
In the MTG framework, the Higgs field is elevated from a conventional scalar doublet to a fiber-covariant section that couples not only to Standard Model gauge fields but also to the internal time-fiber connection. This coupling introduces a geometric mechanism for mass generation, in which symmetry breaking arises from projection-induced localization in internal time~\cite{giacomini2019quantum, parker2009mass}.

The Higgs field is modeled as a scalar-valued section
\begin{equation}
H \in \Gamma(E_{\mathrm{time}} \otimes \mathbb{C}^1 \otimes \mathbb{C}^2_{1/2}),
\end{equation}
where \( E_{\mathrm{time}} \) carries a representation of \( G_{\mathrm{time}} \), and the remaining factors represent hypercharge and weak isospin as in the Standard Model. The field \( H \) is thus sensitive to both electroweak gauge symmetry and the geometry of internal time.

The corresponding fiber-covariant derivative is
\begin{equation}
\mathrm{D}_\mu H = \partial_\mu H + A_\mu^{\mathrm{time}} \cdot H + A_\mu^{\mathrm{SM}} \cdot H,
\end{equation}
ensuring that the kinetic term
\begin{equation}
\mathcal{L}_{\mathrm{H,kin}} = \langle \mathrm{D}_\mu H, \mathrm{D}^\mu H \rangle
\end{equation}
is invariant under the full symmetry group \( G_{\mathrm{time}} \times SU(2)_L \times U(1)_Y \).

The Higgs potential is modified to reflect dependence on the internal time field:
\begin{equation}
\mathcal{L}_{\mathrm{H,pot}} = - V(H), \quad V(H) = \lambda_H \left( \langle H, H \rangle - v^2(\tau) \right)^2.
\end{equation}
Here, the vacuum expectation value \( v(\tau) \) is promoted to a function of internal time, allowing for coherence-dependent mass generation. In measurement-active regions—where projection collapses temporal degrees of freedom—the function \( v(\tau) \) becomes sharply localized, dynamically triggering electroweak symmetry breaking. Holonomies or modular alignment in \( G_{\mathrm{time}} \) may induce further variation across regions or observer frames.

The complete Higgs sector Lagrangian becomes
\begin{equation}
\mathcal{L}_{\mathrm{Higgs}} = \langle \mathrm{D}_\mu H, \mathrm{D}^\mu H \rangle - \lambda_H \left( \langle H, H \rangle - v^2(\tau) \right)^2,
\end{equation}
with the symmetry-breaking scale \( v(\tau) \) now linked to the local coherence structure of internal time. Projection, therefore, does not merely constrain dynamics—it induces mass by localizing the vacuum structure.

Fermion masses arise through Yukawa interactions:
\begin{equation}
\mathcal{L}_{\mathrm{Yukawa}} = y_{ij} \bar{\Psi}_i H \Psi_j + \text{h.c.},
\end{equation}
where the coefficients \( y_{ij} \) may themselves reflect structure inherited from temporal geometry—such as holonomy phases, modular transitions, or topological indices. Once the Higgs field acquires a projection-aligned expectation value, fermions become massive relative to that coherence frame.

This approach contrasts with the conventional view in which mass emerges from minimizing a static potential. In MTG, mass arises relationally—from the collapse of coherence, the breaking of temporal symmetry, and the informational constraints of measurement. It provides a geometric, observer-relative account of symmetry breaking, compatible with decoherence and responsive to the entangled structure of internal time.

In the following subsection, we turn to gauge bosons and explore how similar projection-aligned mechanisms yield mass, propagate temporal alignment, and modify interaction structure in regions of curved internal time.
%%%%%%%%%%%%%%%%%%%%%%%%%%%%%%%
\subsection{Flavor Mixing from Temporal Holonomy}
In the Standard Model, the replication of fermion generations and the structure of flavor mixing matrices—such as the CKM and PMNS matrices—are empirical features lacking a fundamental explanation. Within the MTG framework, these features arise naturally from the topology, holonomy, and curvature of the internal time-fiber bundle. This geometric reinterpretation links flavor dynamics to the informational structure of projection and coherence.

Let \( \pi: E \to M \) denote the internal time-fiber bundle with typical fiber \( T_x \) a compact or noncompact manifold, such as a torus, hyperbolic surface, or quaternionic sphere. The structure group \( G_{\mathrm{time}} \) governs how fibers are patched across spacetime, and the total bundle \( E \) may possess nontrivial topology encoded in \( \pi_1(E) \), the fundamental group of the total space.

We propose that each fermion generation corresponds to a distinct topological sector of \( E \), labeled by a non-contractible loop class \( [\gamma_i] \in \pi_1(E) \). These classes capture transition functions between fiber charts and define equivalence sectors for internal time holonomy. The replication of fermions across three generations thus reflects the multiplicity of inequivalent holonomy sectors in temporal geometry. In this picture, generation number becomes a topological quantum number—emergent from the fibered structure of time.

Flavor mixing is interpreted as parallel transport across coherence sectors~\cite{tomaz2025rqd}. For a homotopy class \( \gamma_{ij} \) interpolating between sectors \( i \) and \( j \), the transition amplitude is governed by the holonomy of the internal time connection:
\begin{equation}
U_{ij} := \operatorname{Hol}_\nabla(\gamma_{ij}) = \mathcal{P} \exp \left( \int_{\gamma_{ij}} A^{\mathrm{time}}_\mu \, dx^\mu \right).
\end{equation}
To ensure that these holonomies \( U_{ij} \) define physically meaningful flavor mixing matrices—such as the CKM or PMNS matrices—they must correspond to unitary operators. This requires that the internal time connection \( A_\mu^{\text{time}} \) take values in a unitary Lie algebra, such as \( \mathfrak{u}(n) \) or \( \mathfrak{su}(n) \). We therefore assume that the structure group \( G_{\text{time}} \) is a compact Lie group, ensuring that holonomy transport preserves inner products and defines unitary parallel evolution.

Under this assumption, the matrix \( [U_{ij}] \) constructed from holonomy paths defines an observer-dependent flavor mixing matrix consistent with quantum interference phenomena. Non-unitary extensions—such as those arising from noncompact groups like \( \mathrm{SL}(2,\mathbb{R}) \)—could introduce damping, leaky coherence, or emergent sterile sectors, and may describe open-system dynamics beyond the standard oscillation paradigm.

The full flavor mixing matrix arises as a direct sum over such holonomy operators:
\begin{equation}
U_{\mathrm{CKM}} \sim \bigoplus_{i,j} \operatorname{Hol}_\nabla(\gamma_{ij}).
\end{equation}
This construction provides a geometric and gauge-invariant origin for mixing phases and angles, grounded in the entanglement structure of internal time.

The same mechanism governs neutrino oscillations. In the presence of nonzero temporal curvature \( F^{\mathrm{time}}_{\mu\nu} \), distinct flavor components accumulate different geometric phases along their propagation. The modular-time evolution of a neutrino state is given by
\begin{equation}
|\nu_\alpha(t)\rangle = \sum_\beta U_{\alpha\beta}(t) |\nu_\beta\rangle,
\end{equation}
where \( U_{\alpha\beta}(t) \) includes both conventional Hamiltonian evolution and MTG holonomy corrections. These corrections may alter oscillation frequencies, interfere with phase coherence, or shift coherence lengths, particularly in regions of high curvature or nontrivial measurement structure.

Projection-induced collapse further modulates flavor dynamics. Each projection locally collapses the internal time field \( \tau \), altering coherence and affecting flavor transport. In regions where projection events are sparse or aligned, coherence is preserved and flavor oscillations persist. In contrast, dense or irregular projections induce decoherence, suppressing oscillations and favoring eigenstate localization. The decoherence profile depends on the observer-relative projection density \( \rho(x) \), which encodes the measurement history across spacetime.

In this framework, mixing matrices become dynamical, renormalized by curvature and modulated by measurement. Flavor is no longer a static symmetry label but a coherence-dependent structure shaped by topology, holonomy, and projection. The number of effective generations may vary across temporal regimes, and novel phenomena—such as topology-driven oscillation suppression or curvature-induced mixing enhancements—may emerge.

These results complete the fiber-covariant embedding of Standard Model matter within MTG. The structure of particle families, mixing, and mass is unified under a geometric framework in which internal time governs coherence, projection, and informational transport. In the next section, we turn to the emergence of gravitational dynamics from these same principles, linking spacetime geometry to the informational flow induced by measurement.
%%%%%%%%%%%%%%%%%%%%
\subsection{Summary: Measurement Geometry and the Standard Model}
The MTG framework embeds the full structure of the Standard Model within a fiber-covariant geometry of internal time. All matter and gauge fields evolve not only under conventional interactions but also within a dynamically curved temporal bundle, where coherence, holonomy, and projection govern informational flow. This unification of gauge dynamics and temporal geometry provides a geometric foundation for mass generation, flavor structure, and decoherence.

The extended gauge group \( G = G_{\mathrm{time}} \times SU(3)_C \times SU(2)_L \times U(1)_Y \) defines a unified connection \( A_\mu \), whose curvature \( F_{\mu\nu} \) encompasses both conventional field strengths and internal time dynamics. This connection mediates the transport of quantum fields across spacetime and internal time fibers, modulating not only gauge evolution but also coherence propagation through observer-dependent geometry.

Fermions are promoted to fiber-covariant spinor fields, transforming under representations of both Standard Model and temporal symmetry groups. Their dynamics are governed by a generalized Dirac operator that includes parallel transport along the time-fiber connection. As a result, fermion wavefunctions accumulate holonomy-induced phases and become sensitive to measurement-induced projection. Coherence collapse selectively decoheres fiber components aligned with the observer frame \( n^\mu \), while nontrivial holonomy gives rise to flavor mixing. In this framework, generation structure is linked to topological sectors of the time bundle.

The Higgs sector is similarly embedded. The Higgs field becomes a bifundamental scalar, mediating between temporal and electroweak symmetry spaces. Its potential acquires coherence-dependent structure, and symmetry breaking arises locally from projection events that collapse internal time directions. Yukawa couplings contract over fiber indices, allowing mass to emerge dynamically through the interplay of coherence transport and measurement-induced collapse.

Flavor structure is reinterpreted geometrically. Generations correspond to distinct fiber sectors or homotopy classes within the time bundle, while mixing arises from parallel transport along coherence paths. When \( G_{\mathrm{time}} \) is compact and unitary, holonomy defines unitary flavor transformations consistent with observed CKM and PMNS structure. In noncompact settings, holonomy may induce non-unitary mixing, with possible phenomenological implications such as decoherence or sterile flavor leakage.

Measurement acts as a dynamical agent that modifies coherence, generates effective mass, and filters symmetry via observer alignment. In domains of strong projection or pronounced temporal curvature, deviations from Standard Model behavior may emerge. Observable effects include curvature-sensitive mixing angles, flavor decoherence, and phase shifts induced by informational geometry. Such signatures could be probed in neutrino oscillation experiments, early-universe processes, or engineered quantum systems.

By embedding the Standard Model within a measurement-sensitive temporal geometry, MTG offers a unified framework in which matter, mass, and flavor arise from the topological and informational structure of internal time. In the following section, this framework is extended to gravitational dynamics, where spacetime curvature itself emerges as an entropic signature of coherence loss under measurement.

\section{Gravity as Emergent Entropic Geometry}\label{sec:6}
In conventional field theory, gravity is introduced as a fundamental interaction mediated by a massless spin-2 field, with dynamics governed by the Einstein--Hilbert action. In the MTG framework, by contrast, gravity arises not as a fundamental force but as an emergent, entropic response to the informational geometry induced by quantum measurement. Spacetime curvature, in this view, reflects the statistical structure of projection events and the resulting decoherence of the internal time field \( \tau \).

Throughout the preceding sections, we have shown how projection events act on \( \tau \) to define causal order, temporal flow, and an effective spacetime metric \( g^{\mathrm{eff}}_{\mu\nu}(x) \) constructed from the integrated structure of classicalized measurement outcomes. This metric, while emergent, obeys dynamical laws that encode gravitational behavior as a consequence of entropic balance over ensembles of quantum histories.

In this section, we formalize the emergence of gravitational dynamics from measurement statistics. Beginning with the definition of a projection-induced entropy functional, we show that variation of this entropy under constraints yields equations of motion for the effective geometry. These equations take the form of modified Einstein field equations, where the source is not conventional matter stress-energy but an entropic flux arising from coherence collapse and measurement density.

This formulation preserves the relational and information-theoretic essence of MTG, while recovering classical gravity as a limiting case. It further predicts novel regimes---such as measurement-induced curvature singularities, coherence horizons, and entropy-driven inflationary expansion---that arise from the microscopic geometry of projection. We begin by defining the projection entropy functional that governs emergent gravitational response.
%%%%%%%%%%%%%%%%%%%%%%%%%%%%%%%
\subsection{Projection Entropy and Emergent Dynamics}\label{sec:6.1}
To derive gravitational behavior from the informational structure of measurement, the MTG framework introduces an entropy functional that quantifies coherence loss induced by projection. Each measurement of the internal time field \( \tau \) collapses the local fiber configuration to a classical direction \( \mu[\tau(x)] \), thereby reducing the system’s entanglement and internal degrees of freedom. The cumulative effect of these projection events defines an emergent, coarse-grained geometry, encoded by the effective metric \( g^{\mathrm{eff}}_{\mu\nu}(x) \), introduced in Section~\ref{sec:2}.

In earlier sections, this metric served as a kinematic background constructed from coherence and projection data. Here, it acquires dynamical significance: we treat \( g^{\mathrm{eff}}_{\mu\nu} \) as a variational field determined by extremizing the projection entropy subject to consistency with its projection-induced definition. This approach links gravitational dynamics directly to quantum decoherence, positioning spacetime curvature as a statistical response to the informational structure of measurement.

We now show that gravitational dynamics in MTG arise from extremizing an entropy functional that quantifies coherence loss under projection. The projection entropy is defined by
\begin{equation}
S_{\mathrm{proj}}[\rho, \tau] = \int_M \rho(x) \, S(\tau(x)) \, \mathrm{d}^d x,
\quad\text{where}\quad
S(\tau(x)) := -\operatorname{Tr}[\rho_{\text{fiber}}(x) \log \rho_{\text{fiber}}(x)],
\end{equation}
and \( \rho_{\text{fiber}}(x) \) is the reduced density matrix obtained by tracing out all but the internal time degrees of freedom at point \( x \). This entropy functional quantifies the local information loss induced by measurement, weighted by the projection density \( \rho(x) \), and encodes the cumulative effect of coherence collapse.

In earlier sections, the effective metric \( g^{\mathrm{eff}}_{\mu\nu}(x) \) was introduced as a kinematic construct derived from coherence vector fields and projection density. Here, it becomes dynamical: we treat it as an emergent field constrained to match the measurement-induced geometry. Gravitational equations in MTG then arise by extremizing the projection entropy subject to this consistency condition, linking spacetime curvature directly to informational loss.

To make the entropy functional well-defined, we interpret \( \tau(x) \) not merely as a classical field configuration, but as a parameter that determines a reduced density matrix \( \rho_{\text{fiber}}(x) \) supported on the fiber \( \pi^{-1}(x) \). Specifically, we assume that \( \rho_{\text{fiber}}(x) \) arises via a partial trace over all other spacetime points in a global quantum state \( \rho_{\text{global}}[\tau] \), conditioned on the local value of \( \tau(x) \):
\[
\rho_{\text{fiber}}(x) := \operatorname{Tr}_{M \setminus \{x\}} \left( \rho_{\text{global}}[\tau] \right).
\]
The local entropy \( S(\tau(x)) \) thus measures the information lost under projection at \( x \), reflecting how the internal time field becomes classicalized through decoherence. The global entropy functional \( S_{\mathrm{proj}} \) quantifies the total degradation of coherence across spacetime. Low entropy corresponds to tight alignment between the internal time field and the observer-induced projection direction; high entropy signals delocalized coherence and increased classical uncertainty.

To derive dynamics from this entropy, we apply a variational principle constrained by the geometry of measurement-induced projection. We impose that the effective metric \( g^{\mathrm{eff}}_{\mu\nu}(x) \), defined earlier in terms of coherence vectors and projection density, remains consistent with its original construction:
\begin{equation}
g^{\mathrm{eff}}_{\mu\nu}(x) = \eta_{\mu\nu} + \kappa \int_{\Sigma_x} \mu[\tau(y)]_\mu \, \mu[\tau(y)]_\nu \, \rho(y) \, \mathrm{d}\Sigma(y).
\end{equation}
Introducing a symmetric Lagrange multiplier field \( \lambda^{\mu\nu}(x) \), we define the total constrained functional
\begin{align}
\mathcal{A}[\rho, \tau, \lambda] := S_{\mathrm{proj}}[\rho, \tau]
- \int_M \lambda^{\mu\nu}(x) \Big[
g^{\mathrm{eff}}_{\mu\nu}(x) - \eta_{\mu\nu} - \kappa \int_{\Sigma_x} \mu[\tau(y)]_\mu \mu[\tau(y)]_\nu \rho(y) \, \mathrm{d}\Sigma(y)
\Big] \mathrm{d}^d x.
\end{align}
We then require stationarity:
\begin{equation}
\delta \mathcal{A}[\rho, \tau, \lambda] = 0,
\end{equation}
where the variation is taken with respect to \( \rho \) and \( \tau \), and the effective metric is understood as a dependent functional. This variational principle yields coupled equations of motion for \( \rho \), \( \tau \), and \( g^{\mathrm{eff}} \), enforcing consistency between geometric curvature and the informational flux generated by measurement.

The resulting dynamics resemble modified Einstein equations:
\begin{equation}
G_{\mu\nu}[g^{\mathrm{eff}}] = 8\pi G \, T^{\mathrm{meas}}_{\mu\nu},
\end{equation}
where \( G_{\mu\nu} \) is the Einstein tensor derived from \( g^{\mathrm{eff}} \), and \( T^{\mathrm{meas}}_{\mu\nu} \) is an effective stress-energy tensor generated by the entropy gradient of \( \rho(x) \). This stress-energy is not sourced by conventional matter, but by the statistical backreaction of projection-induced coherence loss.

In this framework, gravity emerges as a thermodynamic response to the degradation of internal quantum coherence. Projection events act as informational sources of curvature, encoding observer-relative measurement into the causal structure of spacetime. The entropy functional thus plays a role analogous to the Einstein--Hilbert action, with spacetime curvature reflecting the interplay between coherence localization and entropic dissipation.

This entropic origin of gravity resonates with other emergent paradigms, including Jacobson's thermodynamic derivation of Einstein equations, entanglement-induced gravity, and holographic duality~\cite{jacobson1995thermodynamics, van2010building}. However, MTG provides a distinct mechanism; geometry is not merely dual to entanglement, it is constructed from it through projection. The effective metric arises from relational coherence structure, filtered by observer-dependent measurement, and encodes curvature as a statistical trace of decoherence.
%%%%%%%%%%%%%%%%%%
\subsection{Static and Cosmological Solutions}
The entropic field equations derived in the MTG framework admit a broad class of solutions that depend on the structure of projection density \( \rho(x) \), internal time coherence, and observer-relative modular dynamics. In this section, we explore the physical consequences of these solutions in both static and cosmological settings, showing how classical curvature emerges from the informational flow of measurement.

In regions where measurement events are sparse and the internal time field \( \tau \) evolves slowly across spacelike hypersurfaces, coherence is well maintained. The projection density \( \rho(y) \) can be approximated as locally constant, and the coherence directions \( \mu[\tau(y)] \) vary negligibly across the integration region. Under these conditions, the emergent metric becomes a perturbative deformation of Minkowski space:
\begin{equation}
g^{\mathrm{eff}}_{\mu\nu}(x) = \eta_{\mu\nu} + \kappa \int_{\Sigma_x} \mu[\tau(y)]_\mu \, \mu[\tau(y)]_\nu \, \rho(y) \, \mathrm{d}\Sigma(y). \tag{125}
\end{equation}
Here \( \Sigma_x \subset M \) is an observer-dependent hypersurface in the causal past of \( x \), often taken to be a segment of the past light cone. The integrand defines a rank-one symmetric tensor field whose accumulation reflects coherence localization in the measurement geometry.

In this regime, the entropic stress-energy tensor \( T^{\mathrm{meas}}_{\mu\nu} \) behaves analogously to a vacuum polarization term, modulated by the decay of coherence in the time-fiber bundle. The resulting geometry resembles a weak-field solution sourced not by conventional mass or energy, but by the informational structure of measurement. These curvature perturbations may influence quantum coherence or temporal interference phenomena and could yield testable signatures in precision-controlled or engineered measurement regimes.

At cosmological scales, the decay of internal coherence gives rise to large-scale curvature through entropic backreaction. When \( \rho(t) \) is homogeneous and isotropic in modular time, the emergent metric acquires a Friedmann–Lemaître–Robertson–Walker (FLRW) form:
\begin{equation}
\mathrm{d}s^2 = -\mathrm{d}t^2 + a(t)^2 \, \mathrm{d}\vec{x}^2,
\end{equation}
with the scale factor \( a(t) \) satisfying a modified Friedmann equation,
\begin{equation}
\left( \frac{\dot{a}}{a} \right)^2 = \frac{8\pi G}{3} \rho_{\mathrm{meas}}(t),
\end{equation}
where \( \rho_{\mathrm{meas}}(t) \) is the entropy density generated by projection events per unit modular time. In this context, cosmological inflation arises from an early phase of high projection density, driving exponential expansion due to the energetic cost of coherence collapse. As the projection rate diminishes, the expansion slows, and the system transitions to classical behavior. Reheating corresponds to the thermal redistribution of entanglement across fiber sectors, without requiring a fundamental inflaton field.

A further implication of this entropic dynamics is the formation of measurement-induced horizons. In domains where projection density becomes sharply localized—either via coherent focusing or external manipulation—the entropy flux diverges. The resulting emergent geometry develops causal boundaries beyond which internal time coherence cannot propagate. These {\itshape coherence horizons} are informational in nature: they demarcate regions where projection has destroyed the possibility of modular flow, rather than forming geometric singularities in the usual sense~\cite{debianchi2024achronotopic}. Nevertheless, in extreme cases where \( \tau \) collapses to a constant eigenstate over an open region, modular flow breaks down entirely, signaling a temporal singularity—a boundary of causal structure rather than of spacetime.

Taken together, these solutions illustrate the range of gravitational behavior encoded in the MTG framework. Static observers generate local curvature via coherence collapse, while global entropic decay drives cosmological expansion. In high-curvature or high-projection regimes, causal structure itself becomes observer-relative and informationally defined. This unifies gravitational response with the statistics of measurement, providing a predictive reformulation of geometry grounded in the dynamics of internal time.

While the analysis here focuses on local weak-field and stationary regimes, the MTG formalism also admits cosmological solutions. In particular, the cumulative projection count across spatial hypersurfaces defines an emergent scale factor, capturing the statistical expansion of decoherent structure. This construction replaces the Friedmann equation with a relational growth law rooted in measurement history. A detailed formulation is presented in Section~\ref{sec:7}, where temporal asymmetry and gravitational entropy are derived from projection statistics.
%%%%%%%%%%%%%%%%%%%%%%%%%%%
\subsection{Holography and Surface Entropy}

One of the central insights from black hole thermodynamics and string theory is the holographic principle--the notion that the informational content of a spatial region is encoded on its boundary. In the MTG framework, this principle emerges naturally from the entropy generated by measurement-induced collapse and the modular geometry of internal time.

Each projection event reduces fiber entanglement, collapsing the internal time field \( \tau \) to a classical configuration \( \mu[\tau(x)] \). This process not only introduces local curvature but also imposes a constraint on surface entropy along hypersurfaces aligned with an observer’s modular time. As a result, the effective dynamics of a bulk region \( \mathcal{R} \subset M \) can be determined from the accumulated informational flux across its boundary \( \partial \mathcal{R} \), encoded entirely in projection data.

To formalize this, we consider a modular foliation of spacetime by hypersurfaces \( \Sigma_\lambda \), each labeled by an observer-relative modular parameter \( \lambda \). The entropy across a slice \( \Sigma_\lambda \) is given by
\begin{equation}
S_\lambda = \int_{\Sigma_\lambda} \rho(x) \, \mathcal{S}(\tau(x)) \, \mathrm{d}\Sigma,
\end{equation}
where \( \rho(x) \) denotes the measurement density and \( \mathcal{S}(\tau(x)) \) is the von Neumann entropy of the fiber configuration at \( x \). This surface entropy plays the role of an informational boundary term: rather than residing in the bulk Lagrangian, it drives geometry through its modular-time variation~\cite{takayanagi2025holographic}.

The evolution between successive slices \( \Sigma_\lambda \) is governed by a modular Hamiltonian \( K_{\mathrm{mod}}(\lambda) \), which encodes the entanglement flow across each hypersurface. In MTG, this modular operator also determines the evolution of the emergent geometry via the relation
\begin{equation}
\frac{\mathrm{d}}{\mathrm{d}\lambda} g^{\mathrm{eff}}_{\mu\nu}(\lambda) = -\frac{\delta K_{\mathrm{mod}}}{\delta g^{\mu\nu}(\lambda)}.
\end{equation}
This differential identity expresses the coupling between geometry and information: the flow of modular energy across a boundary induces curvature in the bulk to maintain coherence consistency~\cite{blanco2013relative, pastawski2015holographic}.

Given a set of projection data along the boundary \( \partial \mathcal{R} \)—including the collapsed directions \( \mu[\tau(x)] \), density \( \rho(x) \), and fiber orientation—it is possible to reconstruct the interior effective geometry. The emergent metric at any point \( x \in \mathcal{R} \) is given by
\begin{equation}
g^{\mathrm{eff}}_{\mu\nu}(x) = \eta_{\mu\nu} + \kappa \int_{\Sigma_x \subset \partial \mathcal{R}} \mu[\tau(y)]_\mu \, \mu[\tau(y)]_\nu \, \rho(y) \, \mathrm{d}\Sigma(y),
\end{equation}
where \( \Sigma_x \) is the causal portion of the boundary influencing point \( x \). This holographic reconstruction mirrors the logic of AdS/CFT duality, but without assuming a fundamental gravitational bulk—here, bulk geometry arises from entanglement reduction along the modular boundary~\cite{tomaz2025rqd}.

In highly curved settings such as black holes, the boundary \( \Sigma_\lambda \) defines a modular screen: a surface beyond which coherence cannot propagate due to overwhelming projection density. The entropy across such a screen obeys an area law,
\begin{equation}
S_{\mathrm{proj}} \leq \frac{\mathrm{Area}(\Sigma)}{4G},
\end{equation}
consistent with the Bekenstein--Hawking formula. However, unlike in conventional gravitational theories where this entropy counts microstates of a fundamental gravitational field, in MTG it reflects the information irretrievably lost through projection. The entropy arises from coherence collapse, not from hidden dynamical degrees of freedom~\cite{bakas2024modulartime}.

The MTG framework thus realizes the holographic principle as a consequence of measurement geometry. Projection-induced entropy across modular surfaces encodes the geometry of the interior, and modular flow governs the emergence of curvature via informational flux. Spacetime becomes a derived construct, arising not from gravitational fields but from fiber-wise information loss and relational entropy. This redefinition of surface entropy in terms of modular collapse offers a new route to holography, grounded in observer-aligned measurement dynamics and free of assumptions about quantum gravity microstates.

In the following section, we turn to experimental and observational consequences of this framework, examining how temporal coherence, projection-induced collapse, and modular geometry can be tested through particle physics, astrophysics, and quantum simulation.

\section{Cosmology from Measurement Dynamics}\label{sec:7}
The cosmological implications of the Measurement-Induced Temporal Geometry (MTG) framework extend the model's predictive reach from local measurement processes to the large-scale structure and evolution of the universe. In contrast to classical cosmology, which assumes a preexisting spacetime manifold equipped with a metric satisfying Einstein’s equations, MTG treats spacetime itself as an emergent statistical structure, dynamically reconstructed from a history of quantum measurements acting on an internal time field.

In this section, we reinterpret key features of standard cosmology; such as, the initial singularity, inflation, horizon structure, and dark energy, as consequences of projection geometry and entropy flow. The effective metric \( g^{\mathrm{eff}}_{\mu\nu} \) becomes time-dependent due to cosmologically varying measurement density \( \rho(x) \), and the large-scale dynamics are governed by the entropic Einstein equations derived in Section~\ref{sec:6}.

We begin by defining homogeneous and isotropic solutions of the MTG framework that reproduce Friedmann–Lema\^itre–Robertson–Walker (FLRW) behavior in the dense-measurement limit. We then explore how deviations from homogeneity and isotropy emerge naturally in sparse or topologically frustrated projection regimes, offering geometric explanations for cosmic variance, anisotropies, and the arrow of time.

Finally, we interpret the early universe; including the inflationary epoch and entropy production, in terms of coherence collapse and measurement-induced reheating, establishing an informational foundation for initial conditions and cosmic expansion.
%%%%%%%%%%%%%%%%%%%%%%%%%

\subsection{Origin of Time and Cosmic Expansion}

In the MTG framework, the classical notion of cosmological time—as a globally synchronized parameter flowing uniformly from a singular origin—is replaced by a relational, informational construct. Temporal flow is not assumed a priori; instead, it emerges dynamically as the internal time field \( \tau(x) \) undergoes localized projections into classical values through measurement. The act of projection defines the start of cosmic history, replacing initial conditions and simultaneity with an observer-dependent informational boundary.

Let \( \mu[\tau(x)] \) denote the projected classical direction of \( \tau \) at a spacetime point \( x \in M \). The origin of time corresponds to the first point \( x_0 \in M \) for which \( \mu[\tau(x_0)] \neq 0 \), marking the first coherence-to-classical transition in the fiber bundle \( \pi: E \to M \). Unlike the classical Big Bang singularity, this is not a divergence in curvature but a discontinuity in coherence, initiating modular causal structure through collapse.

Subsequent measurement events build a causal sequence by projecting \( \tau \) along observer trajectories, generating an effective metric through the MTG construction:
\begin{equation}
g^{\mathrm{eff}}_{\mu\nu}(x) = \eta_{\mu\nu} + \kappa \int_{\Sigma_x} \mu[\tau(y)]_\mu \mu[\tau(y)]_\nu \, \rho(y) \, \mathrm{d}\Sigma(y),
\end{equation}
where \( \rho(y) \) is the local projection density, and \( \mu[\tau(y)] \) encodes its classical direction. The accumulation of such projection data defines time relationally, without reliance on an external or universal clock.

To describe cosmic expansion, we consider a foliation \( \{\Sigma_t\}_{t \in \mathbb{R}} \) of spacelike hypersurfaces aligned with the observer’s informational record, induced by modular time flow. At each point \( x = (t, \vec{x}) \in \Sigma_t \), we define the projection density as
\begin{equation}
\rho(t, \vec{x}) := \mathbb{E}_{\rho_{\mathrm{fiber}}(x)} \left[ \mu[\tau(x)] \neq 0 \right],
\end{equation}
where the expectation is taken with respect to the reduced density matrix \( \rho_{\mathrm{fiber}}(x) \), obtained by tracing out all but the internal time degrees of freedom at \( x \) from the global quantum state \( \rho_{\mathrm{global}}[\tau] \). This defines the probability that a coherence collapse occurs at point \( x \), conditional on the observer’s available information.

The cumulative projection count across the hypersurface \( \Sigma_t \) is then given by
\begin{equation}
N(t) := \int_{\Sigma_t} \rho(t, \vec{x}) \, \mathrm{d}^3 x,
\end{equation}
which quantifies the total number of classicalization events at modular time \( t \). This function plays a central role in defining emergent cosmological dynamics in the MTG framework.

The parameter \( t \) here refers to an emergent foliation parameter aligned with the modular time flow induced by the projection congruence \( n^\mu \). Each hypersurface \( \Sigma_t \) is defined as a slice of constant internal time, and the function \( \rho(t, \vec{x}) \) measures the local rate of decoherence. Its integral over \( \Sigma_t \) defines a coherence-weighted volume functional, which serves as a proxy for classical structure formation.

We then define the emergent cosmological scale factor by
\begin{equation}\label{eq:a-meas-event}
a^3(t) := \alpha \, N(t),
\end{equation}
where \( \alpha \) is a normalization constant that sets the effective unit of decohered volume. In this formulation, the expansion of space is governed by the accumulation of classicalization events. This generalizes the standard scale factor, which is typically defined by the spatial volume element \( \sqrt{\det g_{ij}} \) in a fixed metric background. In the MTG framework, by contrast, \( a^3(t) \) captures the statistical growth of decoherent structure and need not coincide with any geometric determinant unless \( g_{\mu\nu}^{\mathrm{eff}} \) is dynamically induced.

The scale factor thus emerges not from classical geometry, but from the temporal and informational structure imposed by projection. In regions of high projection density, \( a(t) \) increases rapidly; in coherence-preserving regimes, it remains approximately constant. This replaces the Friedmann equation with a statistical growth law rooted in measurement history.

This interpretation aligns with the modular geometry introduced earlier, where projection-induced decoherence defines an effective classical boundary. In this framework, even gravitational entropy can be understood informationally: the entropy associated with a spatial region \( A \) may be expressed by normalizing the projection density over the entangling surface \( \gamma_A \). Defining
\[
\tilde{\rho}(x) := \frac{\rho(x)}{\int_{\gamma_A} \rho(x')\, d^{d-1}x'},
\]
we obtain the surface entropy functional
\begin{equation}
S_A := -\int_{\gamma_A} \tilde{\rho}(x) \log \tilde{\rho}(x) \, \mathrm{d}^{d-1}x,
\end{equation}
which defines the pointwise Shannon entropy associated with the normalized projection density over the entangling surface \( \gamma_A \). While this does not represent the full von Neumann entropy of the reduced state \( \rho_A \), it agrees with it when the state is diagonalized in the measurement basis defined by projection. In this limit, the Ryu--Takayanagi area law emerges as a coarse-grained approximation to the informational structure of collapse, with surface entropy governed by the local decoherence profile of internal time.

The arrow of time is thereby reconstructed from the monotonicity of projection count. Temporal direction is not imposed but inferred:
\begin{equation}
t_2 > t_1 \quad \Leftrightarrow \quad N(t_2) > N(t_1).
\end{equation}
This formulation eliminates the need for globally synchronized clocks and derives temporal asymmetry directly from the irreversibility of measurement.

In classical cosmology, the scale factor satisfies the Friedmann equation:
\[
\left( \frac{\dot{a}}{a} \right)^2 = \frac{8\pi G}{3} \rho_{\text{matter}} - \frac{k}{a^2} + \frac{\Lambda}{3}.
\]
This equation governs the evolution of spatial geometry via matter and energy content in general relativity. In the MTG framework, by contrast, the evolution of \( a(t) \) is not determined by a differential equation derived from Einstein's equations, but by a statistical accumulation of measurement events;~\eqref{eq:a-meas-event}.

This coherence-weighted integral replaces the role of the Friedmann eqution with a relational growth law defined by the informational structure of projection. Rather than being sourced by energy density, curvature and expansion are inferred from the spatial distribution of decoherence. In this sense, MTG provides an informational reconstruction of cosmological expansion; grounded not in gravitational dynamics, but in observer-relative measurement statistics.

\begin{theorem}[Monotonicity of Scale Factor under Measurement Entropy Growth]
Let \( \{\Sigma_t\}_{t \in \mathbb{R}} \) be a foliation of a globally hyperbolic spacetime \( M \) indexed by an observer’s measurement history. Suppose \( \rho(t, \vec{x}) \) is the projection density and
\begin{equation*}
a^3(t) := \int_{\Sigma_t} \rho(t, \vec{x}) \, \mathrm{d}^3 x
\end{equation*}
is the emergent scale factor. If the measurement entropy \( \mathcal{S}(t, \vec{x}) \) satisfies
\[
\frac{\partial \mathcal{S}}{\partial t}(t, \vec{x}) \geq 0,
\]
then \( a(t) \) is non-decreasing in \( t \), and strictly increasing wherever \( \mathcal{S}(t, \vec{x}) \) increases on a set of nonzero measure.
\end{theorem}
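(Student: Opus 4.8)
The plan is to differentiate the emergent scale factor directly, reduce the claim to the sign of a single integral, and then exploit the monotonicity of the cube-root map to transfer the conclusion from $a^3$ to $a$ without any delicate analysis near $a=0$.

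\textbf{Step 1 (differentiation under the integral).} Writing $a^3(t)=\int_{\Sigma_t}\rho(t,\vec x)\,d^3x$ with $d^3x$ the fixed coordinate volume on the spatial slices of the foliation, global hyperbolicity guarantees that such a smooth foliation exists, and — assuming $\rho(t,\cdot)$ is smooth and either compactly supported on $\Sigma_t$ or decays at spatial infinity so that no boundary term survives — one may differentiate under the integral sign to obtain
\[
\frac{d}{dt}\,a^3(t)=\int_{\Sigma_t}\partial_t\rho(t,\vec x)\,d^3x .
\]
It then suffices to show the integrand is non-negative pointwise for the non-decreasing claim, and strictly positive on a set of positive measure wherever $\mathcal S$ increases there for the strict claim.

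\textbf{Step 2 (linking $\rho$ to $\mathcal S$).} This is the substantive step: the hypothesis controls $\partial_t\mathcal S$, not $\partial_t\rho$, so one must supply the constitutive relation between projection density and measurement entropy that is implicit in the MTG construction. Because a collapse event is irreversible — coherence at $x$, once converted into classical distinguishability, is not restored — the accumulated projection density is a non-decreasing function of the accumulated entropy at that point. I would therefore invoke (or adopt as part of the model) a smooth monotone constitutive law $\rho(t,\vec x)=F\big(\mathcal S(t,\vec x)\big)$ with $F'\ge 0$, equivalently the weaker pointwise statement that $\partial_t\rho$ and $\partial_t\mathcal S$ share sign since both track the same irreversible loss of fiber coherence. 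Then $\partial_t\rho=F'(\mathcal S)\,\partial_t\mathcal S\ge 0$ by the chain rule and the hypothesis, so $\tfrac{d}{dt}a^3(t)\ge 0$; and if $\partial_t\mathcal S>0$ on a positive-measure subset of $\Sigma_t$ with $F$ strictly increasing there, the integral is strictly positive.

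\textbf{Step 3 (from $a^3$ to $a$).} Since $\rho\ge 0$ we have $a^3(t)\ge 0$, so $a(t)=\big(a^3(t)\big)^{1/3}$ is well defined, and because $s\mapsto s^{1/3}$ is strictly increasing on $[0,\infty)$ the monotonicity of $t\mapsto a^3(t)$ transfers verbatim to $t\mapsto a(t)$: non-decreasing in general, and strictly increasing on any interval over which $\mathcal S$ grows on a positive-measure subset of the slice. This route sidesteps the apparent singularity of $\dot a=\tfrac{1}{3}a^{-2}\tfrac{d}{dt}a^3$ at points where $a=0$.

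\textbf{Main obstacle.} The real content, and the point where the statement is weakest as a theorem rather than a consistency check, is Step 2: the stated hypotheses say nothing a priori about how $\rho$ tracks $\mathcal S$, so the conclusion genuinely requires the monotone constitutive relation between them — a modeling input, justified physically by the irreversibility of measurement, rather than something derivable from the definitions quoted earlier. A secondary, more technical point is justifying differentiation under the integral when the foliation is only aligned with the observer's measurement history: one should either restrict to foliations with $t$-independent coordinate volume, or carry along the Jacobian $\sqrt{h_t}$ of the slice-to-slice flow and argue that the extra term $\rho\,\partial_t(\log\sqrt{h_t})$ does not spoil positivity, for instance because the slices are themselves non-contracting.
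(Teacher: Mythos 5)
Your proposal is correct and follows the same underlying idea as the paper's proof, but you have fleshed out — and correctly diagnosed — what the paper leaves implicit. The paper's entire argument is the sentence ``If $\mathcal{S}(t,\vec{x})$ increases with $\rho(t,\vec{x})$, then $\int_{\Sigma_{t_2}}\rho \geq \int_{\Sigma_{t_1}}\rho$ for $t_2 > t_1$,'' i.e.\ it simply \emph{posits} the monotone coupling between entropy and projection density and reads off the integral inequality. That posited coupling is exactly the constitutive law you isolate in Step~2 and correctly identify as the substantive content: the stated hypothesis bounds $\partial_t\mathcal{S}$, not $\partial_t\rho$, and nothing in the theorem's hypotheses or the surrounding definitions supplies the relation $\rho = F(\mathcal{S})$ with $F' \geq 0$. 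So the gap you flag is real, and it is present in the paper's own proof as well — the paper's version is best read as taking that coupling as an unstated physical premise. Your additional care in Steps~1 and~3 (differentiation under the integral with attention to the slice-to-slice Jacobian, and transferring monotonicity through the cube root rather than dividing by $a^2$) addresses genuine technical points the paper skips entirely; in particular the Jacobian issue matters, since the foliation is only ``indexed by the observer's measurement history'' and there is no reason the coordinate volume of $\Sigma_t$ is $t$-independent. In short: same route, but your write-up is the honest version of it.
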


\begin{proof}
If \( \mathcal{S}(t, \vec{x}) \) increases with \( \rho(t, \vec{x}) \), then
\[
\int_{\Sigma_{t_2}} \rho(t_2, \vec{x}) \, \mathrm{d}^3 x \geq \int_{\Sigma_{t_1}} \rho(t_1, \vec{x}) \, \mathrm{d}^3 x
\]
for \( t_2 > t_1 \), with strict inequality on any region of increasing entropy. Thus \( a(t) \) is monotonic.
\end{proof}

This result links cosmic expansion directly to measurement entropy. The scale factor \( a(t) \) reflects not dynamical evolution under a classical field, but the accumulation of information via projection. Time begins not at a geometric singularity, but at the first informational break in coherence. Expansion, temporal orientation, and the arrow of time all emerge together from a quantum-informational foundation.
%%%%%%%%%%%%%%%%%%%%
\subsection{Inflation as a Phase of Saturated Projection}

In the MTG framework, cosmic inflation arises not from a fundamental scalar inflaton field, but from a transient epoch characterized by saturated, spatially coherent projection of the internal time field \( \tau(x) \). During this regime, the time-fiber bundle exhibits uniformity across horizon-scale patches, and the projection map \( \mu[\tau(x)] \) collapses consistently to nearly identical classical directions. This coherence amplifies the deformation of the effective metric, driving exponential expansion through the accumulation of classicalization events.

Assume that over a relational time interval \( t \in [t_i, t_f] \), the internal time field is nearly spatially constant on each observer-aligned hypersurface \( \Sigma_t \), such that \( \partial_i \tau(x) \approx 0 \) for all \( x \in \Sigma_t \). Under this condition, the projection vectors \( \mu[\tau(x)]_\mu \) are approximately constant across \( \Sigma_t \), and the effective metric reduces to
\begin{equation}
g^{\mathrm{eff}}_{\mu\nu}(x) = \eta_{\mu\nu} + \kappa \int_{\Sigma_x} \mu[\tau(y)]_\mu \mu[\tau(y)]_\nu \, \rho(y) \, \mathrm{d}\Sigma(y),
\end{equation}
where the integrand contributions reinforce one another coherently. This alignment causes the classical deformation of geometry to accumulate constructively across space, amplifying the emergent curvature.

During this coherence-dominated phase, the projection density approaches saturation, \( \rho(x) \approx \rho_0 \), and becomes spatially uniform over each slice. The total number of classicalization events,
\begin{equation}
N(t) = \int_{\Sigma_t} \rho(x) \, \mathrm{d}^3x,
\end{equation}
increases steadily, leading to exponential growth in the emergent scale factor defined by \( a^3(t) \sim N(t) \). The resulting expansion rate is governed by
\begin{equation}
a(t) \sim \exp(H t), \quad H := \frac{1}{3} \frac{d}{dt} \log N(t),
\end{equation}
where \( H \) reflects the informational growth rate of projection. No vacuum energy or fine-tuned potential is required; inflation emerges from the geometric reinforcement of coherent collapse~\cite{guth1981inflationary}.

This inflationary epoch ends when the coherence of \( \tau(x) \) begins to fragment. Increasing entanglement with matter fields, topological obstructions in the fiber structure, or stochastic variation in projection processes cause \( \partial_i \tau(x) \) to grow and \( \mu[\tau(x)] \) to vary across \( \Sigma_t \). As a result, metric contributions from different regions begin to interfere destructively, halting exponential growth. Decoherence sets in, localizing the internal time field into spatially resolved configurations. The projection density becomes irregular, classical fields emerge, and entropy production begins—effectively reheating the universe without invoking a distinct reheating mechanism.

Unlike the standard picture where inflation is driven by a scalar field \( \phi \) rolling along a nearly flat potential \( V(\phi) \), the MTG framework realizes inflation as a coherence phase of the internal time bundle. The inflationary epoch corresponds to a period of maximal projection alignment; its termination arises from the breakdown of this alignment. The number of e-folds is determined not by the shape of a potential but by the duration of saturated, coherent measurement. This mechanism reframes inflation as an informational phase transition—one in which geometry, entropy, and causal order emerge from the structure of internal time collapse.

We formalize this behavior in the following theorem:

\begin{theorem}[Exponential Growth Under Saturated Coherent Projection]
Let $\tau(x)$ be spatially coherent and let $\rho(x)$ be constant over a relational time interval $t \in [t_i, t_f]$. Then the emergent scale factor
\[
a^3(t) := \int_{\Sigma_t} \rho(t, \vec{x}) \, \mathrm{d}^3 x
\]
satisfies
\[
a(t) = a(t_i) \, \exp\left( H (t - t_i) \right), \quad H = \frac{1}{3} \nabla_\mu n^\mu,
\]
where $n^\mu$ is the observer-aligned projection direction and $\nabla_\mu n^\mu$ is the expansion scalar of the congruence.
\end{theorem}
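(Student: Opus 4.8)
The plan is to reduce the claim to one kinematic fact about the expansion of the observer congruence followed by a single integration. Since $\rho$ is constant on $[t_i,t_f]$, say $\rho \equiv \rho_0$, the defining integral factorizes as $a^3(t) = \int_{\Sigma_t}\rho_0\,\mathrm{d}^3x = \rho_0\,V(t)$, where $V(t) := \int_{\Sigma_t}\mathrm{d}\Sigma$ is the invariant volume of the leaf $\Sigma_t$; here $\mathrm{d}^3x$ must be read, consistently with the notation $\mathrm{d}\Sigma$ used elsewhere, as the induced Riemannian volume element $\sqrt{h}\,\mathrm{d}^3x$, since with a fixed coordinate measure the integral would be constant and the statement vacuous. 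Thus all time dependence of $a^3$ sits in $V(t)$, and it suffices to establish $V(t) = V(t_i)\,e^{\theta(t-t_i)}$ with $\theta := \nabla_\mu n^\mu$ constant.

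First I would set up the foliation so that $n^\mu$ is its unit future-directed normal with lapse one and vanishing shift — the temporal gauge already adopted for canonical quantization — so that the coordinate $t$ is proper time along $n^\mu$. Then the standard congruence identity $\partial_t\sqrt{h} = (\nabla_\mu n^\mu)\,\sqrt{h} = \theta\,\sqrt{h}$ holds pointwise. Spatial coherence of $\tau$ enters exactly here: it forces $\mu[\tau(x)]$, hence the aligned direction $n^\mu$, to be uniform across each $\Sigma_t$ to leading order, so $\theta$ is spatially constant on each leaf and factors out of $\int_{\Sigma_t}\partial_t\sqrt{h}\,\mathrm{d}^3x$, giving $\dot V(t) = \theta(t)\,V(t)$, equivalently $\dot a^3 = \theta\,a^3$; in particular $\frac{1}{3}\frac{d}{dt}\log N(t) = \frac{1}{3}\theta$, which reconciles the two expressions for $H$ (the one used just before the theorem and the one in its statement).

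Second I would argue that the saturated regime pins $\theta$ to a constant on $[t_i,t_f]$. This is the geometric content of ``saturated coherent projection'': with $\rho\approx\rho_0$ frozen and the coherent congruence shear- and vorticity-free, the Raychaudhuri equation $\dot\theta = -\frac{1}{3}\theta^2 - \sigma_{\mu\nu}\sigma^{\mu\nu} + \omega_{\mu\nu}\omega^{\mu\nu} - R_{\mu\nu}n^\mu n^\nu$ admits $\theta \equiv \theta_0$ exactly when the projection-induced curvature $R_{\mu\nu}n^\mu n^\nu$ balances $\frac{1}{3}\theta_0^2$, which is precisely the de Sitter-like fixed point selected by a uniform projection density through the entropic field equations of Section~\ref{sec:6.1}. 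Granting this, integrating $\dot a^3 = \theta_0\,a^3$ yields $a^3(t) = a^3(t_i)\,e^{\theta_0(t-t_i)}$, and the cube root gives $a(t) = a(t_i)\,e^{H(t-t_i)}$ with $H = \frac{1}{3}\theta_0 = \frac{1}{3}\nabla_\mu n^\mu$, as claimed.

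The main obstacle is this last step. The two stated hypotheses — spatial coherence of $\tau$ and constancy of $\rho$ — deliver $\dot a^3 = \theta\,a^3$ immediately, but do \emph{not} by themselves force $\dot\theta = 0$; exponential (as opposed to merely monotone, cf. the preceding monotonicity theorem) growth requires either promoting ``saturation'' to the explicit assumption that $\theta$ is constant on $[t_i,t_f]$, or feeding the frozen density $\rho_0$ back through the emergent Einstein equations of Section~\ref{sec:6.1} to exhibit the constant-expansion solution. I would state this dependence openly; with ``$\theta$ constant'' included among the saturation conditions the remainder is the one-line integration above, and the only residual work is bookkeeping the sign conventions relating $\nabla_\mu n^\mu$ to the trace of the extrinsic curvature of $\Sigma_t$.
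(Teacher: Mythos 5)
Your proposal follows essentially the same route as the paper's proof: factor out the constant $\rho_0$, evolve the induced volume element along the congruence via $\partial_t\sqrt{h} = (\nabla_\mu n^\mu)\sqrt{h}$, and integrate. The gap you flag is genuine and is present in the paper's own argument as well: the paper arrives at $a^3(t) = a^3(t_i)\exp\bigl(\int_{t_i}^t \nabla_\mu n^\mu\,\mathrm{d}t'\bigr)$ and then asserts that taking the cube root gives $e^{H(t-t_i)}$ with constant $H$, which silently assumes $\nabla_\mu n^\mu$ is constant on $[t_i,t_f]$ --- precisely the additional saturation hypothesis you propose to state explicitly or to derive from the entropic field equations via the Raychaudhuri fixed point.
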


\begin{proof}
With $\rho(t, \vec{x}) = \rho_0$ constant and $\tau$ spatially coherent, each hypersurface $\Sigma_t$ evolves under the volume flow defined by the congruence generated by $n^\mu$. The infinitesimal volume element evolves according to
\[
\mathrm{d}^3 x(t) = \mathrm{d}^3 x(t_i) \, \exp\left( \int_{t_i}^t \nabla_\mu n^\mu \, \mathrm{d}t' \right),
\]
where $\nabla_\mu n^\mu$ denotes the local expansion rate of the volume form. Integrating this across $\Sigma_t$ yields
\[
a^3(t) = \rho_0 \int_{\Sigma_t} \mathrm{d}^3 x(t) = a^3(t_i) \, \exp\left( \int_{t_i}^t \nabla_\mu n^\mu \, \mathrm{d}t' \right),
\]
and taking the cube root gives the desired exponential form.
\end{proof}

This result captures the core insight of MTG inflation: the exponential expansion of the universe arises not from a fundamental dynamical field but from the geometric accumulation of coherent, measurement-induced projection events. The cosmological scale factor tracks the informational structure of projection, and the dynamics of inflation are governed by coherence, not classical potential energy. Reheating is reinterpreted as the fragmentation of this coherence, initiating classical matter dynamics and entropy production.

In this view, the inflationary epoch of the early universe is not a singular dynamical anomaly but the natural consequence of a transient regime of maximal alignment in the geometry of internal time. Measurement, coherence, and modular flow replace scalar fields and fine-tuned potentials, yielding a conceptually simpler and operationally grounded account of cosmic inflation.

\begin{figure}[ht]
\centering
\begin{tikzpicture}[scale=1.1]

% Hypersurfaces
\foreach \y in {0, 1.0, 2.0, 3.2} {
    \draw[gray!60] (-3,\y) -- (3,\y);
    \node[left] at (-3,\y) {\scriptsize $\Sigma_t$};
}

% Tau vectors aligned on early hypersurfaces
\foreach \x in {-2.5,-1.5,-0.5,0.5,1.5,2.5} {
    \draw[->, thick, blue!80] (\x,0) -- ++(0,0.9);
    \draw[->, thick, blue!80] (\x,1.0) -- ++(0,0.9);
}

% Tau vectors decohering on top hypersurface
\foreach \x/\dx/\dy in {-2.5/-0.2/0.9, -1.5/0.3/0.8, -0.5/-0.1/1.0, 0.5/0.2/0.9, 1.5/-0.3/1.1, 2.5/0.1/1.0} {
    \draw[->, thick, blue!40!gray, decorate, decoration={snake, amplitude=0.6pt, segment length=6pt}]
        (\x,3.2) -- ++(\dx,\dy);
}

% Labels for early coherence and decoherence
\node[blue!80] at (3.2,0.5) {\footnotesize Coherent $\tau(x)$};
\node[blue!40!gray] at (3.6,3.4) {\footnotesize Decohering $\tau(x)$};

% Arrows indicating expansion (scale factor)
\foreach \y in {0.5, 1.5, 2.6} {
    \draw[<->, red!70!black, thick] (-2.8,\y) -- (2.8,\y);
}

% Label for expansion
\node[red!70!black] at (0,2.8) {\footnotesize Expansion of $a(t)$};

% Optional inset a(t) plot
\begin{scope}[shift={(4.5,0)}]
\draw[->] (0,0) -- (2.2,0) node[right] {\scriptsize $t$};
\draw[->] (0,0) -- (0,1.5) node[above] {\scriptsize $a(t)$};
\draw[thick, violet] plot[smooth] coordinates {(0,0.1) (0.5,0.2) (1,0.4) (1.5,0.9) (2,1.2)};
\node at (1.2,1.4) {\tiny \textcolor{violet}{$a(t) \sim e^{Ht}$}};
\draw[dashed] (1.5,0.9) -- (1.5,0);
\end{scope}

\end{tikzpicture}
\caption{Measurement-induced inflation in MTG. Early alignment of the internal time field \( \tau(x) \) across spacelike hypersurfaces \( \Sigma_t \) leads to saturated projection density \( \rho(x) \) and exponential growth of the emergent scale factor \( a(t) \). Decoherence at later stages fragments coherence and initiates classical reheating.}
\label{fig:inflation_MTG_tikz}
\end{figure}
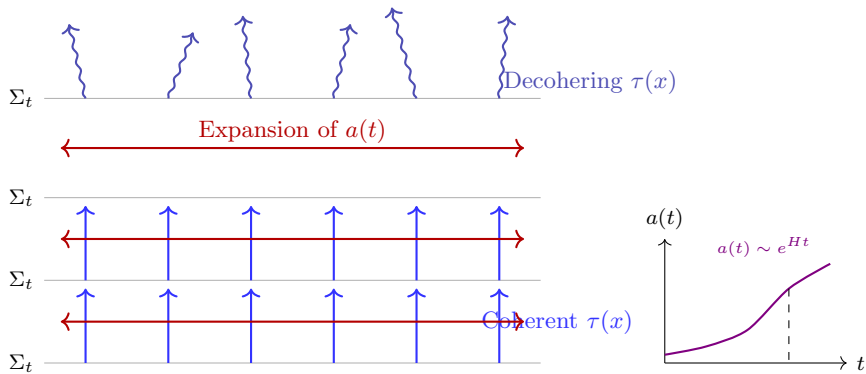

%%%%%%%%%%%%%%%%%%%%
\subsection{Structure Formation from Temporal Fluctuations}

In conventional cosmology, primordial density perturbations are typically traced to quantum fluctuations of a scalar inflaton field, magnified by exponential expansion and stretched to cosmological scales. In contrast, the MTG framework attributes the origin of structure to intrinsic fluctuations in the projection of the internal time field \( \tau(x) \), generated by the stochastic and decoherent nature of quantum measurement. These fluctuations do not arise from field-theoretic excitations, but from the probabilistic geometry of fiber collapse.

Even in a phase where \( \tau(x) \) is coherent and nearly homogeneous across spacelike hypersurfaces, the act of measurement introduces irreducible quantum uncertainty. Local deviations in the projection map \( \mu[\tau(x)]_\mu \) result from fiber misalignment, ambient decoherence, or curvature in the internal time bundle. These deviations perturb the emergent metric by contributing localized variations to the deformation term:
\begin{equation}
\delta g^{\mathrm{eff}}_{\mu\nu}(x) \sim \kappa \int_{\Sigma_x} \delta\left( \mu[\tau(y)]_\mu \mu[\tau(y)]_\nu \right) \rho(y) \, \mathrm{d}\Sigma(y).
\end{equation}
Here, fluctuations in the metric arise not from classical matter inhomogeneities but from variation in the causal imprint of projection across the internal fiber geometry.

The statistical structure of these fluctuations can be analyzed through two-point correlators of the projected time field. These are computed via the MTG path integral, summing over all admissible configurations of \( \tau \) and \( \rho \), subject to the projection-induced constraints:
\begin{equation}
\langle \delta \tau(\vec{x}) \, \delta \tau(\vec{x}') \rangle = \int \mathcal{D}\tau \, \mathcal{D}\rho \, e^{i S[\tau, \rho]} \, \delta \tau(\vec{x}) \, \delta \tau(\vec{x}'),
\end{equation}
where \( \delta \tau(x) \) denotes deviation from the coherent background. In Fourier space, the power spectrum is defined by
\begin{equation}
\langle \delta \tau_{\vec{k}} \, \delta \tau_{-\vec{k}} \rangle = (2\pi)^3 \delta^{(3)}(0) \, P(k),
\end{equation}
and under broad assumptions of local coherence and projection-limited noise, this spectrum is approximately scale-invariant:
\begin{equation}
P(k) \sim k^{n_s - 1}, \quad \text{with } n_s \approx 1.
\end{equation}
The resulting spectrum aligns with observations of the cosmic microwave background, despite its non-field-theoretic origin.

%\begin{figure}[!ht]
    %\centering
    %\includegraphics[width=0.95\textwidth]{fig_mtg_inflation.png}
    %\caption{Measurement-induced inflation in MTG. A coherent internal time field \( \tau(x) \) aligned across spacelike hypersurfaces leads to saturated projection density \( \rho(x) \) and exponential expansion of the emergent scale factor \( a(t) \). As coherence fragments, reheating occurs via decoherence and classical matter formation.}
    %\label{fig:inflation_MTG}
%\end{figure}

Crucially, the metric perturbations produced by this mechanism are intrinsically gauge-invariant. They originate from the structure of internal time, rather than from coordinate-dependent scalar fields. Scalar modes correspond to spatial variation in projection direction, tensor modes to shear across fiber sectors, and possible isocurvature modes may arise from disconnected topological domains.

In this view, the large-scale structure of the universe reflects the quantum geometry of temporal collapse. The seeds of galaxies, clusters, and anisotropies emerge from the stochastic fragmentation of coherence during early projection epochs. Their power spectrum, propagation, and observable imprint are all shaped by the informational content of measurement. MTG thus provides a physically grounded alternative to scalar inflation, rooting cosmological structure in the dynamics of time itself.

%%%%%%%%%%%%%%%%%%%%%
\subsection{Topological Projection Sectors}

The MTG framework offers a unified geometric account of both dark energy and dark matter, interpreting them as emergent phenomena arising from the topology and coherence structure of the internal time-fiber bundle. These effects are not attributed to unknown particles or exotic matter components, but instead reflect the incomplete collapse and topological complexity of the projection process. In this interpretation, deviations from standard gravitational dynamics signal the persistent presence of quantum structure within the temporal geometry of spacetime.

Dark energy originates in regions where the internal time field \( \tau(x) \) remains uncollapsed, preserving its coherent quantum character. In such domains, the projection density \( \rho(x) \) is effectively vanishing, and the projection map \( \mu[\tau(x)] \) is undefined or delocalized. The time-fiber curvature \( F_{\mu\nu} \), no longer constrained by classical measurement, retains its full quantum amplitude and contributes to the emergent metric via entropic backreaction. This residual curvature acts as an effective cosmological term, generating large-scale acceleration without requiring a fundamental cosmological constant. The resulting contribution to the emergent Einstein equations is governed by the expectation value of the curvature over coherence-preserved regions:
\begin{equation}
\Lambda_{\mathrm{eff}} \sim \left\langle F_{\mu\nu} F^{\mu\nu} \right\rangle_{\text{non-collapsed}}.
\end{equation}
This term captures the influence of internal quantum geometry on spacetime expansion and provides a natural origin for late-time cosmic acceleration.

Dark matter, in turn, emerges from topological obstructions within the internal time bundle. Global features such as non-contractible loops, fiber misalignments, or discrete holonomy sectors prevent \( \tau(x) \) from being globally trivialized, even in regions where projection has occurred. These topological defects carry no conventional energy–momentum, yet they modify the parallel transport of the internal time field and contribute to the deformation of the emergent metric \( g^{\mathrm{eff}}_{\mu\nu} \). Their influence is gravitational in character, producing effects analogous to non-luminous mass in galactic halos, filaments, and voids—despite being invisible to detectors limited to Standard Model interactions.

Together, these mechanisms reveal that dark energy and dark matter are complementary expressions of the same measurement-induced ontology. Persistent coherence manifests as a source of cosmic acceleration, while topological misalignment manifests as effective mass. Both arise from the structure of the projection process and the geometry of internal time, without invoking new particles, fields, or hidden sectors.

The next result formalizes the connection between residual fiber curvature and the emergence of a cosmological constant in the MTG gravitational field equations.

\begin{theorem}[Effective Cosmological Constant from Residual Fiber Curvature]
Let $M$ be a globally hyperbolic spacetime manifold equipped with an internal time-fiber bundle $\pi: E \to M$, and let $A^{\mathrm{time}}_\mu$ be a connection on $E$ whose curvature is defined by $F_{\mu\nu} := \partial_\mu A_\nu - \partial_\nu A_\mu + [A_\mu, A_\nu]$. Suppose $\rho(x)$ is a projection density function and let $\Omega \subset M$ be a region where $\rho(x) \approx 0$, so that projection is sparse or absent. Then the effective cosmological constant in the emergent Einstein equations over $\Omega$ is given by
\begin{equation}
\Lambda_{\mathrm{eff}} = \kappa \, \langle F_{\mu\nu} F^{\mu\nu} \rangle_{\Omega},
\end{equation}
where $\langle \cdot \rangle_{\Omega}$ denotes a coarse-grained average over $\Omega$ and $\kappa > 0$ is the MTG coupling parameter converting entanglement curvature into metrical response. If the curvature $F_{\mu\nu}$ is covariantly constant or slowly varying throughout $\Omega$, then the emergent Einstein equation in that region takes the standard form
\begin{equation}
G_{\mu\nu}[g^{\mathrm{eff}}] + \Lambda_{\mathrm{eff}} \, g^{\mathrm{eff}}_{\mu\nu} = 8\pi G \, T^{\mathrm{meas}}_{\mu\nu}.
\end{equation}
\end{theorem}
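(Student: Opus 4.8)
The plan is to begin from the entropic field equations of Section~\ref{sec:6.1}, $G_{\mu\nu}[g^{\mathrm{eff}}] = 8\pi G\, T^{\mathrm{meas}}_{\mu\nu}$, obtained there by extremizing the projection entropy, and to track what the time-fiber curvature term $\mathcal{L}_F = -\tfrac{1}{4g^2}\mathrm{Tr}(F_{\mu\nu}F^{\mu\nu})$ contributes inside the coherence-preserved region $\Omega$. First I would note that the hypothesis $\rho(x)\approx 0$ on $\Omega$ makes both the measurement interaction $\mathcal{L}_{\mathrm{meas}} = \lambda\rho\, J^\mu n_\mu$ and the weighted entropy density $\rho\, S(\tau)$ negligible there, so their gradients---which are what source $T^{\mathrm{meas}}_{\mu\nu}$---reduce to a small residual that I retain on the right-hand side. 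What does not vanish is the metric variation of $\mathcal{L}_F$, so that the emergent Einstein equation on $\Omega$ takes the form $G_{\mu\nu}[g^{\mathrm{eff}}] = 8\pi G\,(T^{\mathrm{meas}}_{\mu\nu} + T^{(F)}_{\mu\nu})$, where $T^{(F)}_{\mu\nu}$ is the Yang--Mills-type stress tensor
\begin{equation}
T^{(F)}_{\mu\nu} = \frac{1}{g^2}\,\mathrm{Tr}\!\left( F_{\mu\alpha}F_{\nu}{}^{\alpha} - \tfrac14\, g_{\mu\nu}\, F_{\alpha\beta}F^{\alpha\beta}\right).
\end{equation}

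Next I would invoke the covariant-constancy hypothesis $D_\alpha F_{\mu\nu}\approx 0$. Coarse-graining over $\Omega$ then leaves no dynamical curvature gradients, so the curvature term enters the action as a frozen scalar density $\rho_\Lambda := -\tfrac{1}{4g^2}\langle\mathrm{Tr}(F_{\alpha\beta}F^{\alpha\beta})\rangle_\Omega$ multiplying the invariant volume element $\sqrt{-g^{\mathrm{eff}}}$; equivalently, the coarse-grained $\langle T^{(F)}_{\mu\nu}\rangle_\Omega$ is forced into the diagonal form $-\rho_\Lambda\, g^{\mathrm{eff}}_{\mu\nu}$. Using $\delta\sqrt{-g} = -\tfrac12\sqrt{-g}\, g_{\mu\nu}\,\delta g^{\mu\nu}$ in the variation of $\int_\Omega \sqrt{-g^{\mathrm{eff}}}\,\rho_\Lambda\,\mathrm{d}^d x$ produces a term proportional to $g^{\mathrm{eff}}_{\mu\nu}$ sitting alongside $G_{\mu\nu}[g^{\mathrm{eff}}]$. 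Absorbing $8\pi G$, the gauge coupling $1/g^2$ and the trace normalization into one constant $\kappa$, one defines $\Lambda_{\mathrm{eff}} := \kappa\,\langle F_{\mu\nu}F^{\mu\nu}\rangle_\Omega$ and moves it to the geometric side, obtaining $G_{\mu\nu}[g^{\mathrm{eff}}] + \Lambda_{\mathrm{eff}}\, g^{\mathrm{eff}}_{\mu\nu} = 8\pi G\, T^{\mathrm{meas}}_{\mu\nu}$, which is the asserted equation with the residual sparse-measurement stress on the right.

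The step I expect to be the main obstacle is the one that replaces $\langle T^{(F)}_{\mu\nu}\rangle_\Omega$ by $-\rho_\Lambda\, g^{\mathrm{eff}}_{\mu\nu}$. An honest metric variation of a genuinely covariantly constant $F_{\mu\nu}$ returns the full Yang--Mills stress tensor above, which in $d=4$ is traceless and hence cannot be proportional to $g^{\mathrm{eff}}_{\mu\nu}$ unless it vanishes, and which in general singles out a two-plane in the tangent space and is therefore anisotropic. To reach the stated conclusion one needs an additional structural assumption---treating $F$ over $\Omega$ as an orientation-averaged statistical ensemble, taking $\mathfrak{g}_{\mathrm{time}}$ large enough that the component stress tensors sum isotropically, or introducing a conformal-symmetry-breaking scale so that the $F^2$ term behaves as a genuine vacuum energy density---which I would state explicitly as an additional hypothesis rather than attempt to derive. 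A secondary subtlety is the sign of $\kappa$: with the anti-Hermitian generator convention of Section~\ref{sec:3}, $\mathrm{Tr}(F_{\alpha\beta}F^{\alpha\beta})$ carries a definite sign that must be propagated consistently through the stress tensor so that $\kappa>0$ and $\Lambda_{\mathrm{eff}}$ acquires the de Sitter-like sign needed to drive late-time acceleration.
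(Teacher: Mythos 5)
Your route is essentially the paper's: both arguments take the emergent Einstein equations from the entropy variational principle of Section~\ref{sec:6.1}, observe that on $\Omega$ the only surviving contribution is the metric variation of the Yang--Mills-type term $\int_\Omega \mathrm{Tr}(F_{\alpha\beta}F^{\alpha\beta})\sqrt{-g}\,\mathrm{d}^4x$, and then use covariant constancy of $F_{\mu\nu}$ to argue that this variation acts as a constant term proportional to $g^{\mathrm{eff}}_{\mu\nu}$. Where you differ is in honesty about the critical step. The paper's proof simply asserts that ``the functional derivative $\delta S/\delta g^{\mu\nu}$ introduces a tensorial source proportional to $\Lambda_{\mathrm{eff}}\,g_{\mu\nu}$'' and offers no justification; you correctly compute that an honest variation returns the full Yang--Mills stress tensor $T^{(F)}_{\mu\nu}\propto \mathrm{Tr}(F_{\mu\alpha}F_\nu{}^{\alpha}-\tfrac14 g_{\mu\nu}F_{\alpha\beta}F^{\alpha\beta})$, which in $d=4$ is traceless and anisotropic, and therefore cannot equal $-\rho_\Lambda\,g^{\mathrm{eff}}_{\mu\nu}$ unless it vanishes. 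Covariant constancy of $F$ does nothing to repair this: it freezes the magnitude of the curvature but not the distinguished two-plane it selects in the tangent space. The additional hypothesis you propose (orientation averaging over a statistical ensemble of fiber curvatures, a large structure group summing isotropically, or an explicit conformal-anomaly-type scale so that $\langle F^2\rangle$ contributes to the trace) is genuinely needed and is absent from the paper; your decision to state it as an explicit assumption rather than derive it is the correct call. In short, your proposal reproduces the paper's argument and additionally exposes the step at which that argument is incomplete.
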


\begin{proof}
The gravitational sector of MTG is governed by a variational principle applied to the entropy functional \( S[\rho, \tau] \), which encodes the informational content of the internal time field and its projection structure. In regions where the projection density \( \rho(x) \) vanishes, the internal time field \( \tau(x) \) remains in a superposed quantum state, uncollapsed by measurement. The curvature \( F_{\mu\nu} \) of the temporal connection \( A^{\mathrm{time}}_\mu \) retains its quantum fluctuations, and its contribution to the entropy becomes dynamically significant.

Varying the total action with respect to the emergent metric \( g^{\mathrm{eff}}_{\mu\nu} \), one finds that the stress-energy contribution from uncollapsed regions is governed by the variation of the Yang–Mills-type curvature term:
\begin{equation}
\int_{\Omega} \mathrm{Tr}(F_{\alpha\beta} F^{\alpha\beta}) \, \sqrt{-g} \, \mathrm{d}^4x,
\end{equation}
where \( \Omega \subset M \) denotes a coherence-preserved domain. The functional derivative \( \delta S / \delta g^{\mu\nu} \) introduces a tensorial source proportional to the effective cosmological term \( \Lambda_{\mathrm{eff}} g_{\mu\nu} \), with
\begin{equation}
\Lambda_{\mathrm{eff}} := \kappa \left\langle F_{\mu\nu} F^{\mu\nu} \right\rangle_{\Omega}.
\end{equation}
If \( F_{\mu\nu} \) is approximately covariantly constant within \( \Omega \), then this contribution behaves identically to a cosmological constant in the Einstein equations, completing the proof.
\end{proof}

This result shows that cosmic acceleration arises naturally from persistent temporal coherence. The cosmological constant is no longer a fundamental parameter requiring fine-tuning, but a statistical consequence of curvature in regions uncollapsed by measurement. Likewise, topological sectors of the internal time bundle produce effective gravitational influence through their modification of the projection geometry and holonomy structure. These contributions explain dark energy and dark matter as geometric expressions of the same quantum-informational framework that gives rise to spacetime, causality, and temporal flow in the MTG model.
%%%%%%%%%%%%%%%%%%
\subsection{Cosmology as Measurement Geometry}

The MTG framework reinterprets cosmology not as the dynamical evolution of fields on a fixed spacetime background, but as the informational emergence of classical geometry from quantum time through measurement. In this view, the entire observable structure of the universe---including the arrow of time, inflation, large-scale structure, and the dark sector---arises from the coherence, topology, and projection dynamics of the internal time-fiber bundle \( \pi: E \to M \).

Cosmological time is not predefined as a global coordinate but arises relationally from the cumulative act of measurement. Each projection of the internal time field \( \tau(x) \) localizes a previously coherent temporal direction, and an observer constructs a relational temporal parameter \( t \) by slicing spacetime into hypersurfaces \( \Sigma_t \), where \( t_2 > t_1 \) if and only if
\[
\int_{\Sigma_{t_2}} \rho(x) \, \mathrm{d}^3x > \int_{\Sigma_{t_1}} \rho(x) \, \mathrm{d}^3x.
\]
This statistical ordering of projection events defines a monotonic arrow of time without invoking external clocks or global initial conditions.

In the early universe, inflation corresponds to a phase of saturated and spatially coherent projection. During this regime, the internal time field \( \tau(x) \) becomes nearly uniform across horizon-sized patches, and the projection density \( \rho(x) \) reaches a high, nearly constant value. The resulting exponential growth of the emergent scale factor \( a(t) \) is not driven by a potential energy term, but by the rapid accumulation of classicalization events. As coherence fragments, the time-fiber bundle decoheres, spatial variations in \( \tau \) emerge, and energy is redistributed into distinguishable matter fields, initiating reheating. This process requires no inflaton field or decay mechanism, but results purely from the fragmentation of modular alignment.

Structure formation arises from quantum fluctuations in the projection process itself. Even when \( \tau(x) \) is globally coherent, local misalignments, decoherence events, and curvature fluctuations introduce variation in the projected vectors \( \mu[\tau(x)]_\mu \). These variations imprint spatial inhomogeneities on the emergent metric \( g^{\mathrm{eff}}_{\mu\nu} \), producing both scalar and tensor perturbations. The resulting correlation structure is inherently gauge-invariant and naturally yields a nearly scale-invariant power spectrum, matching observed features of the CMB and large-scale structure without appealing to quantum field fluctuations of a hypothetical inflaton.

The dark sector is likewise reinterpreted in geometric and informational terms. Regions where projection has not occurred---where \( \rho(x) \approx 0 \) and \( \tau(x) \) remains coherent---retain nonzero curvature \( F_{\mu\nu} \), producing a nonvanishing contribution to the emergent gravitational equations. This contribution mimics a cosmological constant, yet it arises not from vacuum energy, but from persistent coherence in the internal time bundle. Similarly, topological defects in the fiber structure, including non-contractible loops, monodromies, and alignment vortices, source \( g^{\mathrm{eff}}_{\mu\nu} \) without coupling to Standard Model fields. These topological projection sectors behave phenomenologically as dark matter, influencing gravitational potentials without requiring additional particles.

What unifies these phenomena is coherence. Expansion reflects the accumulation of classical information; structure arises from coherence fluctuations; entropy and matter emerge from decoherence; and gravitational anomalies reflect residual curvature and topological frustration in internal time. The entire cosmological history becomes an emergent signature of how quantum time transitions to classical geometry through the observer-relative act of projection.

In this formulation, the universe is not a dynamical evolution through spacetime, but an informational geometry constructed from a record of measurements. Spacetime grows because coherence is collapsed. Matter exists because projection localizes. The dark sector manifests as the residue of unresolved quantum geometry. MTG thus reframes cosmology as a subfield of quantum information geometry: a theory in which spacetime, structure, and dynamics are not fundamental, but emergent features of measurement within a temporally fibered quantum universe.

\section{Modular Information Geometry}\label{sec:8}
The MTG framework provides a geometric and informational reformulation of the holographic principle. In traditional holographic dualities such as AdS/CFT, a gravitational bulk spacetime is encoded in the quantum dynamics of a lower-dimensional boundary theory. Within MTG, this correspondence arises not from a duality between distinct theories, but from a unified structure in which both bulk geometry and boundary entanglement emerge from the projection dynamics of an internal time-fiber bundle \( \pi: E \to M \). The act of measurement; specifically, the collapse of the internal time field \( \tau(x) \) into classical outcomes \( \mu[\tau(x)] \), generates an effective bulk metric \( g^{\mathrm{eff}}_{\mu\nu}(x) \), while simultaneously imprinting entanglement structure on the boundary.

In this formulation, each projection event contributes not only to the local classicalization of temporal geometry, but also to the organization of modular flow across observer-accessible regions. The modular Hamiltonian that governs boundary dynamics emerges as the generator of parallel transport along the internal fibers of \( \tau \), linking measurement outcomes across causal surfaces. Entanglement entropy, in turn, becomes a coarse-grained measure of projection density and coherence collapse over boundary-aligned wedges. The effective geometry of spacetime, the flow of modular information, and the causal structure of boundary theory all trace their origin to the same measurement-induced processes that define internal time.

This section develops the MTG interpretation of holography as a boundary manifestation of relational measurement geometry. We demonstrate how internal time-fiber holonomy defines modular structure on the boundary, how entanglement entropy emerges from projected coherence across minimal surfaces, and how bulk geometry is reconstructed through alignment and transport of modular data. In this view, holography is not a correspondence between independent theories, but a manifestation of a deeper informational unity: the geometry of spacetime and the entanglement structure of the boundary are both encoded in the projection history of quantum time.

%%%%%%%%%%%%%%%%%%%%%%%%%%%%%%%%%%%%%
\subsection{Modular Flow from Time-Fiber Projection}

Let \( M \) be a \((d+1)\)-dimensional Lorentzian spacetime with conformal boundary \( \partial M \), as arises in asymptotically AdS geometries. In the MTG framework, the internal geometry of time is encoded in a fiber bundle \( \pi: E \to M \), where each fiber \( \mathcal{T} \) represents a space of internal temporal configurations—typically a complex torus, a homogeneous domain such as \( SU(1,1)/U(1) \), or a quaternionic disk. Measurement-induced projection of the internal time field \( \tau(x) \) determines both bulk geometry and its boundary signature. Restricting the time-fiber bundle to the boundary yields a bundle \( E|_{\partial M} \to \partial M \), where the collapsed field \( \tau|_{\partial M} \) encodes observable modular structure accessible to boundary observers.

Boundary observers interact not with the full bulk configuration but with classical data imprinted by collapse events. For a spatial subregion \( A \subset \partial M \), the reduced density matrix \( \rho_A \) reflects access to a partial subsystem and generates a modular Hamiltonian \( K_A \), defined by \( \rho_A = Z^{-1} e^{-K_A} \), governing the evolution of operators \( \mathcal{O}_A(s) \) via modular flow: \( \frac{d}{ds} \mathcal{O}_A(s) = i[K_A, \mathcal{O}_A(s)] \). In conventional quantum field theory, \( K_A \) is abstractly defined through algebraic structures. Within MTG, however, it acquires a concrete geometric interpretation: it is the boundary encoding of informational flux carried by the collapse of \( \tau \) through the time-fiber bundle.

Let \( \mu[\tau(x)] \in T^*_x M \) denote the covector field obtained from the classicalization of the internal time field \( \tau \) at a boundary point \( x \in A \), and let \( \rho(x) \) represent the local density of measurement events. The product \( \mu[\tau(x)]^\mu \rho(x) \) defines a projection current that encodes the informational direction and strength of collapse. The modular Hamiltonian \( K_A \) associated with the boundary region \( A \subset M \) can then be expressed as the flux of this current through \( A \):
\begin{equation}
K_A \sim \int_A \mu[\tau(x)]^\mu \, \rho(x) \, n_\mu(x) \, \mathrm{d}^d x,
\end{equation}
where \( n_\mu(x) \) is the future-directed unit normal to \( A \).  If \( A = \partial \Omega \) bounds a spacetime region \( \Omega \subset \partial M \), and the fields \( \mu[\tau(x)] \) and \( \rho(x) \) are sufficiently smooth, the divergence theorem allows the modular Hamiltonian to be expressed as a bulk integral over \( \Omega \):
\begin{equation}
K_A \sim \int_\Omega \nabla_\mu J^\mu(x) \, \mathrm{d}^{d+1}x,
\end{equation}
where \( J^\mu(x) = \mu^\mu[\tau(x)] \rho(x) \) is the projection current defined previously. This volume formulation interprets modular energy as the integrated divergence of coherence flux, sourced by the local structure of measurement-induced collapse. It connects entanglement evolution on the boundary to informational flow in the bulk, governed by the geometry of the internal time-fiber bundle.

This construction reinterprets modular flow as an emergent, informationally grounded symmetry: it arises not from imposed dynamics, but from the causal structure of projection geometry. The boundary modular Hamiltonian becomes the observable trace of time-fiber collapse, and entanglement evolution across \( A \) is governed by how coherence in \( \tau \) fragments under observer-aligned measurement. The modular Hamiltonian is thus elevated from an abstract algebraic generator to a geometric encoding of quantum-to-classical transition in the internal time domain.

Through this correspondence, MTG explains modular flow, entanglement wedges, and holographic structure as phenomena rooted in the measurement geometry of internal time. The boundary inherits modular time as a relational construct, shaped by the projection history encoded in \( \mu[\tau(x)] \) and \( \rho(x) \). Modular Hamiltonians become flux functionals of the projection current, and their flow reflects the coherence gradients induced by measurement across \( \partial M \). Entanglement dynamics, rather than emerging from abstract algebraic automorphisms, are interpreted as the causal propagation of internal time collapse. In this setting, time is not an external parameter but the integrated direction of coherence loss relative to the observer's frame.
%%%%%%%%%%%%%%%%%%%%%%%%%%%%%%%%
\subsection{Entanglement Entropy and Minimal Surface Projections}
In conventional holography, the entanglement entropy of a boundary subregion \( A \subset \partial M \) is given by the Ryu--Takayanagi prescription, which assigns to \( A \) a codimension-2 surface \( \gamma_A \subset M \) that extremizes the area functional while being anchored to \( \partial A \) and homologous to it. The entropy is then given by \( S_A = \mathrm{Area}(\gamma_A)/(4 G_N) \), providing a direct correspondence between the geometry of spacetime and the entanglement structure of the boundary theory.

The MTG framework reinterprets this correspondence by replacing area minimization with projection flux extremization. Rather than selecting surfaces based on the classical spacetime metric, MTG identifies the relevant codimension-2 surface \( \widetilde{\gamma}_A \subset M \) as the one that extremizes the flux of the projection current \( J^\mu(x) = \mu^\mu[\tau(x)] \rho(x) \) through the surface. This flux encodes the informational flow of coherence collapse across \( \widetilde{\gamma}_A \), making entanglement entropy a functional of projection geometry rather than purely metric area.

To each boundary subregion \( A \), MTG associates a surface \( \widetilde{\gamma}_A \subset M \) that minimizes the total projection flux among all codimension-2 hypersurfaces homologous to \( A \). This flux is computed by contracting the projection current \( \mu^\mu[\tau(x)] \rho(x) \) with the surface volume form. 

In the special case where the projection vector field \( \mu^\mu[\tau(x)] \) is everywhere normal to \( \widetilde{\gamma}_A \), and the projection density \( \rho(x) \) is constant across the surface, the flux functional reduces to
\[
\int_{\widetilde{\gamma}_A} \rho \, \mu^\mu[\tau(x)] \, n_\mu(x) \, \mathrm{d}\Sigma(x) = \rho \int_{\widetilde{\gamma}_A} \mathrm{d}\Sigma(x),
\]
which is proportional to the area of \( \widetilde{\gamma}_A \) in the effective metric \( g^{\mathrm{eff}}_{\mu\nu} \). In this regime, the MTG prescription recovers the standard Ryu--Takayanagi formula, demonstrating that classical holographic entropy arises as a limiting case of coherent measurement geometry.

More generally, however, the geometry of \( \widetilde{\gamma}_A \) encodes an extremal condition not of spatial area, but of coherence-preserving modular flow. These surfaces identify loci in the bulk where modular alignment is least disrupted by projection, marking boundaries between informationally stable and decohering regions of the internal time bundle.

The entanglement entropy \( S_A \) is thus reinterpreted as a functional over the projection geometry of \( \widetilde{\gamma}_A \). Rather than being defined solely by geometric area, it becomes a local integral of entropy density determined by projection collapse:
\begin{equation}
S_A = \int_{\widetilde{\gamma}_A} \mathcal{S}(x) \, \mathrm{d}\Sigma(x), \quad \text{where} \quad \mathcal{S}(x) := -\operatorname{Tr}\left[ \rho_{\text{fiber}}(x) \log \rho_{\text{fiber}}(x) \right].
\end{equation}
Here \( \rho_{\text{fiber}}(x) \) is the reduced density matrix of the internal time fiber at point \( x \), obtained by tracing out all other degrees of freedom. The integrand \( \mathcal{S}(x) \) quantifies the local information loss due to measurement, and the surface integral reflects the total entropy associated with modular decoherence across \( \widetilde{\gamma}_A \).

In regimes where projection is dense and homogeneous, \( \mathcal{S}(x) \) becomes constant, and the entropy functional reduces to an area law. In more general settings, \( \mathcal{S}(x) \) varies with coherence and projection structure, encoding the informational gradient across the entanglement wedge.

This construction naturally links boundary entanglement to internal projection geometry. As the field \( \tau(x) \) undergoes localized collapses, its coherent structure fragments, inducing modular information flow across the bulk. The surface \( \widetilde{\gamma}_A \) then geometrizes this flow, marking the hypersurface along which the accumulation of classicalization contributes minimally to the entropy gradient. The resulting entropy is not merely a count of bulk degrees of freedom but a reflection of the history and geometry of collapse.

In this way, MTG generalizes the holographic entanglement principle. Entropy arises not from static surfaces in a fixed geometry, but from the dynamic propagation of modular coherence and its collapse-induced boundary imprint. Minimal surfaces become projection surfaces, and the RT formula appears as a limiting case in a broader correspondence between measurement, geometry, and information.
%%%%%%%%%%%%%%%%%%%%%%%%
\subsection{Bulk Reconstruction from Boundary Projection Data}
In conventional holographic dualities such as AdS/CFT, the gravitational bulk is encoded nonlocally in the degrees of freedom of a boundary conformal field theory. Within the MTG framework, this correspondence is reinterpreted in terms of projection geometry: the emergent bulk metric \( g^{\mathrm{eff}}_{\mu\nu} \), as well as the internal time field \( \tau(x) \), are reconstructed from a boundary record of measurement-induced collapse. Rather than arising from preassigned boundary conditions or abstract operator algebras, the MTG reconstruction scheme relies on observer-relative data—the measurement density \( \rho(x) \), the projected time vectors \( \mu[\tau(x)]^\mu \), and the holonomy of the internal time-fiber connection—defined directly on the conformal boundary \( \partial M \).

Let \( A \subset \partial M \) be a region of the boundary associated with an observer or an accessible subdomain. The projection data on \( A \) include the scalar field \( \rho(x) \) encoding the local rate of temporal collapse, the projection vectors \( \mu[\tau(x)]^\mu \) specifying the classicalized direction of internal time, and the connection one-form \( A_\mu^{\mathrm{time}} \) valued in the Lie algebra \( \mathfrak{g}_{\mathrm{time}} \). Together, these fields determine the causal, coherent, and modular alignment structure of the internal time bundle along the boundary.

Reconstructing the bulk geometry begins with analyzing how projected vectors are transported along \( \partial M \). The internal time connection defines a parallel transport operator whose holonomy captures how coherence evolves around boundary loops. For any curve \( \gamma_{xy} \subset \partial M \) connecting boundary points \( x \) and \( y \), the holonomy of the connection \( \nabla^{\mathrm{time}} \) is given by the path-ordered exponential
\begin{equation}
\operatorname{Hol}_{\gamma_{xy}}(\nabla^{\mathrm{time}}) = \mathcal{P} \exp\left( \int_{\gamma_{xy}} A^{\mathrm{time}}_\mu \, dx^\mu \right),
\end{equation}
encoding the misalignment and curvature of the internal time bundle. This boundary holonomy data can then be lifted into the bulk by extending both \( \rho(x) \) and \( \mu[\tau(x)]^\mu \) into a neighborhood of \( A \) using parallel transport consistent with the internal connection.

The emergent metric at a bulk point \( x \in M \) is defined by an integral over the surrounding causal region \( \Sigma_x \), incorporating the transported projection vectors and measurement field:
\begin{equation}
g^{\mathrm{eff}}_{\mu\nu}(x) = \eta_{\mu\nu} + \kappa \int_{\Sigma_x} \mu[\tau(y)]_\mu \mu[\tau(y)]_\nu \, \rho(y) \, \mathrm{d}\Sigma(y).
\end{equation}
This expression reflects how internal time collapses along causal hypersurfaces induce local geometric deformation, making spacetime structure a statistical consequence of modular alignment.

The reconstruction domain, the bulk region where this procedure is well-defined—is identified with the entanglement wedge \( \mathcal{E}_A \). In MTG, the entanglement wedge consists of all bulk points for which the projection vectors can be extended coherently from \( A \), without encountering destructive interference or topological obstructions in the time-fiber bundle. If \( x \in \mathcal{E}_A \), then \( \tau(x) \) and its associated curvature structure can be reconstructed from data on \( A \), and the emergent metric \( g^{\mathrm{eff}}_{\mu\nu}(x) \) is determined entirely by boundary measurement geometry.

Local observables in the bulk are reconstructed as functionals of boundary projection data transported through the internal time bundle. Let \( x \in \mathcal{E}_A \) be a point in the entanglement wedge associated with boundary region \( A \subset \partial M \). Define a lifted internal time configuration \( \tau_x \in \Gamma(E|_U) \) over a neighborhood \( U \ni x \), obtained by parallel transport of boundary data \( \tau|_{\partial M} \) under the time-fiber connection \( \nabla^{\mathrm{time}} \). Then the observable \( \mathcal{O}(x) \) is given by
\begin{equation}
\mathcal{O}(x) := \mathcal{F}_A[\tau_x, \rho, A^{\mathrm{time}}],
\end{equation}
where \( \mathcal{F}_A \) is a gauge-covariant functional encoding the dependence on projected internal time, measurement density, and fiber curvature structure. This construction ensures that bulk fields remain relationally defined by observer-accessible data and retain covariance under fiber-bundle automorphisms. The time evolution of these observables is not governed by a universal Hamiltonian but by the modular Hamiltonian \( K_A \), which generates relational modular flow via
\begin{equation}
\frac{d}{ds} \mathcal{O}(x; s) = i [K_A, \mathcal{O}(x; s)],
\end{equation}
with \( s \) the modular parameter induced by the projection current across \( A \). This relation confirms that bulk dynamics, including causality and evolution, are derived from the observer's informational geometry, not from external spacetime coordinates. The conditions under which this reconstruction is valid can be formalized as follows:

\begin{theorem}[Bulk Reconstruction from Projection Congruences]
Let \( M \) be a spacetime with boundary \( \partial M \), and let \( \pi: E \to M \) be a smooth internal time-fiber bundle equipped with a connection \( \nabla^{\mathrm{time}} \) and curvature \( F^{\mathrm{time}}_{\mu\nu} \). Let \( A \subset \partial M \) be a boundary region with projection data consisting of:
\begin{itemize}
  \item A smooth density field \( \rho: A \to \mathbb{R}_{\geq 0} \),
  \item A projection vector field \( \mu[\tau]: A \to T^* \partial M \),
  \item A connection one-form \( A^{\mathrm{time}}_\mu \in \mathfrak{g}_{\mathrm{time}} \).
\end{itemize}
Assume:
\begin{enumerate}
  \item The projection vectors \( \mu[\tau(x)] \) extend smoothly into a bulk region \( U \subset M \) via parallel transport under \( \nabla^{\mathrm{time}} \).
  \item The holonomy group of \( \nabla^{\mathrm{time}} \) over \( A \) acts trivially on fibers in \( U \).
  \item The field \( \rho(x) \) extends smoothly into \( U \).
\end{enumerate}
Then for any point \( x \in U \subset \mathcal{E}_A \), the effective metric \( g^{\mathrm{eff}}_{\mu\nu}(x) \) and internal time field \( \tau(x) \) are reconstructible via the integral:
\[
g^{\mathrm{eff}}_{\mu\nu}(x) = \eta_{\mu\nu} + \kappa \int_{\Sigma_x \cap U} \mu[\tau(y)]_\mu \mu[\tau(y)]_\nu \, \rho(y) \, \mathrm{d}\Sigma(y),
\]
where \( \Sigma_x \) is a causal hypersurface through \( x \) and \( \kappa \) is the coupling parameter of the projection geometry.
\end{theorem}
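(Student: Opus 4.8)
The plan is to reduce the statement to the emergent–metric construction already in hand (Theorem~\ref{thm:emergent_metric} and its smoothness refinement): once one shows that hypotheses (1)--(3) make the rank-one integrand $\mu[\tau(y)]_\mu\,\mu[\tau(y)]_\nu\,\rho(y)$ a single-valued smooth field on the bulk region $U$, with the integration domain $\Sigma_x\cap U$ lying entirely inside $U$, the displayed formula is just that construction evaluated on transported data. I would proceed in four steps. \emph{Step 1 (well-definedness of the bulk extension).} Hypothesis (1) supplies, for each boundary point and each admissible curve into $U$, a parallel-transported value of the projection covector and of $\tau$ itself; hypothesis (2) guarantees this value is curve-independent, since two curves with the same endpoints differ by a loop whose holonomy acts trivially on the fibers over $U$. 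Hence $y\mapsto\mu[\tau(y)]$ and $y\mapsto\tau(y)=P_{\gamma}[\tau|_{\partial M}](y)$ are single-valued on $U$. This is the bulk analogue of the trivialization in Theorem~\ref{thm:flatness_synchronization}, with triviality of holonomy playing the role of vanishing curvature.

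\textbf{Smoothness and evaluation.} \emph{Step 2 (smoothness).} Parallel transport is governed by the linear ODE $\dot s + A^{\mathrm{time}}(\dot\gamma)\cdot s = 0$ along curves $\gamma$; since $A^{\mathrm{time}}$ is smooth by hypothesis and the boundary fields $\rho,\mu[\tau]$ are smooth by (1) and (3), smooth dependence of ODE solutions on initial conditions and parameters shows the extended fields $\mu[\tau(y)]_\mu$ and $\rho(y)$ are smooth on $U$, hence so is the symmetric integrand. \emph{Step 3 (the reconstruction integral).} Because $x\in U\subset\mathcal{E}_A$, the causal slice portion $\Sigma_x\cap U$ appearing in the formula lies inside the region where the transported data is defined — this is precisely the characterization of the entanglement wedge in MTG as the bulk locus reconstructible from $A$ without destructive interference or topological obstruction. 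The expression $\eta_{\mu\nu}+\kappa\int_{\Sigma_x\cap U}\mu[\tau(y)]_\mu\,\mu[\tau(y)]_\nu\,\rho(y)\,\mathrm{d}\Sigma(y)$ then meets the hypotheses of the emergent-metric smoothness result: the integral converges (on the compact slice, or with $\rho$ compactly supported), is symmetric in $\mu,\nu$, and varies smoothly with $x$ under a smooth foliation. Identifying it with $g^{\mathrm{eff}}_{\mu\nu}(x)$ is then the content of the emergent-metric definition itself, while $\tau(x)$ is recovered from Step~1 and $\mu[\tau(x)]$ follows by compatibility of the projection map with the connection.

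\textbf{Covariance.} \emph{Step 4.} One verifies that the reconstructed $g^{\mathrm{eff}}_{\mu\nu}(x)$ and $\tau(x)$ are invariant under the residual fiber-bundle automorphisms fixing the boundary frame on $A$; this follows from the gauge covariance of $\mu[\tau]$, $\rho$, and $D_\mu$ established in Theorem~\ref{thm:gauge_cov}, and licenses the functional presentation $\mathcal{O}(x)=\mathcal{F}_A[\tau_x,\rho,A^{\mathrm{time}}]$ as a gauge-covariant observable.

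\textbf{Main obstacle.} The delicate point is Step~1. As literally stated, hypothesis (2) concerns the holonomy ``over $A$'', i.e.\ loops in the boundary, whereas curve-independence of transport \emph{into} $U$ needs triviality of holonomy for loops in $U\cup A$ — equivalently, vanishing of $F^{\mathrm{time}}$ on $U$ together with simple connectivity of $U$, or triviality of the restricted holonomy on a neighborhood of $A\cup U$. I would therefore read (2) in this stronger sense (or add it explicitly); under that reading the Ambrose--Singer argument of Theorem~\ref{thm:flatness_synchronization} applies verbatim, and the remaining steps are routine consequences of the earlier results. A secondary technical issue is ensuring $\Sigma_x\cap U$ is either compact or carries compactly supported $\rho$ so that the integral is well-posed; this can be absorbed into the definition of the reconstruction domain.
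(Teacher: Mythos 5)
Your proposal follows essentially the same route as the paper's proof: parallel transport of the boundary data into $U$, path-independence from trivial holonomy, smoothness of the transported fields, and then direct evaluation of the emergent-metric integral over $\Sigma_x\cap U$. Your observation about hypothesis (2) is well taken --- the paper's own proof silently adopts exactly the stronger reading you advocate (it says ``the holonomy of $\nabla^{\mathrm{time}}$ is trivial over $U$'' rather than over $A$), so your flagging of that mismatch, together with the explicit ODE argument for smoothness and the gauge-covariance check, only adds rigor to what the paper asserts more briefly.
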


\begin{proof}
Let \( x \in U \subset M \), and let \( \gamma_{xy} \subset M \) be any smooth path from \( x \) to a point \( y \in A \subset \partial M \). Since the holonomy of \( \nabla^{\mathrm{time}} \) is trivial over \( U \), parallel transport along \( \gamma_{xy} \) defines a unique lift of the boundary projection vector \( \mu[\tau(y)] \) to a covector \( \mu[\tau(x)] \in T^*_x M \). The triviality of holonomy ensures that the result is independent of the path homotopy class within \( U \), and smoothness of the connection guarantees that the resulting field \( \mu[\tau(x)] \) is smooth in \( x \).

By assumption, the projection density \( \rho(x) \) also extends smoothly into \( U \), so the integrand in
\[
g^{\mathrm{eff}}_{\mu\nu}(x) = \eta_{\mu\nu} + \kappa \int_{\Sigma_x \cap U} \mu[\tau(y)]_\mu \mu[\tau(y)]_\nu \, \rho(y) \, \mathrm{d}\Sigma(y)
\]
is well-defined and continuous. The surface \( \Sigma_x \) may be chosen as a spacelike or null hypersurface anchored in the causal past of \( x \), and the domain of integration restricted to \( \Sigma_x \cap U \) ensures consistency with the boundary-propagated projection data.

Therefore, both \( g^{\mathrm{eff}}_{\mu\nu}(x) \) and \( \tau(x) \) are constructible from data on \( A \), and the reconstruction is smooth and gauge-consistent within the region \( U \subset \mathcal{E}_A \).
\end{proof}

\begin{figure}[ht]
\centering
\begin{tikzpicture}[scale=.8]

% Boundary circle
\draw[thick] (0,0) circle (3);
\node at (3.6,0) {\small $\partial M$ (Boundary)};
\node at (0,0) {\small Bulk $M$};

% Region A on boundary
\draw[very thick, red] (45:3) arc (45:135:3);
\node[red] at (90:3.4) {\small Region $A$};

% Draw entanglement wedge
\fill[blue!10, opacity=0.5]
    (45:3) .. controls (1.0,1.4) and (-1.0,1.4) .. (135:3)
    -- (135:3) arc (135:45:3);

\draw[dashed, blue!70!black]
    (45:3) .. controls (1.0,1.4) and (-1.0,1.4) .. (135:3);

%\node at (0,1.8) {\small Entanglement Wedge};

% Projection vectors mu[tau(x)] with curvature
\foreach \angle/\scale in {60/1.0,75/1.3,90/1.1,105/0.9,120/1.2}
{
    \draw[->, thick, blue!80!black, decorate, decoration={snake,amplitude=0.6pt, segment length=6pt}]
        (\angle:3) -- ++({\angle-180}:1.3*\scale)
        node[midway, above right=-2pt, rotate=\angle-180] {\tiny $\mu[\tau(x)]$};
}

% Modular flow arrows inside wedge
\foreach \x/\y in {0/1.1}
{
    \draw[->, thick, violet!70!black, dashed]
        (\x,\y) -- ++(0,-0.8)
        node[midway, right] {\tiny modular flow};
}

% Caption label
\node[align=center] at (0,-3.4) {
\footnotesize Bulk reconstruction: $K_A$ from boundary $\mu[\tau(x)]$ and $\rho(x)$
};

\end{tikzpicture}
\caption{Reconstruction of bulk modular geometry from boundary projections. Collapse vectors \( \mu[\tau(x)] \) along a boundary region define modular Hamiltonians that generate flow into the entanglement wedge. Fiber alignment and projection density govern bulk curvature.}
\label{fig:bulk-reconstruction}
\end{figure}

%%%%%%%%%%%%%%%%%%%%%%%%%%%%%%
\subsection{Holography as Measurement Geometry}
The MTG framework reframes holography as a consequence of measurement geometry rather than a correspondence between dual theories. Boundary entanglement, modular flow, and bulk spacetime geometry all emerge from the projection dynamics of the internal time-fiber bundle \( \pi: E \to M \). Modular Hamiltonians are reinterpreted as flux integrals of projection current; entanglement entropy arises from the integration of collapse-induced entropy density across projection-extremal surfaces; and bulk observables are reconstructed from parallel transport of internal time data through coherence-preserving domains~\cite{hamilton2006local}. The entanglement wedge \( \mathcal{E}_A \) becomes the maximal region where boundary measurement history determines classical geometry, with causal structure governed not by a background spacetime but by the alignment and coherence of measurement-induced projections. In this formulation, holography is not an abstract duality; it is a relational lifting of modular structure through the geometry of collapse, encoding time, entropy, and causality as emergent features of the observer's informational interaction with the quantum state.

Let \( \mu[\tau(x)]_\mu \in T^*_x M \) denote the covector field associated with the classicalization of the internal time field \( \tau \), and let \( \rho(x) \) represent the local density of measurement events. Using the effective metric to raise indices, define the coherence vector
\begin{equation}
\mu^\mu[\tau(x)] := g^{\mu\nu}_{\mathrm{eff}}(x) \mu_\nu[\tau(x)],
\end{equation}
and introduce the projection current
\begin{equation}
J^\mu(x) := \mu^\mu[\tau(x)] \, \rho(x),
\end{equation}
which encodes both the direction and intensity of classicalization. The modular Hamiltonian associated to a boundary subregion \( A \subset \partial M \) is then given by the flux of this current across \( A \):
\begin{equation}
K_A \sim \int_A J^\mu(x) \, n_\mu(x) \, \mathrm{d}^d x,
\end{equation}
where \( n_\mu(x) \) is the future-directed unit normal to \( A \). This formulation recasts modular flow not as an abstract automorphism of operator algebras, but as a directional transport of coherence generated by projection dynamics along the internal time bundle.

Entanglement entropy in MTG acquires both an informational and geometric origin. The Ryu--Takayanagi surface \( \gamma_A \subset M \), a codimension-2 hypersurface homologous to \( A \subset \partial M \), is reinterpreted as the surface that extremizes the modular projection current through the bulk. When projection vectors are aligned and \( \rho(x) \) is uniform, this reproduces the standard minimal-area prescription. More generally, however, entanglement entropy becomes a functional of the measurement record,
\begin{equation}
S_A \sim \int_{\gamma_A} \rho(x) \log \rho(x) \, \mathrm{d}^{d-1} x,
\end{equation}
capturing the entropy cost of classicalizing modular information across \( \gamma_A \).

This perspective naturally links boundary entanglement to internal projection geometry. As the internal time field \( \tau(x) \) undergoes localized collapses, its coherent structure fragments, generating modular information flow across the bulk. The surface \( \widetilde{\gamma}_A \) then geometrizes this flow, marking the hypersurface along which classicalization accumulates with minimal disruption to modular alignment. The resulting entropy is not merely a count of degrees of freedom, but a reflection of the informational history encoded in projection dynamics.

In this framework, the Ryu--Takayanagi formula emerges as a coarse-grained limit of a more fundamental measurement-theoretic structure. Minimal surfaces are not fundamental entities but statistical boundaries—interfaces where projection-induced decoherence balances internal modular coherence. Their existence reflects the interplay between internal time curvature, fiber holonomy, and measurement density—structures absent from classical gravity but essential to MTG’s holographic dictionary.

Bulk geometry is reconstructed from boundary projection data by extending projection vectors \( \mu[\tau(x)]^\mu \) into the bulk via parallel transport under the internal time connection \( \nabla^{\mathrm{time}} \). The emergent metric takes the form
\begin{equation}
g^{\mathrm{eff}}_{\mu\nu}(x) = \eta_{\mu\nu} + \kappa \int_{\Sigma_x} \mu[\tau(y)]_\mu \mu[\tau(y)]_\nu \, \rho(y) \, \mathrm{d}\Sigma(y),
\end{equation}
where \( \Sigma_x \) is a hypersurface anchored in the observer’s causal domain. The coherence and alignment of these transported projection vectors determine the geometry and curvature within the entanglement wedge \( \mathcal{E}_A \), which is defined as the bulk region where such a transport remains consistent and free of destructive decoherence.

Local bulk observables are then expressed as functionals of the lifted internal time field,
\begin{equation}
\mathcal{O}(x) \sim \mathcal{F}_A[\tau|_{\partial M}], \quad x \in \mathcal{E}_A,
\end{equation}
with modular time evolution governed by the boundary Hamiltonian \( K_A \):
\begin{equation}
\frac{d}{ds} \mathcal{O}(x; s) = i [K_A, \mathcal{O}(x; s)].
\end{equation}
This confirms that temporal dynamics in the bulk are not externally imposed but arise from informational flow defined by the structure of measurement on the boundary.

The AdS/CFT dictionary is thus reinterpreted as a lifting of measurement-induced projection geometry from the boundary to the bulk. Spacetime emerges not from fundamental fields, but as the classical imprint of internal time collapse aligned by modular geometry. In this sense, holography becomes a relational statement about the coherence of quantum measurement histories, encoded geometrically in the fiber bundle structure of time.

\section{String Theory Embedding}\label{sec:9}
This section has reframed the central tenets of holography---modular Hamiltonians, entanglement entropy, and bulk reconstruction---within the fiber-geometric ontology of Measurement-Induced Temporal Geometry (MTG). Rather than treating bulk spacetime and its boundary dual as abstractly related dynamical systems, MTG embeds both in a unified structure: the internal time-fiber bundle \( \pi: E \to M \), which encodes measurement-induced collapse events, coherence flow, and the emergence of causal geometry.

The boundary modular Hamiltonian \( K_A \) for a subregion \( A \subset \partial M \) is interpreted as a flux of coherence-collapse, sourced by the projection vector \( \mu[\tau(x)] \) and modulated by the spatial gradient of projection density \( \nabla_\mu \rho(x) \). This yields a flow operator of the form
\[
K_A \sim \int_A \mu[\tau(x)]^\mu \nabla_\mu \rho(x) \, \mathrm{d}^d x,
\]
where the integrand encodes how sharply and in what direction classicalization proceeds through the boundary under measurement. In this picture, modular flow is not an abstract automorphism of the algebra of observables, but the physical manifestation of time-fiber alignment under entanglement-preserving projections.

Entanglement entropy, traditionally associated with the area of a minimal surface in the bulk via the Ryu--Takayanagi formula, is likewise reinterpreted as a projection-induced quantity. The relevant codimension-2 hypersurface \( \gamma_A \) is no longer selected solely by metric extremization, but by the condition that it balances or extremizes modular projection flux through the internal time geometry. Entropy is then computed as a functional of local projection statistics, with
\[
S_A \sim \int_{\gamma_A} \rho(x) \log \rho(x) \, \mathrm{d}^{d-1} x,
\]
reflecting the degree of classicalization across \( \gamma_A \) rather than its geometric area per se. In this interpretation, holographic entropy arises not from intrinsic spatial structure, but from the depth of measurement collapse within the bulk.

The emergent bulk geometry itself is reconstructed from boundary projection data through the extension of fiber alignment and holonomy transport. Provided the time-fiber connection \( \nabla^{\mathrm{time}} \) is smooth and the holonomy group acts trivially within a bulk subregion \( U \subset \mathcal{E}_A \), both the effective metric \( g^{\mathrm{eff}}_{\mu\nu}(x) \) and the internal time field \( \tau(x) \) may be fully determined from the boundary data \( \{ \mu[\tau], \rho, A^{\mathrm{time}} \} \) restricted to \( A \subset \partial M \). This identifies the entanglement wedge as the maximal domain in which quantum coherence on the boundary is sufficient to determine classical geometry in the bulk. Temporal evolution of bulk observables is governed by the modular Hamiltonian as a geometric consequence of boundary projection flow, with
\[
\frac{d}{ds} \mathcal{O}(x; s) = i [K_A, \mathcal{O}(x; s)],
\]
showing that modular time is nothing other than coherence-directed collapse extended through the fiber congruence.

Altogether, MTG recasts the holographic dictionary as a lifting of informational collapse: from the density and orientation of projections on the boundary, one reconstructs the emergent geometry, curvature, and modular dynamics of the bulk. Entanglement entropy arises from projection density and coherence variation; modular flow is generated by measurement-aligned time vectors; and bulk curvature reflects the obstruction to globally coherent projection. Spacetime, in this framework, is not a backdrop but a relational signature of quantum measurement geometry, organized through the internal time field \( \tau \). The entire architecture of holographic duality emerges from the geometry of becoming classical. In the final section, we will examine how this paradigm influences our understanding of quantum gravity, and explore its implications for cosmology, black hole interiors, and the unification of geometry with quantum information.

%%%%%%%%%%%%%%%%%%%%%%
\subsection{Worldsheet Embedding of the Internal Time Bundle}

In the string-theoretic embedding of Measurement-Induced Temporal Geometry (MTG), the internal time field \( \tau(x) \) arises naturally from the compactification structure of higher-dimensional string backgrounds. Rather than being introduced as a primitive variable, \( \tau(x) \) is identified with moduli fields associated with the geometry and fluxes of internal cycles in a compactification manifold \( X \). In type II theories, \( X \) may be a Calabi--Yau threefold; in M-theory or heterotic strings, it may be a \( G_2 \)- or \( Spin(7) \)-holonomy manifold. The internal time fiber over spacetime \( M \) is realized as a moduli-valued section induced by dimensional reduction, where the value of \( \tau(x) \) at each point corresponds to the integral of higher-dimensional fields over selected internal p-cycles \( \Sigma \subset X \).

For instance, in type IIA or IIB compactifications, a natural candidate for \( \tau(x) \) is the complexified Kähler modulus associated with a compact two-cycle \( \Sigma \in H_2(X, \mathbb{Z}) \), defined by
\begin{equation}
\tau(x) \sim \int_{\Sigma} (B + i J),
\end{equation}
where \( B \) is the Neveu--Schwarz two-form and \( J \) is the Kähler form. In M-theory or heterotic settings, \( \tau \) may instead emerge from integrals of the C-field over three- or four-cycles, or from Wilson lines and complex structure moduli. The total internal time bundle \( \pi: E \to M \) thereby acquires a geometric embedding in the compactification manifold and reflects the structure of its moduli space.

This internal time modulus \( \tau(x) \) couples to the two-dimensional worldsheet conformal field theory (CFT) via marginal operators that deform the background geometry. The worldsheet \( \Sigma_2 \), embedded in the product \( M \times X \), supports a nonlinear sigma model whose interaction terms with the internal space take the form
\begin{equation}
S_{\mathrm{int}} = \frac{1}{4\pi\alpha'} \int_{\Sigma_2} (B_{IJ} + i J_{IJ}) \, \partial X^I \bar{\partial} X^J + \cdots,
\end{equation}
where \( X^I \) are coordinates on \( X \). These couplings represent exactly marginal deformations of the internal CFT and generate moduli flows that modify the worldsheet beta functions, thereby deforming the target-space background. In the MTG setting, such deformations translate into variations in the emergent spacetime geometry, which is governed by the measurement-induced metric
\begin{equation}
g^{\mathrm{eff}}_{\mu\nu}(x) = \eta_{\mu\nu} + \kappa \int_{\Sigma_x} \mu[\tau(y)]_\mu \mu[\tau(y)]_\nu \, \rho(y) \, \mathrm{d}\Sigma(y),
\end{equation}
where \( \mu[\tau] \) is the projection of the internal modulus into a classical configuration and \( \rho(y) \) is the local density of measurement events. Worldsheet couplings thus contribute dynamically to spacetime curvature through the projection geometry of internal time.

Within this framework, measurement-induced projection corresponds to the physical stabilization of moduli in string theory. Collapse of \( \tau(x) \) into a definite classical value reflects the localization of a quantum modulus into a stabilized vacuum configuration. Several mechanisms provide such stabilization. First, open string condensation, such as brane--antibrane annihilation or tachyon condensation, can dynamically fix moduli by eliminating off-shell fluctuations. Second, supersymmetry-breaking processes—including gaugino condensation and flux-induced superpotentials—lift flat directions and favor energetically isolated vacua, effectively selecting a projection of \( \tau \). Third, worldsheet instantons generate nonperturbative contributions that exponentially suppress large excursions in moduli space, thereby localizing the wavefunction of \( \tau \) and inducing decoherence-like behavior from string-theoretic dynamics alone.

In each of these mechanisms, the projection map \( \mu[\tau(x)] \) is realized through physical moduli stabilization. Once stabilized, \( \tau(x) \) defines the local direction of modular flow, encodes observer-relative collapse structure, and determines the entropic signature of the emergent metric. The effective geometry of spacetime inherits these properties, making classical causal structure a statistical imprint of string-theoretic coherence collapse.

This identification embeds MTG within the ultraviolet-complete framework of string theory. The internal time fiber is realized as a modulus field arising from compactification; its projection corresponds to dynamical stabilization; and its transport, curvature, and holonomy correspond to geometric features of the fibered background. As a result, the MTG emergent metric inherits consistency conditions such as modular invariance, anomaly cancellation, and duality covariance from the underlying worldsheet theory. Classical spacetime is thus interpreted as the informational resolution of internal string geometry through relational projection. Time itself becomes a section of a compactified fiber, measured and collapsed by observers, with its coherence and classicalization dictating both gravitational dynamics and causal order. This worldsheet realization grounds the MTG framework in string theory and connects the emergence of geometry to the informational flow of projection in a physically consistent, holographically enriched setting.
%%%%%%%%%%%%%%%%%%%%%%
\subsection{Brane Intersections and Projection Events}
In the MTG framework, measurement-induced projection---the collapse of the internal time field \( \tau(x) \) into a classical value---finds a physical realization in string theory through the dynamics of D-brane intersections. These events, which include the  formation of the theoretical tachyonic open string modes and their subsequent condensation, provide a concrete mechanism by which the coherent quantum geometry of \( \tau(x) \) becomes localized into observer-dependent classical structure~\cite{sen2002tachyon}. The interaction between branes thus replaces the abstract notion of quantum measurement with a geometric and dynamical transition.

Consider a configuration in which a D3-brane extends along the noncompact spacetime directions \( (x^0, x^1, x^2, x^3) \), while a D7-brane wraps both these directions and additional internal cycles within a Calabi--Yau compactification manifold \( X \). Their intersection defines a codimension-2 locus in spacetime where open strings can stretch between the branes. These strings support massless and tachyonic modes, the latter arising when the brane separation drops below a critical distance. The appearance of tachyonic excitations signals an instability that leads to condensation, effectively collapsing the quantum degrees of freedom associated with stretched strings. This process corresponds, in the MTG language, to the classicalization of the internal time field at the intersection point.

The tachyon field \( T(x) \) defined along the intersection mediates this collapse. As the tachyon condenses, quantum fluctuations in \( \tau(x) \) are exponentially suppressed, and the field localizes to a definite classical configuration. This behavior is captured by the expression
\begin{equation}
\mu[\tau(x)] := \lim_{T(x) \to \infty} e^{-T(x)} \tau(x),
\end{equation}
which formalizes the projection as a limit where coherence vanishes and the time fiber reduces to a classical section. From the perspective of the internal time bundle \( \pi: E \to M \), this transition replaces a locally fluctuating fiber with a sharp projection vector, aligned with the brane worldvolume and shaped by the supersymmetry and moduli of the intersecting branes.

Within this setup, measurement acquires a concrete physical interpretation. It is the localized interaction of branes that selects a preferred configuration of the internal time field. The observer’s frame is encoded in the geometry and alignment of the measurement brane, and the act of collapse corresponds to the annihilation of off-diagonal elements in the stretched string Hilbert space. The result is a realignment of the fiber geometry, interpreted as a topological transition in the space of internal time directions.

Following this projection, the field \( \tau(x) \) becomes confined to the effective theory supported on the resulting brane worldvolume. Letting \( \mathcal{W} \subset M \) denote this region, the classical internal time field becomes a well-defined section \( \tau_{\mathrm{cl}}(x) \in \Gamma(E|_{\mathcal{W}}) \), satisfying \( \mu[\tau(x)] = \tau_{\mathrm{cl}}(x) \). This localized field defines the modular flow and causal orientation accessible to the observer associated with \( \mathcal{W} \). Meanwhile, outside this region, \( \tau \) remains coherent, allowing for future measurement interactions and further collapses.

The effective metric \( g^{\mathrm{eff}}_{\mu\nu}(x) \) generated by this process reflects the geometric imprint of prior brane interactions. Each collapse modifies the local projection congruence, alters the curvature of the fiber geometry, and reconfigures the internal causal order. The cumulative effect of such events across the bulk spacetime determines the full structure of emergent gravity and modular alignment. Each brane intersection contributes not only a geometric deformation but an entropic transition: the reduction of internal coherence and the emergence of classical order correspond to a loss of von Neumann entropy and a gain in classical distinguishability.

These transitions are topologically nontrivial. They may induce defect formation, generate holonomy in the time-fiber bundle, or lead to the appearance of modular anomalies. The resulting brane worldvolume serves as a historical record of measurement, encoding collapse data in its residual fields, gauge content, and moduli space structure. In this way, string theory provides a UV-complete realization of projection events, in which quantum measurement becomes a derived feature of higher-dimensional geometry. MTG thus integrates seamlessly into the brane dynamics of string theory, transforming the classicalization of time into a consequence of moduli stabilization, tachyon condensation, and observer-brane coupling. This framework grounds the informational transition from quantum to classical in the geometry of intersecting branes and prepares the way for a fully covariant and predictive theory of Planck-scale quantum gravity.

%%%%%%%%%%%%%%%%%%%%%%
\subsection{Supersymmetric Completion and Spontaneous Breaking}

To ensure ultraviolet consistency and control over quantum corrections, the Measurement-Induced Temporal Geometry (MTG) framework admits a natural supersymmetric extension. In this formulation, the internal time-fiber bundle is uplifted to a super-fiber bundle over a superspace \( (M, \mathcal{F}) \), where \( M \) is the underlying spacetime manifold and \( \mathcal{F} \) is the sheaf of superfunctions incorporating both bosonic and fermionic coordinates. The internal time field \( \tau(x) \), formerly treated as a scalar section of the bundle \( \pi: E \to M \), is now realized as the lowest component of a chiral superfield \( \Phi(x, \theta) \), where
\[
\Phi(x, \theta) = \tau(x) + \sqrt{2} \, \theta \psi(x) + \theta^2 F(x).
\]
Here \( \psi(x) \) is the fermionic superpartner of \( \tau(x) \), and \( F(x) \) is a complex auxiliary field. The fiber connection \( A_\mu \) on \( E \) is likewise promoted to a superconnection \( \mathcal{A}_M = (A_\mu, \lambda, D) \), which may be embedded in a vector supermultiplet or realized via the supercovariant derivative algebra. Supersymmetric coupling to supergravity follows standard formulations, and the projection density \( \rho(x) \) may appear as a composite operator or auxiliary field within the supergravity multiplet, contributing to the emergent geometry in a manifestly supersymmetric way.

In the unmeasured phase, the MTG action preserves supersymmetry. The supersymmetry variations of the chiral superfield satisfy
\[
\delta \Phi = \sqrt{2} \, \epsilon \psi + \cdots, \quad \delta \psi = \sqrt{2} \, \epsilon F + i \sqrt{2} \, \sigma^\mu \bar{\epsilon} \partial_\mu \tau,
\]
and in the coherent regime where \( F(x) = 0 \), the symmetry remains unbroken. Non-renormalization theorems protect the internal time field from quantum corrections, allowing moduli to remain flat and coherence to be preserved. Supersymmetry thus stabilizes the structure of \( \tau(x) \) in the absence of projection.

However, when a measurement occurs—signaled by the collapse \( \mu[\tau(x_0)] \neq 0 \) at some spacetime point \( x_0 \)—the superfield acquires a localized classical configuration. The auxiliary field \( F(x) \) obtains a nonzero expectation value at the collapse point,
\begin{equation}
\langle F(x_0) \rangle \neq 0 \quad \Rightarrow \quad \delta \psi(x_0) = \sqrt{2} \, \langle F(x_0) \rangle \, \epsilon \neq 0,
\end{equation}
indicating spontaneous breaking of supersymmetry. The fermion \( \psi(x) \), now a massless Goldstino, reflects the decoherence of \( \tau(x) \) and marks the transition from a superfield to a classically observed configuration. This mechanism mirrors the appearance of goldstini in metastable vacua of string-derived effective field theories.

The consequences of this spontaneous breaking propagate through supergravity couplings. Projection-induced collapse sources curvature and torsion via supersymmetric higher-derivative invariants. For instance, in old minimal supergravity, one finds terms of the form
\[
\int d^4\theta \, \Phi^\dagger e^V \Phi \supset \rho(x) F_{\mu\nu} F^{\mu\nu} + \text{fermionic terms},
\]
where \( F_{\mu\nu} \) is the curvature of the internal time-fiber connection. These contributions enter the effective stress tensor and backreact on the emergent geometry. In flux compactifications of type IIB string theory, where \( \tau(x) \) may correspond to complex structure moduli, the projection \( \mu[\tau(x)] \) selects a flux vacuum via
\[
G_3 = F_3 - \tau H_3, \quad W_{\mathrm{flux}} = \int_X G_3 \wedge \Omega,
\]
and collapse becomes equivalent to vacuum selection in a non-BPS flux configuration. Supersymmetry is thus broken at the level of the internal time field, and its projection modifies the local geometry and matter structure accordingly.

This mechanism gives measurement a dynamical and geometric interpretation within supersymmetric field theory. Projection is not an extrinsic operation but a transition within the superfield landscape, selecting a superselection sector of the full theory and breaking symmetry locally. The resulting Goldstino \( \psi(x) \) encodes collapse signatures, while \( F \) and \( D \) fields act as classical sources for curvature and torsion. Supersymmetric MTG embeds the process of decoherence within the covariant structure of supergravity and string theory, offering a UV-complete and anomaly-consistent realization of temporal projection.

In this formulation, measurement, symmetry breaking, and spacetime emergence are all understood as facets of a unified geometry. Internal time is no longer a fixed parameter, but a dynamical modulus stabilized by brane or flux mechanisms, projected by interaction, and encoded in supermultiplets. Collapse corresponds to physical transitions in the compactified geometry, and classical spacetime becomes the observable shadow of a quantum, supersymmetric, higher-dimensional background. MTG thereby provides a coherent framework that links measurement, information, and gravity through the formal structure of supersymmetry and the topological data of string theory.

\section{Predictions}\label{sec:10}
Despite its geometric and theoretical abstraction, the MTG framework yields concrete, testable predictions across a range of physical regimes—from early-universe cosmology to black hole mergers and quantum laboratory systems. These predictions arise from the unique features of measurement-induced temporal flow, curvature-driven entanglement, and fiber bundle dynamics, offering several novel observables that distinguish MTG from conventional models.
%%%%%%%%%%%%%%%%%%%%%%
\subsection{CMB Signatures of Projection-Induced Anisotropy}

In the MTG framework, spatial inhomogeneities in the internal time projection process leave distinctive imprints on the cosmic microwave background (CMB), providing a potential observational window into the quantum geometry of time. These imprints arise not from fluctuations in a scalar inflaton field, but from variations in the projection density $\rho(x)$ and coherence structure of the internal time field $\tau(x)$ across the early universe.

During the epoch of last scattering, the rate of measurement-induced collapse varied across different regions of the spacetime manifold \( M \). These fluctuations in \( \rho(x) \), the projection density, modulated the emergent effective metric \( g^{\mathrm{eff}}_{\mu\nu}(x) \) experienced by photons propagating from the recombination surface. The resulting temperature anisotropies in the observed CMB field \( T(\hat{n}) \) arise from variations in the effective gravitational potential induced by coherence collapse. Approximating the Sachs--Wolfe effect in this context yields
\begin{equation}
\frac{\delta T}{T}(\hat{n}) \approx \Phi_{\mathrm{eff}}(x) = \frac{1}{2} \int_{\gamma(\hat{n})} \delta g^{\mathrm{eff}}_{00}(x) \, \mathrm{d}r,
\end{equation}
where \( \gamma(\hat{n}) \) is the photon path in the direction \( \hat{n} \), and \( \delta g^{\mathrm{eff}}_{00}(x) \) is sourced by fluctuations in projection density and coherence direction vectors:
\begin{equation}
\delta g^{\mathrm{eff}}_{00}(x) \sim \kappa \int_{\Sigma_x} \delta \left( \mu[\tau(y)]_0^2 \rho(y) \right) \, \mathrm{d}\Sigma(y).
\end{equation}
Combining this with the geodesic integral gives a more transparent link between statistical projection inhomogeneities and observed temperature anisotropies:
\begin{equation}
\frac{\delta T}{T}(\hat{n}) \sim \int \delta \rho(x) \cdot \mu[\tau(x)]_0^2 \, \mathrm{d}r,
\end{equation}
where \( \mu[\tau(x)]_0 \) is the time-component of the coherence vector. This formulation clarifies that anisotropies in \( T(\hat{n}) \) are seeded by projection noise in the time-fiber bundle and not by quantum fluctuations of a fundamental scalar field.

Because \( \tau(x) \) evolves on a modular fiber space;such as a torus \( \mathbb{C}/\Lambda \) or a quaternionic bundle \( \mathbb{H} \), and undergoes collapse via discrete projection events, the resulting fluctuations inherit modular noise. This noise encodes topologically constrained, direction-dependent variations in the coherence vector field \( \mu[\tau(x)] \), and induces non-Gaussian correlations in the observed temperature field. MTG thus predicts several distinctive signatures: 

First, a measurable level of non-Gaussianity in higher-point statistics, such as the bispectrum and trispectrum, arising from fiber-induced anisotropy and projection discreteness. These effects deviate from the Gaussian statistics expected from Bunch--Davies vacuum fluctuations. Second, a statistical preference for large-angle alignments—such as correlated quadrupole and octupole phases—reflecting residual modular structure in \( \tau(x) \) prior to classicalization. Third, localized cold spots may result from spatially delayed or locally suppressed projection events, wherein the absence of timely coherence collapse modifies photon geodesics and results in cooler observed temperatures via altered gravitational redshift histories.

These effects are governed by the coherence length and topological domain structure of the time-fiber bundle during the last scattering epoch. Unlike conventional inflationary models, they do not require initial condition tuning or slow-roll dynamics, but instead emerge from the statistical geometry of measurement. High-precision CMB observatories such as \textit{Planck}, \textit{LiteBIRD}, and CMB-S4 are well-equipped to test these predictions. MTG forecasts a local-type non-Gaussianity parameter on the order of \( f_{\mathrm{NL}}^{\mathrm{local}} \sim \mathcal{O}(1\text{--}10) \), with scale-dependent and directionally modulated signatures. Correlated anomalies in the \( TT \), \( TE \), and \( EE \) power spectra—particularly those tracking spatial fluctuations in \( \delta(\mu[\tau]) \)—would constitute direct observational evidence for the informational geometry of time. Detection of aligned multipoles, hemispherical asymmetries, or persistent large-angle anomalies would support the hypothesis that classical spacetime anisotropies arise from observer-relative collapse of an internal temporal field.

%%%%%%%%%%%%%%%%%%%%%%
\subsection{Black Hole Echoes from Modular Delay}

In the context of black hole physics, the MTG framework predicts the occurrence of late-time gravitational wave echoes, originating not from exotic boundary conditions or horizon-scale remnants, but from modular temporal delay induced by internal time curvature. In contrast to classical general relativity, where the near-horizon region is locally smooth and memory-free, MTG posits that fiber holonomy and measurement-induced temporal geometry introduce nonlocal memory effects that can delay the reconfiguration of causal pathways following dynamical events such as mergers.

Consider a dynamical spacetime \( M \) containing a black hole with event horizon \( \mathcal{H} \), and let \( \pi: E \to M \) denote the internal time-fiber bundle equipped with connection \( A^{\mathrm{time}}_\mu \) and curvature \( F^{\mathrm{time}}_{\mu\nu} \). In the vicinity of \( \mathcal{H} \), quantum information scrambling, extreme curvature, and dense entanglement produce significant modulation in the structure of \( \tau(x) \). Let \( \gamma \subset M \) be a closed loop encircling the horizon, homologous to \( \partial \mathcal{H} \), and define the holonomy of the internal time connection along this loop by
\begin{equation}
\operatorname{Hol}_\gamma(\nabla^{\mathrm{time}}) = \mathcal{P} \exp\left( \oint_\gamma A^{\mathrm{time}}_\mu(x) \, dx^\mu \right).
\end{equation}
The obstruction to parallel transport of projected time across such loops is measured by the curvature norm
\begin{equation}
\|F_\gamma\| := \left\| \int_{\Sigma_\gamma} F^{\mathrm{time}}_{\mu\nu} \, d\Sigma^{\mu\nu} \right\|,
\end{equation}
where \( \Sigma_\gamma \) is a spacelike or null surface bounded by \( \gamma \). Nontrivial holonomy obstructs the coherence of modular flow, inducing a delay in the re-establishment of global projection alignment across the post-merger geometry.

This delay manifests observationally as late-time gravitational echoes following the primary ringdown signal in a black hole merger. In standard general relativity, the ringdown is governed solely by the quasi-normal modes of the remnant black hole. In MTG, the delayed reintegration of modular flow results in secondary gravitational wave pulses, arriving after a characteristic delay
\begin{equation}
\Delta t_{\text{echo}} \sim \|F_\gamma\|^{-1}.
\end{equation}
The echo delay is inversely proportional to the strength of modular curvature and holonomy obstruction, reflecting the slower propagation of internal time information through temporally misaligned regions. These echoes can be interpreted as the outcome of modular refocusing, in which causal propagation is temporarily suppressed by nontrivial fiber geometry and later released as coherence is restored.

From an observational perspective, MTG offers a concrete and testable origin for gravitational wave echoes~\cite{abedi2017echoes}. Rather than invoking speculative constructs such as firewalls or Planck-scale remnants, it attributes the echoes to relational modular geometry. Gravitational wave observatories; such as, LIGO, Virgo, and KAGRA, have already reported tentative signals consistent with such phenomena. Within MTG, detection of echoes with consistent time scales across black hole merger events, correlated with estimates of near-horizon curvature, would constitute evidence for internal modular structure. The observer-dependent causal dynamics encoded in \( \tau(x) \) would be revealed through classical gravitational radiation, making the echoes a direct signature of measurement-induced geometry.

Estimated echo delays depend on the specific fiber curvature and projection history of the event, but for stellar-mass black holes, MTG predicts a characteristic delay in the range
\[
10^{-2} \text{ s} \lesssim \Delta t_{\text{echo}} \lesssim 10^0 \text{ s},
\]
suggesting that upcoming detectors with improved sensitivity could resolve these features. Such an observation would not only support the MTG framework but also provide the first direct evidence of modular temporal structure in gravitational dynamics.
%%%%%%%%%%%%%%%%%%%%%%
\subsection{Lab-Scale Tests of Modular Causality}

Beyond the cosmological and gravitational regimes, the MTG framework predicts experimentally accessible deviations from standard quantum theory in controlled laboratory settings. These effects arise from the internal time geometry’s influence on entangled quantum systems, where the projection dynamics encoded in the measurement density \( \rho(x) \) and internal time projection \( \mu[\tau(x)] \) subtly affect the causal structure of quantum correlations.

Consider a bipartite entangled state, such as polarization-entangled photons prepared in a maximally entangled Bell state. In the standard formulation of quantum mechanics, such systems obey strict coincidence and locality constraints, with measurement outcomes violating Bell inequalities in a manner constrained by spacelike separation and detector settings. In the MTG formulation, each measurement corresponds to a projection event occurring along an observer-aligned congruence \( n^\mu \), and the collapse process becomes sensitive to the local orientation, temporal curvature, and coherence properties of the time-fiber bundle.

If the projection congruences \( n^\mu \) at measurement sites \( A \) and \( B \) differ—due to differential rotation, gravitational redshift, or local environmental interaction—the projection-induced timing of collapse events may no longer remain perfectly symmetric. The internal time field \( \tau(x) \), transported differently along each congruence, leads to a shift in the effective phase or timing of entanglement collapse. This asymmetry manifests as a small but measurable time offset in coincidence statistics, governed by the relation
\begin{equation}
\Delta t_{\text{Bell}} \sim \delta n^\mu D_\mu \tau,
\end{equation}
where \( \delta n^\mu = n_A^\mu - n_B^\mu \) captures the relative orientation between projection frames, and \( D_\mu \tau \) is the covariant derivative of the internal time field.

Experimental scenarios well-suited to test these predictions include entangled photon interferometry conducted in rotating reference frames, such as Sagnac interferometers, where frame-dragging or laboratory rotation modifies the projection vector orientation. Additionally, variable-gravity experiments—such as entanglement distribution in tower-drop configurations or low-Earth orbit platforms—introduce measurable differences in geodesic alignment, leading to detectable dependence of \( \Delta t_{\text{Bell}} \) on gravitational potential. Another promising direction involves weak measurement and post-selection techniques, allowing isolation of low-\( \rho(x) \) events to probe modular flow-induced correlation asymmetries in selected subsets of the ensemble.

The predicted shifts in coincidence timing are extremely small—typically in the attosecond to femtosecond range—but remain within the detection capabilities of current optical delay and time-tagging technologies in quantum optics laboratories. Observing such deviations would confirm that quantum measurement possesses a covariant, observer-relative temporal structure and that classical causality emerges from a richer internal geometry of collapse. The presence of directional phase shifts in otherwise spacelike-separated entangled systems would indicate that coherence structure and measurement-induced projection carry physical consequences even in flat spacetime. In this way, MTG provides a framework for testing the geometric origin of time at the quantum level through laboratory-scale modular causality.
%%%%%%%%%%%%%%%%%%%%%%
\subsection{Quantum Simulation of MTG Time Evolution}

The emergence of time in the MTG framework, driven by observer-dependent projection and modular flow, can be explored experimentally through quantum simulation. Noisy intermediate-scale quantum (NISQ) devices provide a suitable platform for emulating MTG-modified dynamics by incorporating post-selection protocols and measurement-conditioned operations within quantum circuits. Unlike standard quantum evolution, which proceeds unitarily under a fixed Hamiltonian, MTG time flow arises from a projection map acting as a dynamical constraint, filtering entangled configurations into classically aligned outcomes.

Consider a quantum register initialized in an entangled state prepared via a unitary operator \( U_{\text{entangle}} \). In the MTG setting, the system's evolution is modified by the insertion of a projection operator \( P \), conditioned on the collapse of the internal time field \( \tau \) into a definite value \( \tau_{\text{obs}} \). This corresponds to a non-unitary circuit transformation of the form
\begin{equation}
U_{\text{MTG}} = P \cdot U_{\text{entangle}}, \quad P = \delta(\mu[\tau] - \tau_{\text{obs}}),
\end{equation}
where \( \mu[\tau] \) is the projection operator enforcing collapse, and \( P \) acts as a selective filter. In practice, such projections are implemented through mid-circuit measurement followed by conditional gate application or classical post-selection, thereby deviating from standard Schr\"odinger evolution and encoding the informational emergence of modular time~\cite{jordan2012quantum}.

To simulate modular flow, one targets a subregion \( A \) of the register and constructs the corresponding modular Hamiltonian \( K_A \) from the reduced density matrix \( \rho_A = Z^{-1} e^{-K_A} \). In MTG, the generator \( K_A \) is dynamically sourced by projection events in the circuit, modeled as directional contributions from projection vectors \( \mu[\tau_j] \) weighted by spatial entropy gradients \( \nabla_j \rho \). These terms govern the local structure of collapse and induce modular evolution via operator dynamics of the form \( \frac{d}{ds} \mathcal{O}_A = i [K_A, \mathcal{O}_A] \). Quantum circuits implement this behavior using controlled unitaries conditioned on the collapse history, emulating MTG’s measurement-induced temporal structure.

Observable signatures of MTG time evolution include accelerated entropy growth, altered mutual information trajectories, and non-unitary correlation dynamics. The von Neumann entropy \( S(\rho_A) = -\mathrm{Tr}(\rho_A \log \rho_A) \) increases more rapidly under MTG collapse than under purely unitary evolution, reflecting the informational loss imposed by projection. Mutual information \( I(A:B) \), trace distance from unitary trajectories, and nonstandard thermalization profiles all serve as diagnostic quantities sensitive to modular flow. Subensembles conditioned on fixed projection outcomes \( \mu[\tau] = \tau_{\text{obs}} \) permit reconstruction of effective time histories and comparison against classical emergence.

Modern NISQ platforms, including superconducting qubits, trapped ions, and photonic circuits, support the necessary tools for simulating MTG dynamics. Mid-circuit readout, measurement-based feedback, and hybrid quantum-classical control allow implementation of the MTG projection constraint. While projection introduces sampling inefficiencies through post-selection, ensemble averaging remains viable for extracting modular observables. In this way, quantum simulation provides a concrete testbed for investigating the informational geometry of time, enabling laboratory studies of the emergence of classical temporality from quantum projection dynamics.

%%%%%%%%%%%%%%%%%%%%%%
These predictions establish a concrete empirical program for testing the Measurement-Induced Temporal Geometry (MTG) framework. Unlike approaches in which time is treated as a fixed background or concealed in hidden variables, MTG posits operationally definable and geometrically structured signatures of temporal emergence. These signatures manifest across a wide range of physical regimes—from cosmic microwave background correlations and black hole ringdown spectra to lab-scale quantum interference patterns and post-selected circuit dynamics on NISQ devices.

The detection of such phenomena would offer direct empirical support for the idea that spacetime, causality, and temporal order are not fundamental ingredients of physical law, but emergent features arising from the informational geometry of measurement. Observation of projection-induced anisotropies, modular delays, or entanglement asymmetries would demonstrate that the quantum-to-classical transition encoded in internal time collapse leaves measurable imprints across physical observables. Conversely, the absence of such signatures under precise experimental constraints would limit the validity of MTG and challenge the role of projection geometry in foundational physics. In either case, the framework invites a new dialogue between quantum information, spacetime structure, and experimental testability.

%%%%%%%%%%%%%%%%%%
\section*{Final Remarks}
The MTG framework reimagines time, causality, and geometry as emergent phenomena arising from quantum measurement. Rather than treating time as a fundamental parameter, MTG models it as a fiber-valued field $\tau$ defined over spacetime and governed by a nontrivial connection $\nabla$, thereby embedding temporal structure into a geometric and informational framework that unifies coherence, collapse, and curvature.

In this formulation, time is not an external coordinate but an emergent observable, assembled from projection events that collapse $\tau$ into classical values aligned with observer-dependent congruences. These projections generate the effective spacetime metric $g_{\mu\nu}^{\text{eff}}$ through integrals over measurement histories, rendering geometry itself relational and contingent on informational structure. Nontrivial holonomy in the internal time-fiber bundle gives rise to entanglement curvature, encoding the back-reaction of quantum correlations on geometry and governing the modular flow of temporal evolution.

The MTG formalism admits a consistent embedding into supersymmetric theories and accommodates coupling to the full Standard Model. Canonical and path-integral quantization schemes are both viable within this context, as are UV completions via string compactifications in which $\tau$ arises from moduli of compact internal manifolds. Collapse in this higher-dimensional setting corresponds to brane intersection dynamics, open string condensation, or moduli stabilization, with measurement-induced projection triggering spontaneous supersymmetry breaking and sourcing geometric curvature. Gravitational dynamics emerge from variation of an entropy functional defined over the projection density $\rho(x)$, while cosmological expansion and inflation are recast as phases of saturated, spatially coherent projection followed by decoherence-driven reheating.

MTG also provides a geometric foundation for holographic duality. Modular Hamiltonians arise as boundary integrals of projection currents, entanglement entropy is reinterpreted as the flux of modular time across minimal projection surfaces, and bulk reconstruction follows from coherence transport and holonomy in the fiber geometry. This fiber-covariant holography reproduces key features of AdS/CFT; including entanglement wedges, Page curves, and modular flow, but grounds them in the geometry of time rather than abstract operator algebras.

Crucially, the framework makes a broad set of empirically accessible predictions. Anisotropies in the cosmic microwave background may reflect spatial variation in early-universe projection density. Gravitational wave echoes from black hole mergers can arise from modular delays near horizons. Slight deviations in Bell test correlations may appear when internal time fields couple differently to measurement frames, and NISQ quantum devices offer controlled platforms for simulating MTG dynamics via post-selected circuits that emulate modular flow and entropy production.

Future work will aim to refine these predictions and deepen the theoretical framework. This includes computing time correlation functions in decoherence-dominated regimes, deriving modular response functions during gravitational collapse, and exploring string-theoretic compactifications in which moduli stabilization drives cosmic time emergence. Extensions to higher-spin fields, topological phases, and categorical quantum geometry may offer further insights into dualities and gauge-theoretic reformulations. On the experimental side, lab-based tests of MTG predictions; using photonic interferometers, rotating-frame entanglement protocols, and superconducting qubits, present promising avenues for falsifiability and phenomenological application. Black hole thermodynamics, entropy bounds, and the architecture of quantum causal networks may also be reinterpreted within this framework, offering a new lens on the structure of quantum spacetime.

The MTG paradigm thus proposes a coherent, mathematically consistent, and empirically grounded account of how the act of measurement gives rise to the architecture of time, the structure of gravity, and the informational backbone of the universe. Time is not presupposed but produced; spacetime is not imposed but inferred. Through the collapse of internal time, the universe draws its causal scaffolding, and from that, the classical world emerges.

\bibliographystyle{abbrv}
\bibliography{bibliography}

\begin{thebibliography}{10}

\bibitem{abedi2017echoes}
J.~Abedi, H.~Dykaar, and N.~Afshordi.
\newblock Echoes from the abyss: Tentative evidence for planck-scale structure
  at black hole horizons.
\newblock {\em Physical Review D}, 96(8):082004, 2017.

\bibitem{bakas2024modulartime}
I.~Bakas and D.~Lust.
\newblock Modular time and quaternionic geometry in noncommutative spacetimes.
\newblock {\em Phys. Lett. B}, 2024.

\bibitem{bassi2013decoherence}
A.~Bassi, K.~Lochan, et~al.
\newblock Models of wave-function collapse, underlying theories, and
  experimental tests.
\newblock {\em Reviews of Modern Physics}, 85:471--527, 2013.

\bibitem{baumgratz2014quantifying}
T.~Baumgratz, M.~Cramer, and M.~B. Plenio.
\newblock Quantifying coherence.
\newblock {\em Physical Review Letters}, 113:140401, 2014.

\bibitem{blanco2013relative}
D.~D. Blanco, H.~Casini, L.-Y. Hung, and R.~C. Myers.
\newblock Relative entropy and holography.
\newblock {\em Journal of High Energy Physics}, 2013(8):1--38, 2013.

\bibitem{casini2011entanglement}
H.~Casini and M.~Huerta.
\newblock Entanglement entropy and superselection sectors".
\newblock {\em Journal of Physics A}, 44:145304, 2011.

\bibitem{chamseddine2007spectral}
A.~H. Chamseddine and A.~Connes.
\newblock Spectral action for the standard model with neutrino mixing.
\newblock {\em Nuclear Physics B}, 765:3--67, 2007.

\bibitem{chishtie2025spatialenergy}
F.~A. Chishtie.
\newblock General relativity as an eft with emergent gravity via principle of
  spatial energy potentiality.
\newblock {\em arXiv preprint}, 2025.

\bibitem{coleman1977fate}
S.~Coleman.
\newblock The fate of the false vacuum. 1. semiclassical theory.
\newblock {\em Physical Review D}, 15:2929--2936, 1977.

\bibitem{conlon1988physics}
J.~P. Conlon.
\newblock Physics of extra dimensions.
\newblock {\em Journal of High Energy Physics}, 2006(05):078, 2006.

\bibitem{connes1996gravity}
A.~Connes.
\newblock Gravity coupled with matter and the foundation of non-commutative
  geometry.
\newblock {\em Communications in Mathematical Physics}, 182:155--176, 1996.

\bibitem{connes1996standard}
A.~Connes.
\newblock The standard model coupled to gravity from noncommutative geometry.
\newblock {\em Communications in Mathematical Physics}, 182:155--176, 1996.

\bibitem{debianchi2024achronotopic}
S.~De~Bianchi and I.~Szapudi.
\newblock Achronotopic interpretation of quantum mechanics: Quantum objects and
  their measurement in emergent space--time scenarios.
\newblock {\em Foundations of Physics}, 55(1):4, 2025.

\bibitem{faddeev1967feynman}
L.~D. Faddeev and V.~N. Popov.
\newblock Feynman diagrams for the yang–mills field.
\newblock {\em Physics Letters B}, 25(1):29--30, 1967.

\bibitem{giacomini2019quantum}
F.~Giacomini et~al.
\newblock Quantum mechanics and the covariance of physical laws in quantum
  reference frames.
\newblock {\em Nature Communications}, 10:494, 2019.

\bibitem{guth1981inflationary}
A.~H. Guth.
\newblock Inflationary universe: A possible solution to the horizon and
  flatness problems.
\newblock {\em Physical Review D}, 23(2):347, 1981.

\bibitem{hamilton2006local}
A.~Hamilton, D.~Kabat, G.~Lifschytz, and D.~A. Lowe.
\newblock Local bulk operators in ads/cft: A boundary view of horizons and
  locality.
\newblock {\em Physical Review D}, 73(8):086003, 2006.

\bibitem{isham1993canonical}
C.~J. Isham.
\newblock Canonical quantum gravity and the problem of time.
\newblock In {\em Integrable systems, quantum groups, and quantum field
  theories}, pages 157--287. Springer, 1993.

\bibitem{ishibashi2015supersymmetric}
N.~Ishibashi et~al.
\newblock Supersymmetric quantum mechanics in non-inertial frames.
\newblock {\em Physical Review D}, 91:065002, 2015.

\bibitem{jacobson1995thermodynamics}
T.~Jacobson.
\newblock Thermodynamics of spacetime: The einstein equation of state.
\newblock {\em Physical Review Letters}, 75:1260, 1995.

\bibitem{jemal2016modular}
A.~Jemal et~al.
\newblock Modular hamiltonians in de sitter space.
\newblock {\em Journal of High Energy Physics}, 2016(8):1--31, 2016.

\bibitem{jordan2012quantum}
S.~P. Jordan, K.~S.~M. Lee, and J.~Preskill.
\newblock Quantum algorithms for quantum field theories.
\newblock {\em Science}, 336(6085):1130--1133, 2012.

\bibitem{kiefer2007time}
C.~Kiefer.
\newblock Concept of time in canonical quantum gravity.
\newblock {\em Journal of Physics: Conference Series}, 67:012029, 2007.

\bibitem{kuchar1991time}
K.~Kuchař.
\newblock Time and interpretations of quantum gravity.
\newblock {\em Proceedings of the 4th Canadian Conference on General Relativity
  and Relativistic Astrophysics}, pages 211--314, 1991.

\bibitem{maldacena1999ads}
J.~Maldacena.
\newblock The large n limit of superconformal field theories and supergravity.
\newblock {\em Advances in Theoretical and Mathematical Physics}, 2:231--252,
  1999.

\bibitem{milekhin2024measurement}
A.~Milekhin and F.~K. Popov.
\newblock Measurement-induced phase transition in teleportation and wormholes.
\newblock {\em SciPost Phys.}, 17:020, 2024.

\bibitem{parker2009mass}
L.~Parker and D.~Toms.
\newblock Particle creation and particle mass in an expanding universe.
\newblock {\em Physical Review D}, 80:065023, 2009.

\bibitem{pastawski2015holographic}
F.~Pastawski et~al.
\newblock Holographic quantum error-correcting codes.
\newblock {\em Journal of High Energy Physics}, 2015(10):1--55, 2015.

\bibitem{qi2018quantum}
X.-L. Qi and Y.~Zhang.
\newblock Quantum entanglement and the geometry of spacetime.
\newblock {\em Physical Review D}, 97:126002, 2018.

\bibitem{raasakka2025triangulated}
M.~Raasakka.
\newblock Emergence of gravity from quantum field theory in triangulated
  spacetime and the qft vector model.
\newblock {\em arXiv preprint arXiv:2505.07102}, 2025.

\bibitem{rovelli1990quantum}
C.~Rovelli.
\newblock Quantum mechanics without time: A model.
\newblock {\em Physical Review D}, 42(8):2638, 1990.

\bibitem{rovelli2015evolving}
C.~Rovelli.
\newblock Evolving constants of motion and the functional time problem in
  quantum gravity.
\newblock {\em Physical Review D}, 91:124055, 2015.

\bibitem{sen2002tachyon}
A.~Sen.
\newblock Tachyon condensation on the brane antibrane system.
\newblock {\em Journal of High Energy Physics}, 2002(08):012, 2002.

\bibitem{takayanagi2025holographic}
T.~Takayanagi.
\newblock Emergent holographic spacetime from quantum information.
\newblock {\em Phys. Rev. Lett. (Essay)}, June 2025.

\bibitem{tomaz2025rqd}
A.~A. Tomaz, R.~S. Mattos, and M.~Barbatti.
\newblock The quantum measurement problem: A review of recent trends.
\newblock {\em arXiv preprint}, 2025.

\bibitem{van2010building}
M.~Van~Raamsdonk.
\newblock Building up spacetime with quantum entanglement.
\newblock {\em General Relativity and Gravitation}, 42:2323--2329, 2010.

\bibitem{vedral2002role}
V.~Vedral.
\newblock The role of relative entropy in quantum information theory.
\newblock {\em Reviews of Modern Physics}, 74:197--234, 2002.

\bibitem{zhang2018entanglement}
Y.~Zhang and X.-L. Qi.
\newblock Entanglement and curvature in quantum many-body systems.
\newblock {\em Physical Review B}, 97:165142, 2018.

\bibitem{zurek2003decoherence}
W.~H. Zurek.
\newblock Decoherence, einselection, and the quantum origins of the classical.
\newblock {\em Reviews of Modern Physics}, 75(3):715, 2003.

\end{thebibliography}

\end{document}